\documentclass[11pt]{article}
\usepackage{amsmath,amsthm,amssymb,enumerate,xcolor}
 \usepackage[dvips,pdfstartview=FitH,pdfpagemode=None,colorlinks=true,citecolor=blue,linkcolor=red]{hyperref} 
 \usepackage[ruled]{algorithm2e}
\newcommand{\remove}[1]{}
\setlength{\topmargin}{0.1in} \setlength{\headheight}{0in}
\setlength{\headsep}{0in} \setlength{\textheight}{8.3in}
\setlength{\topsep}{0.1in} \setlength{\itemsep}{0.0in}
\parskip=0.05in
\textwidth=6.2in \oddsidemargin=0truecm \evensidemargin=0truecm

\newtheorem{thm}{Theorem}[section]  

\newtheorem{lem}[thm]{Lemma}
\newtheorem{define}[thm]{Definition}
\newtheorem{cor}[thm]{Corollary}
\newtheorem{obs}[thm]{Observation}

\newtheorem{construct}[thm]{Construction}

\newtheorem{fact}[thm]{Fact}
\newtheorem{prop}[thm]{Proposition}

\newtheorem{thm*}{Theorem}
\newtheorem*{cor*}{Corollary}
\newtheorem*{remark*}{Remark}

% Special commands

\def\F{{\mathbb{F}}}

\def\N{{\mathbb{N}}}
\def\R{{\mathbb{R}}}

\def\cB{{\mathcal B}}

\def\cH{\mathcal H}
\def\cF{\mathcal F}
\def\cG{\mathcal G}
\def\cC{\mathcal C}
\def\cA{\mathcal A}

\def\_{\,\,\,\,\,}

\def\then{\Rightarrow}

\def\D{{\partial}}

\def\mon{\textsf{mon}}

\def\poly{\textsf{poly}}
\def\var{\textsf{var}}

\def\VNP{\textsf{VNP}}

\newcommand{\eqdef}{{\stackrel{\rm def}{=}}}

%circuit types
\newcommand{\SPS}{\Sigma\Pi\Sigma} % depth-$3$
\newcommand{\tfSPS}[1]{\Sigma^{ [ #1 ] } \Pi \Sigma} % depth-$3$ - restricted top fan-in
\newcommand{\rsSPS}[1]{\Sigma\Pi\Sigma^{\{ #1 \}}} % depth-$3$ - restricted support
\newcommand{\tfrsSPS}[2]{\Sigma^{[ #1 ]} \Pi \Sigma^{\{ #2 \}}} % depth-$3$ - restricted top fan-in and support

\newcommand{\SPSP}{\Sigma\Pi\Sigma\Pi} %depth-$4$
\newcommand{\tfSPSP}[1]{\Sigma^{ [ #1 ] } \Pi \Sigma \Pi} % depth-$4$ - restricted top fan-in
\newcommand{\rsSPSP}[1]{\Sigma\Pi(\Sigma\Pi)^{\{ #1 \}}} % depth-$4$ - restricted support
\newcommand{\tfrsSPSP}[2]{\Sigma^{[ #1 ]} \Pi (\Sigma \Pi)^{\{ #2 \}}} % depth-$4$ - restricted top fan-in and support

\newcommand{\ol}[1]{\overline{#1}}
\newcommand{\valpha}{\ol{\alpha}}
\newcommand{\vbeta}{\ol{\beta}}
\newcommand{\Ext}{\mathcal{L}}

\newcommand{\eps}{\epsilon}

\newcommand{\ds}{\displaystyle}

\newcommand{\ignore}[1]{}

\allowdisplaybreaks

\begin{document}

\title{Subexponential Size Hitting Sets for Bounded Depth Multilinear Formulas}

\author{Rafael Oliveira\thanks{Department of Computer Science, Princeton University. Research supported by 
NSF grant CCF-1217416 and by the Sloan fellowship. Email: \texttt{rmo@cs.princeton.edu}.} \and
Amir Shpilka\thanks{Department of Computer Science, Tel Aviv University, Tel Aviv, Israel,
\texttt{shpilka@post.tau.ac.il}.  The research leading to these results has received funding
from the European Community's Seventh Framework Programme (FP7/2007-2013) under grant agreement number 257575, and 
from the Israel Science Foundation (grant number 339/10).} \and Ben Lee Volk\thanks{Department of Computer Science, Tel Aviv University, Tel Aviv, Israel,
\texttt{benleevolk@gmail.com}.  The research leading to these results has received funding
from the European Community's Seventh Framework Programme (FP7/2007-2013) under grant agreement number 257575.}}

\date{}
\maketitle

\begin{abstract}

In this paper we give subexponential size hitting sets for bounded depth multilinear arithmetic formulas. Using the known relation 
between black-box PIT and lower bounds we obtain lower bounds for these models. 

\sloppy For depth-$3$ multilinear formulas, of size $\exp(n^\delta)$, we give a hitting set of size $\exp\left(\tilde{O}\left(n^{2/3 + 2\delta/3} \right) \right)$. 
This implies a lower bound of $\exp(\tilde{\Omega}(n^{1/2}))$ for depth-$3$ multilinear formulas, for some explicit polynomial. 
%This is slightly worse than the best known lower bound for such formulas, which is  $\exp(n)$ \cite{NisanWigderson96}.

\sloppy For depth-$4$ multilinear formulas, of size $\exp(n^\delta)$, we give a hitting set of size $\exp\left(\tilde{O}\left(n^{2/3 + 4\delta/3} \right) \right)$. 
This implies a lower bound of $\exp(\tilde{\Omega}(n^{1/4}))$ for depth-$4$ multilinear formulas, for some explicit polynomial. 
%This is slightly worse than the best known lower bound for such formulas, which is  $\exp(\tilde{\Omega}(n^{1/2}))$ \cite{RazYehudayoff09}.
%\cite{GuptaKKS13,KumarSaraf14,KayalLSS14}.

A regular formula consists of  alternating layers of $+,\times$ gates, where all gates at layer $i$ have the same fan-in. We give a 
hitting set of size (roughly) $\exp\left(n^{1- \delta} \right)$, for regular depth-$d$ multilinear formulas of size $\exp(n^\delta)$,
where $\delta = O(\frac{1}{\sqrt{5}^d})$. 
This result implies a lower bound of roughly $\exp(\tilde{\Omega}(n^{\frac{1}{\sqrt{5}^d}}))$ for such formulas. 
%Up to the constant in the $\Omega(\cdot)$ in the exponent of $n$, this is the best known lower bound for such formulas \cite{RazYehudayoff09}. 

We note that better lower bounds are known for these models, but also that none of these bounds was achieved via construction of 
a hitting set. Moreover, no lower bound that implies such PIT results, even in the white-box model, is  currently known. 

Our results are combinatorial in nature and rely on reducing the underlying formula, first to a depth-$4$ formula, and then to a 
read-once algebraic branching program (from depth-$3$ formulas we go straight to read-once algebraic branching programs).

\end{abstract}
\thispagestyle{empty}

\newpage

\tableofcontents
\thispagestyle{empty}
\newpage
\pagenumbering{arabic}

%%%%%%%%%%%%%%%%%%%%%%%%%%%%%%%%%%%%%%%%%%%%%%%%%%%%
%%%%%%%%%%%%%%%%%%%%%%%%%%%%%%%%%%%%%%%%%%%%%%%%%%%%
%%%%%%%%%%%%%%%%%%%%%%%%%%
\section{Introduction}

Arithmetic circuits are the standard model for computing polynomials. Roughly speaking, given a set of variables 
$X=\{x_1,...,x_n\}$, an arithmetic circuit uses additions and multiplications to compute a polynomial $f$ in the set of variables $X$. 
An arithmetic formula is an arithmetic circuit whose computation graph is a tree.
An arithmetic circuit (or formula) is multilinear if the polynomial computed at each of its gates is multilinear (as a formal polynomial), 
that is, in each of its monomials the power of every input variable is at most one (see Section~\ref{sec:def} for definition of the 
models studied in this paper)
%In this paper we study the polynomial identity testing (PIT for short) problem for bounded depth multilinear formulas. 

Two outstanding open problems in complexity theory are to prove exponential lower bounds on the size of arithmetic circuits, i.e., 
to prove a lower bound on the number of operations required to compute some polynomial $f$, and to give efficient deterministic 
polynomial identity testing (PIT for short) algorithms for them. 
The PIT problem for arithmetic circuits asks the following question: given an arithmetic circuit $\Phi$  computing a polynomial $f$,  
determine, { \em efficiently and deterministically}, whether ``$f\equiv 0$''.  The black-box version of the PIT problem asks to construct 
a small {\em hitting set}, i.e., a set of evaluation points $\cH$, for which any such non-zero $f$ does not vanish on all the points in 
$\cH$.

It is  known that solving any one of the problems (proving lower bound or deterministic PIT), with appropriate parameters, 
for small depth (multilinear) formulas, is equivalent to solving it in the general (multilinear) case 
\cite{ValiantSkyumBR83,AgrawalVinay08,Koiran10,GuptaKKS13,Tavenas13}. It is also known that these two problems are tightly 
connected and that solving one would imply a solution to the other, both in the general case 
\cite{HeintzSchnorr80,KabanetsImpagliazzo03,Agrawal05} and in the bounded depth case\footnote{The result of \cite{DSY09} is 
more restricted than the results for circuits with no depth restrictions.} \cite{DSY09}. We note that in the multilinear case it is only 
known that hitting sets imply circuit lower bounds but not vice versa. 

In this work we study the PIT problem for several models of bounded depth multilinear formulas. Our main results are subexpoential 
size hitting sets for depth-$3$ and depth-$4$ multilinear formulas of subexponential size and for {\em regular} depth-$d$ multilinear 
formulas of subexponential size (with construction size deteriorating among the different models). 
Using the connection between explicit hitting sets and circuit lower bounds we get, as corollaries, subexponential lower bounds for 
these models. %In the case of depth-$4$ multilinear formulas our lower bounds are slightly weaker than the best known lower bounds  \cite{GuptaKKS13,KumarSaraf14,KayalLSS14}. For regular depth-$d$ multilinear formulas our lower bounds are weaker than the best known bounds, %$\exp(n^{\Omega(\frac{1}{d})})$,  proved by Raz and yehudayoff \cite{RazYehudayoff09}.

%%%%%%%%%%%%%%%%%%%%%%%%%%%%%%%%%%%%%%%%%
%%%%%%%%%%%%%%%%%%%%%%%%%%%%%%%%%%%%%%%%%
%%%%%%%%%%%%%%%%%%%%%%%%%%%%%%%%%%%%%%%%%

\subsection{Models for Computing Multilinear Polynomials}\label{sec:def}

An arithmetic circuit $\Phi$ over the field $\F$ and over the set of variables $X$ is a directed acyclic graph as follows. Every vertex in 
$\Phi$ of in-degree $0$ is labelled by either a variable in $X$ or a field element in $\F$. Every other vertex in $\Phi$ is labelled by 
either $\times$ or $+$. An arithmetic circuit is called a formula if it is a directed tree (whose edges are directed from the leaves to the 
root). The vertices of $\Phi$ are also called gates. Every gate of in-degree $0$ is called an input gate. Every gate of out-degree $0$ 
is called an output gate. Every gate labelled by $\times$ is called a product gate. Every gate labelled by $+$ is called a sum gate. 
An arithmetic circuit computes a polynomial in a natural way. An input gate labelled by $y \in \F \cup X$ computes the polynomial $y$. 
A product gate computes the product of the polynomials computed by its children. A sum gate computes the sum of the polynomials 
computed by its children.

A polynomial $f\in \F[X]$ is called multilinear if the degree of each variable in $f$ is at most one. An arithmetic circuit (formula) $\Phi$ 
is called multilinear if every gate in $\Phi$ computes a multilinear polynomial.

In this work we are interested in small depth multilinear formulas. 
A depth-$3$ $\Sigma\Pi\Sigma$ formula is a formula composed of three layers of alternating sum and product gates. Thus, every 
polynomial computed by a $\Sigma\Pi\Sigma$ formula of size $s$ has the following form $$f = \sum_{i=1}^{s}\prod_{j=1}^{d_i}\ell_{i,j},$$
where the $\ell_{i,j}$ are linear functions. In a $\Sigma\Pi\Sigma$ multilinear formula, it holds that in every product gate, 
$\prod_{j=1}^{d_i}\ell_{i,j}$, the linear functions $\ell_{i,1},\ldots,\ell_{i,d_i}$ are supported on disjoint sets of variables.

Similarly, a depth-$4$ $\Sigma\Pi\Sigma\Pi$ formula is a formula composed of four layers of alternating sum and product gates. 
Thus, every polynomial computed by a $\Sigma\Pi\Sigma\Pi$ formula of size $s$ has the following form 
$$f = \sum_{i=1}^{s}\prod_{j=1}^{d_i}Q_{i,j},$$
where the $Q_{i,j}$ are computed at the bottom $\Sigma\Pi$ layers and are $s$-sparse polynomials, i.e., polynomials that have at 
most $s$ monomials. As in the depth-$3$ case, we have that at every product gate the polynomials $Q_{i,1},\ldots,Q_{i,d_i}$  are 
supported on disjoint sets of variables.
%
%Lower bounds for the multilinear model were first proved by Nisan and Wigderson \cite{NisanWigderson96}, who gave exponential lower bounds for depth-$3$ formulas. Raz proved quasi-polynomial lower bounds for multilinear formulas computing the Determinant and Permanent polynomials \cite{Raz09a} and gave a separation
% between multilnear $NC_1$ and multilinear $NC_2$ \cite{Raz06}. 
%%Raz, Shpilka and Yehudayoff proved a super-linear lower bound for syntactically multilinear circuits \cite{RSY08}. 
%Raz and Yehudayoff proved a lower bound of $\exp(n^{\Omega(\frac{1}{d})})$ for depth-$d$ multilinear formulas. 
% %Recently, exponential lower bounds were proved for depth-$4$ homogeneous formulas \cite{GuptaKKS13a,KayalLSS14,KumarSaraf14}. 
%As in the general case, the depth reduction techniques of \cite{ValiantSkyumBR83,AgrawalVinay08,Koiran10,Tavenas13} also work for multilinear formulas. Thus, proving a lower bound of the form $\exp(n^{\frac{1}{2}+\epsilon})$ for $\SPSP$ formulas, would imply a super-polynomial lower bound for multilinear circuits. Currently, the best lower bound for syntactically multilinear circuits is $n^{4/3}$ by \cite{RSY08}.
 
Another important definition for us is that of a regular depth-$d$ formula.
%\footnote{For convenience we speak of formulas of even depth. This definition extends naturally to odd depths as well.} 
A regular depth-$d$ formula is specified by a list of $d$ integers $(a_1,p_1,a_2,p_2, \ldots)$. It has $d$ layers of alternating sum and 
product gates. The fan-in of every sum gate at the $(2i-1)$'th layer is $a_i$ and, similarly, the fan-in of every product gate at the 
$(2i)$'th layer is $p_i$. For example, a depth-$4$ formula that is specified by the list $(a_1,p_1,a_2,p_2)$ has the following form:
$$f = \sum_{i=1}^{a_1}\prod_{j=1}^{p_1}Q_{i,j},$$
where each $Q_{i,j}$ is a polynomial of degree $p_2$ that has (at most) $a_2$ monomials. 
As before, a  regular depth-$d$ multilinear formula is a  regular depth-$d$ formula in which every gate computes a multilinear polynomial.

Regular formulas where first defined by Kayal, Saha and Saptharishi \cite{KayalSS14}, who proved quasi-polynomial lower bounds 
for logarithmic-depth regular formulas. It is interesting to note that in the reductions from general (multilinear) circuits/formulas to 
depth-$d$ (multilinear) formulas, one gets a regular depth-$d$ (multilinear) formula 
\cite{ValiantSkyumBR83,AgrawalVinay08,Koiran10,Tavenas13}.

Finally, we also need to consider the model of Read-Once Algebraic Branching Programs (ROABPs) as our construction is based 
on a reduction to this case. Algebraic branching programs were first defined in the work of Nisan~\cite{Nisan91} who proved 
exponential lower bounds on the size of non-commutative ABPs computing the determinant or permanent polynomials. Roughly, an 
algebraic branching program (ABP) consists of a layered graph with edges going from the $i$'th layer to the $(i+1)$'th layer. The first 
layer consists of a single source node and the last layer contains a single sink. The edges of the graph are labeled with polynomials 
(in our case we only consider linear functions as labels). The weight of a path is the product of the weights of the edges in the path. 
The polynomial computed by the ABP is the sum of the weights of all the paths from the source to the sink. An ABP is called a 
read-once ABP (ROABP) if the only variable appearing on edges that connect the $i$'th and the $(i+1)$'th layer is $x_i$. It is clear 
that a ROABP whose edges are labeled with linear functions computes a multilinear polynomial.

%%%%%%%%%%%%%%%%%%%%%%%%%%%%%%%%%%%%%%%%%
%%%%%%%%%%%%%%%%%%%%%%%%%%%%%%%%%%%%%%%%%
%%%%%%%%%%%%%%%%%%%%%%%%%%%%%%%%%%%%%%%%%

\subsection{Polynomial Identity Testing}\label{sec:PIT}

In the PIT problem we are given an arithmetic circuit or formula $\Phi$, computing some polynomial $f$, and we have to determine 
whether ``$f\equiv 0$''.  That is, we are asking if $f$ is the zero polynomial in $\F[x_1,\ldots,x_n]$. By the 
Schwartz-Zippel-DeMillo-Lipton lemma \cite{Zippel79,Schwartz80,DemilloL78}, if $0\ne f \in \F[x_1,\ldots,x_n]$ is a polynomial of 
degree $\le d$, and $\alpha_1,\ldots,\alpha_n \in A\subseteq\F$ are chosen uniformly at random, then $f(\alpha_1,\ldots,\alpha_n) =0$ 
with probability at most\footnote{Note that this is meaningful only if $d < |A| \leq |\F|$, which in particular implies that $f$ is not the zero 
function.} $d/|A|$. Thus, given $\Phi$, we can perform these evaluations efficiently, giving an efficient randomized procedure for 
answering ``$f\equiv 0$?''.  It is an important open problem to find a derandomization of this algorithm, that is, to find a 
{\em deterministic} procedure for PIT that runs in polynomial time (in the size of $\Phi$).

One interesting property of the above randomized algorithm of Schwartz-Zippel is that the algorithm does not need to ``see'' the circuit 
$\Phi$. Namely, the algorithm only uses the circuit to compute the evaluations $f(\alpha_1,\ldots,\alpha_n)$.  Such an algorithm is 
called a {\em black-box} algorithm. In contrast, an algorithm that can access the internal structure of the circuit $\Phi$ is called a 
{\em white-box} algorithm.  Clearly, the designer of the algorithm has more resources in the white-box model and so one can expect 
that solving PIT in this model should be a simpler task than in the black-box model.

The problem of derandomizing PIT has received a lot of attention in the past few years. In particular, many works examine a specific 
class of circuits $\cC$, and design PIT algorithms only for circuits in that class. One reason for this attention is the strong connection 
between deterministic PIT algorithms for a class $\cC$ and lower bounds for $\cC$. This connection was first observed by Heintz and 
Schnorr~\cite{HeintzSchnorr80} (and later also by Agrawal~\cite{Agrawal05}) for the black-box model and by Kabanets and 
Impagliazzo~\cite{KabanetsImpagliazzo04} for the white-box model (see also Dvir, Shpilka and Yehudayoff~\cite{DSY09} for a similar 
result for bounded depth circuits). Another motivation for studying the problem is its relation to algorithmic questions. Indeed, the 
famous deterministic primality testing algorithm of Agrawal, Kayal and Saxena~\cite{AKS04} is based on derandomizing a specific 
polynomial identity. Finally, the PIT problem is, in some sense, the most general problem that we know today for which we have 
randomized coRP algorithms but no polynomial time algorithms, thus studying it is a natural step towards a better understanding of 
the relation between RP and P. For more on the PIT problem we refer to the survey by Shpilka and Yehudayoff~\cite{SY10}.

Among the most studied circuit classes we  find Read-Once Algebraic Branching Programs 
\cite{ForbesShpilka13,ForbesSS14,AgrawalGKS14}, set-multilinear formulas 
\cite{RazShpilka05,ForbesShpilka12,AgrawalSS13}, depth-$3$ formulas 
\cite{DvirShpilka06,KayalSaxena07,KarninShpilka08,KayalSaraf09,SaxenaSeshadhri11}, multilinear depth-$4$ formulas (and some 
generalizations of them) \cite{KarninMSV13,SarafVolkovich11,AgrawalSSS12} and bounded-read multilinear formulas 
\cite{ShpilkaVolkovich08,ShpilkaVolkovich09,AvMV11,AgrawalSSS12}. We note that none of these results follow from a reduction a 
la Kabanets-Impagliazzo \cite{KabanetsImpagliazzo04} (or the reduction of \cite{DSY09} for bounded depth circuits) from PIT to 
lower bounds. Indeed, this reduction does not work for the restricted classes mentioned here. In particular, for the multilinear model 
no reduction from PIT to lower bounds is known. That is, even given lower bounds for multilinear circuits/formulas (e.g., the 
exponential lower bound of Raz and Yehudayoff \cite{RazYehudayoff09} for constant depth multilinear formulas) we do not know how 
to construct a PIT algorithm for a related model. 

The works on depth-$3$ and multilinear depth-$4$ formulas gave polynomial time algorithms only when the fan-in of the top gate 
(the output gate) is constant, and became exponential time when the top fan-in was $\Omega(n)$, both in the white-box and black-box 
models \cite{KayalSaxena07,SaxenaSeshadhri11,SarafVolkovich11}. Raz and Shpilka \cite{RazShpilka05} gave a polynomial time PIT 
for set-multilinear depth-$3$ circuits and Forbes and Shpilka  \cite{ForbesShpilka13} and Agrawal, Saha and Saxena \cite{AgrawalSS13} 
gave a quasi-polynomial size hitting set for the model. Recall that in a depth-$3$ set-multilinear formula, the variables are partitioned 
to sets, and each linear function at the bottom layer only involves variables from a single set. Recently, Agrawal et 
al.\ \cite{AgrawalGKS14} gave a subexponential white-box algorithm for a depth-$3$ formula that computes the sum of $c$ 
set-multilinear formulas, each of size $s$, with respect to different partitions of the variables. The running time of their algorithm is 
$n^{O(2^{c} n^{1-\frac{2}{2^c} }\log s)}$. In particular, for $c=O(\log\log(n))$ the running time is $\exp(n)$.

Thus, prior to this work there were no subexponential PIT algorithms, even for depth-$3$ multilinear formulas with top fan-in $n$.

\subsection{Our Results}\label{sec:results}

\begin{remark*}
Throughout this paper, we assume that for formulas of size $2^{n^{\delta}}$, the underlying field $\F$ is of size at least $|\F| \ge 2^{n^{2\delta} \poly\log(n)}$, and that if this is not the case then we are allowed to query the formula on inputs from an extension field of the appropriate size. In particular, all our results hold over fields of characteristic zero or over fields of size $\exp(n)$.
\end{remark*}

We give subexponential size hitting sets for depth-$3$, depth-$4$ and regular depth-$d$ multilinear formulas, of subexponential size. In particular we obtain the following results.

\begin{thm}
\label{thm:intro:hitting-set-depth-$3$}
There exists a hitting set $\cH$ of size $|\cH| = 2^{\tilde{O}(n^{\frac{2}{3}+\frac{2}{3}\delta})}$ for the class of $\SPS$ multilinear 
formulas of size $2^{n^\delta}$.
\end{thm}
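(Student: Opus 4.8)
The plan is to reduce a multilinear $\SPS$ formula to a read-once algebraic branching program (ROABP) and then feed it into the known quasi-polynomial-size hitting sets for ROABPs (equivalently, set-multilinear depth-$3$ formulas) of Forbes--Shpilka and Agrawal--Saha--Saxena. The structural fact driving everything is that in $f=\sum_{i=1}^{s}\prod_{j}\ell_{i,j}$ the factors of a single product gate are supported on pairwise disjoint sets of variables, so across any prefix/suffix split of the variables at most one factor of a given gate is split between the two sides. Hence a single product gate is an ROABP of width $2$ in \emph{every} variable order, and its evaluation dimension in an order $\pi$ is at most $2^{t_i}$, where $t_i$ is the largest number of factors of that gate straddling a single cut of $\pi$. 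Summing over gates, $f$ has evaluation dimension at most $s\cdot 2^{\max_i t_i}$ in $\pi$, so the whole game is to produce an explicit order (or a short explicit list of orders, together with a restriction) making $\max_i t_i$ small; the difficulty is that different product gates impose incomparable ``natural'' orders on the variables.

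It helps to first understand the two extreme regimes. Fix a threshold $D\approx n^{2/3}$ (up to logarithmic and $2^{\delta/3}$-type factors, to be optimized against $s=2^{n^{\delta}}$). A \emph{high-degree} gate, $d_i\ge n-D$, has at most $D$ factors of support $\ge 2$ --- the factor supports are disjoint and cover at most $n$ variables --- so only those $\le D$ factors can ever straddle a cut, and $t_i\le D$ in \emph{every} order. A \emph{low-degree} gate, $d_i\le D$, trivially has $t_i\le d_i\le D$ in every order. Thus these gates never obstruct us, and it suffices to find an order (and, as it turns out, a restriction) controlling only the \emph{medium-degree} gates $D<d_i<n-D$, which may have up to $\Theta(n)$ factors of support $\ge 2$ arranged so that \emph{no} variable order makes $t_i$ small (think of a product of $n/2$ linear forms, each pairing a variable from the first half of the variables with one from the second half).

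These medium gates I would tame by an explicit restriction: choose, from a modest explicit family, a set $W$ of roughly $n^{2/3}$ ``special'' variables together with a variable order $\pi$, restrict the variables of $W$ to constants drawn from a grid with $2^{\tilde{O}(n^{\delta})}$ values per coordinate, and argue that for a suitable choice in the family the restricted formula has $t_i(\pi)=O(n^{2/3})$ for \emph{every} remaining gate --- restricting the $W$-coordinates collapses many offending factors to singletons and ``breaks up'' the rest so that they become essentially contiguous in $\pi$. (That $2^{\tilde{O}(n^{\delta})}$ values per coordinate suffice, rather than a genuine brute force over $|\F|^{|W|}$, is because once everything outside $W$ is fixed the residual dependence on the $W$-coordinates is of bounded complexity, so $\poly(s)$ evaluation points per coordinate already form a hitting set for it.) The restricted $f$ then has evaluation dimension $s\cdot 2^{O(n^{2/3})}=2^{\tilde{O}(n^{2/3+2\delta/3})}$ in $\pi$ and is caught by the ROABP hitting set; the final hitting set is the union, over the family and over the $W$-grid, of (the $W$-grid) $\times$ (the ROABP hitting set), of size $2^{\tilde{O}(n^{2/3+2\delta/3})}$ --- the extra $\tfrac{2}{3}\delta$ being exactly the $\log s=n^{\delta}$ that a union bound over the $2^{n^{\delta}}$ gates injects into the required size of $W$ (balanced against the per-coordinate cost).

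I expect the main obstacle to be precisely this combinatorial lemma: constructing an explicit short family of (special-set, order) pairs such that \emph{some} pair simultaneously tames all $2^{n^{\delta}}$ medium-degree gates. An oblivious choice is hopeless --- under a random order almost every factor of support $\ge 2$ straddles the middle cut, and a random partition into small blocks makes almost every such factor span several blocks --- so the family must be adapted to the supports occurring in the formula, and the argument will need an averaging/covering step that charges the over-wide factors of each gate to $W$ while keeping the total charge bounded across all gates. It is this union bound, together with the requirement that $W$ be \emph{small} enough to brute-force, that forces $|W|$ and the relevant block/threshold parameter to be polynomial rather than polylogarithmic in $n$, and balancing these against the $2^{\tilde{O}(n^{\delta})}$ per-coordinate cost is what pins the exponent at $n^{2/3+2\delta/3}$.
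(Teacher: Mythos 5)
Your high-level strategy (reduce to a ROABP and invoke the Forbes--Shpilka / Agrawal et al.\ hitting set) is the same as the paper's, and your observation that high- and low-degree gates are automatically tame is correct. But the proof has a genuine gap at exactly the point you flag: the combinatorial lemma producing a single explicit (restriction, order) pair that tames all medium-degree gates is not proved, and in the form you state it, it is \emph{false}. Take $n^{\delta}$ gates, the $i$-th being $\prod_{j=1}^{n/2}(x_{\pi_i(2j-1)}+x_{\pi_i(2j)})$ for perfect matchings $\pi_i$ whose union is an $n^{\delta}$-regular expander. Every variable order $\sigma$ has $\Omega(n^{1+\delta})$ matching edges crossing its middle cut, so some gate has $t_i=\Omega(n)$ straddling factors; and restricting any set $W$ of $n^{2/3}$ variables deletes only $O(n^{2/3+\delta})$ edges, leaving the expansion intact. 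So no single global ROABP of width $2^{O(n^{2/3})}$ exists for such formulas, regardless of how cleverly the family of $(W,\pi)$ pairs is chosen. (A smaller slip: ``at most one factor of a given gate is split between the two sides of any cut'' is wrong --- disjoint supports do not prevent many factors from straddling a cut --- though your subsequent $2^{t_i}$ bound is the correct statement.)

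The paper escapes this obstruction by \emph{not} building one ROABP over all $n$ variables. It first takes partial derivatives with respect to $\tilde{O}(n^{\epsilon}\log M)$ carefully chosen variables --- a derivative replaces every linear form containing that variable by a constant, so it \emph{kills} high-support forms outright, which a substitution of constants cannot do --- leaving a formula in which every linear form has support at most $n^{1-\epsilon}$. It then uses an $n^{\delta}$-wise independent hash to partition the variables into $\approx n^{2/3-\delta/3}$ buckets so that, per gate, only $\tilde{O}(n^{\delta})$ forms meet any bucket in $\ge 2$ variables; the formula restricted to a single bucket (other variables viewed as constants in an extension field) is then a ROABP of width $2^{\tilde{O}(n^{2\delta})}$ in that bucket's variables only, and the buckets are hit one at a time, taking a product of the per-bucket hitting sets. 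The derivative step is enumerated by lifting with all $\{0,1\}$-assignments on the derived variables, costing $\binom{n}{r}2^{r}$ with $r=\tilde{O}(n^{2/3+2\delta/3})$, which is where the exponent is balanced. Your expander example is harmless for this scheme, since a random small bucket contains both endpoints of essentially no matching edge. To repair your proof you would need to replace the single-order plan with something of this iterative, per-bucket flavor, and replace the restriction of $W$ by the derivative trick.
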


This  gives a significant improvement to the recent result, mentioned above, of Agrawal et al.\ \cite{AgrawalGKS14} who studied 
sum of set-multilinear formulas. 
From the connection between hitting sets and circuit lower bounds \cite{HeintzSchnorr80,Agrawal05} we obtain the following corollary.

\begin{cor}\label{cor:intro:lowerbound-depth-$3$}
There is an explicit multilinear polynomial $f\in \F[x_1,\ldots,x_n]$, whose coefficients can be computed in exponential time, such that 
any depth-$3$ multilinear formula for $f$ has size $\exp(\tilde{\Omega}(\sqrt{n}))$.
\end{cor}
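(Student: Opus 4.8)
The plan is to derive Corollary~\ref{cor:intro:lowerbound-depth-$3$} from Theorem~\ref{thm:intro:hitting-set-depth-$3$} via the standard ``hitting set implies lower bound'' connection of Heintz and Schnorr~\cite{HeintzSchnorr80} and Agrawal~\cite{Agrawal05}. The only ingredient is a dimension count. The space of multilinear polynomials in $x_1,\dots,x_n$ has dimension $2^n$ (one coordinate per subset $S\subseteq[n]$), and for each point $\valpha\in\F^n$ the map $f\mapsto f(\valpha)$ is a linear functional on this space. Hence, whenever $\cH\subseteq\F^n$ satisfies $|\cH|<2^n$, the homogeneous linear system ``$f(\valpha)=0$ for all $\valpha\in\cH$'' in the $2^n$ unknown coefficients of a generic multilinear $f$ has strictly more unknowns than constraints, and so admits a nonzero solution $f^\star$. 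If in addition $\cH$ is a hitting set for a class $\cC$ of multilinear polynomials, then by definition no nonzero polynomial in $\cC$ vanishes on all of $\cH$; consequently $f^\star$ is a nonzero multilinear polynomial not computed by any formula in $\cC$.

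I would apply this with $\cC$ the class of $\SPS$ multilinear formulas of size at most $2^{n^\delta}$ and $\cH$ the hitting set of Theorem~\ref{thm:intro:hitting-set-depth-$3$}, of size $|\cH|=2^{\tilde O(n^{2/3+2\delta/3})}$. Choosing $\delta$ just below $1/2$ --- say $\delta=1/2-c\log\log n/\log n$ with $c$ a sufficiently large constant, so that $n^{2/3+2\delta/3}\poly\log(n)=o(n)$ --- forces $|\cH|<2^n$ for every large enough $n$, hence produces a nonzero multilinear $f^\star=f^\star_n\in\F[x_1,\dots,x_n]$ (working over $\F$ as specified in the Remark, or over an extension field when necessary) that vanishes on all of $\cH$. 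By the hitting-set property, $f^\star_n$ is not computed by any $\SPS$ multilinear formula of size $\le 2^{n^\delta}$. Since $2^{n^\delta}=\exp(n^{1/2}/\poly\log(n))=\exp(\tilde\Omega(\sqrt n))$ for this $\delta$, it follows that every $\SPS$ multilinear formula computing $f^\star_n$ has size $\exp(\tilde\Omega(\sqrt n))$, which is the bound in the corollary.

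To see that $f^\star_n$ is explicit in the required sense, I would note that its coefficient vector is obtained by: (i) enumerating the points of $\cH$ with the explicit construction underlying Theorem~\ref{thm:intro:hitting-set-depth-$3$}, in time $\poly(|\cH|,n)=2^{o(n)}$; (ii) writing down the $|\cH|\times 2^n$ matrix whose row for $\valpha=(\alpha_1,\dots,\alpha_n)\in\cH$ and column for $S\subseteq[n]$ is $\prod_{i\in S}\alpha_i$; and (iii) computing a nonzero vector in its kernel by Gaussian elimination, in time $\poly(2^n,|\cH|)=\exp(O(n))$. All three steps run in exponential time, so the coefficients of $f^\star_n$ are computable in exponential time, and Corollary~\ref{cor:intro:lowerbound-depth-$3$} follows with $f=f^\star_n$.

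I do not expect a genuine obstacle here; the one point that needs care is the parameter bookkeeping, namely verifying that the $\poly\log(n)$ factors hidden inside the $\tilde O(\cdot)$ in the exponent of $|\cH|$ still leave it below $2^n$. This is exactly why $\delta$ must be taken slightly below $1/2$ rather than equal to it, and it is precisely what turns the nominal bound $2^{\sqrt n}$ into the stated $\exp(\tilde\Omega(\sqrt n))$.
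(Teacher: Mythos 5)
Your proposal is correct and follows exactly the paper's own route: Section~\ref{sec:lower-bounds} invokes the Heintz--Schnorr/Agrawal connection (Theorem~\ref{thm:lowerbound-from-hitting}), finds a nonzero multilinear polynomial vanishing on the hitting set by solving the same homogeneous linear system, and picks $\delta = 1/2 - O(\log\log n/\log n)$ just as you do. No discrepancies.
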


This lower bound is weaker than the exponential lower bound of Nisan and Wigderson for this model \cite{NisanWigderson96}. Yet, 
it is interesting to note that we can get such a strong lower bound from a PIT algorithm. 
Next, we present our result for depth-$4$ multilinear formulas. 

\begin{thm}
\label{thm:intro:hitting-set-depth-$4$}
There exists a hitting set $\cH$ of size $|\cH| = 2^{\tilde{O}(n^{2/3+4\delta/3})}$ for the class of $\SPSP$ multilinear formulas of 
size $2^{n^\delta}$.
\end{thm}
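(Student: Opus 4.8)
The plan is to follow the route announced in the introduction: reduce a $\SPSP$ multilinear formula to a read-once oblivious ABP (ROABP) — possibly after fixing a small set of variables to generic field values — and then invoke the known quasipolynomial-size hitting sets for ROABPs, crucially in the \emph{unknown-order} regime, since in the black-box setting we do not get to see the variable order in which the resulting ROABP is read \cite{ForbesShpilka13,ForbesSS14,AgrawalGKS14}. Recall that a width-$W$ multilinear ROABP on $n$ variables has a hitting set of size $(nW)^{O(\log n)}$, which in the relevant regime (where $W$ will be doubly exponential in a small power of $n$) is $\exp\!\big(\tilde O(\log W)\big)$. So it suffices to produce, for every $\SPSP$ multilinear formula of size $s=2^{n^\delta}$ computing a nonzero $f$, a family of residual instances — each an ROABP of width $W=2^{\tilde O(n^{2/3+4\delta/3})}$ — whose union is hit whenever $f\not\equiv 0$.

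Write $f=\sum_{i=1}^{s}\prod_{j=1}^{d_i}Q_{i,j}$, where each $Q_{i,j}$ is an $s$-sparse multilinear polynomial on a variable set $S_{i,j}$, with $S_{i,1},\dots,S_{i,d_i}$ pairwise disjoint for each fixed $i$. Two simple facts organize the argument. First, each $s$-sparse $Q$ has evaluation dimension (equivalently, ROABP width) at most $s$ in \emph{every} order: grouping the at most $s$ monomials of $Q$ by their restriction to one side of a cut exhibits $Q=\sum_{a\le s}g_ah_a$ with $g_a,h_a$ on the two sides. Second, a single product gate $\prod_j Q_{i,j}$, read in an order in which every block $S_{i,j}$ is an interval, has evaluation dimension at most $s$ at every cut, since a cut straddles a unique block and the remaining factors split off as scalars — so each product gate is, on its own, a width-$s$ ROABP in a suitable order. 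The entire difficulty is that the $s$ product gates impose $s$ \emph{mutually incompatible} partitions of the variables: for a common order a single cut may straddle many blocks of a given gate, and then that gate's contribution to the evaluation dimension is $s$ raised to the number of straddled blocks, which is ruinous when $d_i$ is large and the blocks are spread out. In fact the reduction cannot merely choose a clever order: there are multilinear $\SPS$ formulas of size $O(n)$ — for instance with product-gate partitions forming an expander — that are not ROABPs of subexponential width in \emph{any} order, so multilinearity and bounded depth must be exploited in a genuinely nontrivial way.

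The reduction must therefore do more than apply the crude evaluation-dimension bound. I would expect it to combine: (a) using the size $s$ to bound the number of high-support bottom polynomials per product gate (at most $n/t$ have support $\ge t$, for a threshold $t$); (b) a structure-preserving partial substitution fixing a carefully chosen set $W$ of variables — governed by the low-support blocks that would otherwise cross cuts too often — to generic field elements drawn from a set $A$ with $|A|\ge 2^{n^{2\delta}\poly\log(n)}$, as permitted by the field-size convention; and (c) assembling the resulting pieces, for each $\rho$, into a single ROABP of moderate width $s^{O(n/t)}=2^{\tilde O((n/t)\,n^{\delta})}$ in one common order, the role of multilinearity being to limit, after the substitution, how badly the distinct product-gate partitions can still fail to align. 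Enumerating the $|A|^{|W|}=2^{\tilde O(|W|\,n^{2\delta})}$ choices of $\rho$ and taking for each the ROABP hitting set gives a hitting set of size $2^{\tilde O(|W|\,n^{2\delta}+(n/t)\,n^{\delta})}$; balancing the substitution set size $|W|\approx n/t$ against the residual width, with the field-size overhead, forces $t=\tilde\Theta(n^{1/3+2\delta/3})$, $|W|=\tilde\Theta(n^{2/3-2\delta/3})$, and the claimed $2^{\tilde O(n^{2/3+4\delta/3})}$. The increase from the $2\delta/3$ in the exponent of Theorem~\ref{thm:intro:hitting-set-depth-$3$} to $4\delta/3$ here is exactly the price of the extra $\Sigma\Pi$ layer: it upgrades linear forms (evaluation dimension $\le 2$) to $s$-sparse polynomials (evaluation dimension $\le s$), so each misaligned block now carries a factor of $s=2^{n^\delta}$ rather than a constant.

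I expect the heart of the proof — and the main obstacle — to be step (b): showing that a substitution set $W$ of size only $\tilde O(n^{2/3-2\delta/3})$ suffices to neutralize the incompatibility of all $s=2^{n^\delta}$ product-gate partitions, i.e.\ to defeat the expander-type obstruction identified above. Making this quantitative amounts to a purely combinatorial statement about bounded-size families of partitions: how few variables one must delete so that the residual block structure admits a common order in which every gate's evaluation dimension stays below the target. Everything else — the evaluation-dimension arithmetic, the reduction of the bottom $\Sigma\Pi$ layer to the sparse-polynomial bound, the invocation of the unknown-order ROABP hitting sets, and the field-size accounting — should be routine by comparison.
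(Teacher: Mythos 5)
There is a genuine gap, and you have located it yourself: your entire argument funnels through the claim that after fixing a set $W$ of $\tilde{O}(n^{2/3-2\delta/3})$ variables one can place \emph{all} $2^{n^\delta}$ product gates into a \emph{single} ROABP in one common variable order, and you offer no proof of this combinatorial statement. Worse, the expander obstruction you describe strongly suggests the statement is false as formulated: if the blocks of the product-gate partitions form an expander, then in every order on the surviving variables some cut straddles $\Omega(n)$ blocks of some gate, and deleting $o(n)$ variables cannot repair this; even for depth-$3$, where each straddled block costs only a factor of $2$, the width would be $2^{\Omega(n)}$. So the ``single global ROABP after a small substitution'' route does not go through, and the unknown-order ROABP hitting sets you invoke are not the tool the argument needs.

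The paper's mechanism is different in exactly the place where your proposal breaks. It never constructs one ROABP over all $n$ variables. Instead it uses a $k$-wise independent hash family (with $k \approx n^\delta$) to partition the variables into $m \approx n^{1-(\eps+\delta)/2}$ buckets $T_1,\ldots,T_m$ so that, for every product gate, each bucket meets each bottom polynomial in at most $k$ variables and meets at most $k\log n$ bottom polynomials in more than one variable. Then, for each bucket \emph{separately}, the formula viewed as a polynomial in the variables of $T_i$ over the extension field generated by the remaining variables is an ROABP of width $M\cdot 2^{k^2\log n}$ (the factors not straddled by $T_i$ are scalars there), and the known-order ROABP hitting set is applied bucket by bucket, the final set being the product $\cH_1^{T_1}\times\cdots\times\cH_m^{T_m}$ unioned over the $n^{O(k)}$ hash functions. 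This sidesteps the common-order problem entirely: one only pays $n^{O(\log n)}$ per bucket, i.e.\ $2^{\tilde{O}(n^{1-(\eps+\delta)/2})}$ in total, rather than needing all partitions to align globally. The preprocessing step also differs from yours: rather than substituting generic field elements, the paper takes derivatives and zero-restrictions with respect to $\tilde{O}(n^{\eps}\log S)$ variables, using the total sparsity of high-support bottom polynomials as a progress measure (and a ``simplicity'' normalization to preserve non-zeroness), and then lifts the resulting hitting set back by enumerating $0/1$ values on the derived variables. Your generic-substitution variant preserves non-zeroness trivially but does not obviously shrink the support of the bottom polynomials, which is what the rest of the argument requires.
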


Again, from the connection between hitting sets and circuit lower bounds we obtain the following corollary.

\begin{cor}\label{cor:intro:lowerbound-depth-$4$}
There is an explicit multilinear polynomial $f\in \F[x_1,\ldots,x_n]$, whose coefficients can be computed in exponential time, such that 
any depth-$4$ multilinear formula for $f$ has size $\exp(\tilde{\Omega}(n^{1/4}))$.
\end{cor}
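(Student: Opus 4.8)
The plan is to derive the lower bound from the hitting set of Theorem \ref{thm:intro:hitting-set-depth-$4$} via the standard Heintz--Schnorr / Agrawal connection between black-box PIT and lower bounds, specialized to the depth-$4$ multilinear setting. The key observation is the contrapositive of that connection: if the class of depth-$4$ multilinear formulas of size $s$ admits a hitting set of size $h$, then there is an explicit polynomial (of degree polynomial in $n$, on $n$ variables) that cannot be computed by such formulas, where the trade-off between $s$ and $h$ is governed by a counting/interpolation argument. Concretely, I would fix the hitting set $\cH$ guaranteed by Theorem \ref{thm:intro:hitting-set-depth-$4$} for size parameter $s = 2^{n^\delta}$, and then exhibit a nonzero multilinear polynomial $f$ that vanishes on all of $\cH$ but has ``low complexity coefficients'' in the sense that its coefficient vector can be found by solving a linear system; the point is that a multilinear polynomial vanishing on $|\cH|$ points is a nontrivial solution to a homogeneous linear system in the $2^n$ monomial coefficients, so such an $f$ exists whenever $2^n > |\cH|$, and by Gaussian elimination its coefficients are computable in time $2^{O(n)}$.

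The main steps, in order, are as follows. First I would invoke Theorem \ref{thm:intro:hitting-set-depth-$4$} to obtain, for every $\delta$ (to be optimized), a hitting set $\cH_\delta$ of size $2^{\tilde O(n^{2/3 + 4\delta/3})}$ for depth-$4$ multilinear formulas of size $2^{n^\delta}$. Second, I would observe that since $\cH_\delta$ is a hitting set for this class, no nonzero polynomial computed by a formula in the class vanishes identically on $\cH_\delta$; hence any polynomial that \emph{does} vanish on all of $\cH_\delta$ is either zero or not computable by a depth-$4$ multilinear formula of size $2^{n^\delta}$. Third, I would produce such a nonzero vanishing polynomial explicitly: restrict attention to multilinear polynomials, set up the $|\cH_\delta| \times 2^n$ evaluation matrix, and take $f$ to be a nonzero vector in its kernel — this requires $2^n > |\cH_\delta|$, i.e. $n > \tilde O(n^{2/3 + 4\delta/3})$, which holds precisely when $4\delta/3 < 1/3$, i.e. $\delta < 1/4$ (up to the polylog factors). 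Fourth, I would pick $\delta$ as close to $1/4$ as the polylog slack allows, so that $f$ is not computable by depth-$4$ multilinear formulas of size $2^{n^\delta} = \exp(\tilde\Omega(n^{1/4}))$, yielding the claimed lower bound; the coefficients of $f$ are computable in $2^{O(n)}$ time (exponential time) by Gaussian elimination, giving explicitness in the required sense.

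I expect the main subtlety — more bookkeeping than genuine obstacle — to be the careful propagation of the $\tilde O(\cdot)$ and polylog factors through the comparison $2^n$ versus $|\cH_\delta|$ so that the final exponent comes out as $\tilde\Omega(n^{1/4})$ rather than something weaker, and making sure the restriction to multilinear polynomials is legitimate (it is, since we only need \emph{some} explicit hard polynomial, and a multilinear one in the kernel of the evaluation matrix suffices; one should check the hitting set property is being used for the multilinear class, which it is by hypothesis). A secondary point is to confirm the degree of $f$ is polynomially bounded so that ``explicit'' and ``exponential-time-computable coefficients'' are meaningful — this is automatic for multilinear $f$, whose degree is at most $n$. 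Everything else is the textbook argument, so the corollary follows essentially immediately from Theorem \ref{thm:intro:hitting-set-depth-$4$} once $\delta$ is tuned to the threshold $1/4$.
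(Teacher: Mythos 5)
Your proposal is correct and follows exactly the paper's argument: invoke the hitting set of Theorem~\ref{thm:intro:hitting-set-depth-$4$}, find a nonzero multilinear polynomial vanishing on it by solving the homogeneous linear system in the $2^n$ monomial coefficients (possible whenever $2^n > |\cH|$, i.e.\ $\delta < 1/4$ up to polylogs), and tune $\delta = 1/4 - O(\log\log n/\log n)$. This is precisely how the paper proves the corollary via Theorem~\ref{thm:lowerbound-from-hitting}.
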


The best known lower bound for depth-$4$ multilinear formulas is $\exp(n^{1/2})$ due to Raz and Yehudayoff \cite{RazYehudayoff09}, 
thus, as in the previous case, the term in the exponent of our lower bound is the square root of the corresponding term in the best 
known lower bound. For regular depth-$d$ multilinear formulas we obtain the following result.

\begin{thm}
\label{thm:intro:hitting-set-regular}
There exists a hitting set $\cH$ of size $|\cH| = 2^{\tilde{O}(n^{1- \delta/3})}$ for the class of regular depth-$d$ multilinear 
formulas of size $2^{n^\delta}$, where $\delta \leq  \frac{1}{5^{\lfloor d/2 \rfloor+1}}  = O\left(\frac{1}{\sqrt{5}^d}\right)$.
\end{thm}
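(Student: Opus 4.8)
The plan is to prove Theorem~\ref{thm:intro:hitting-set-regular} by the same two-stage reduction strategy advertised in the abstract: first reduce a regular depth-$d$ multilinear formula to a (small) sum of ROABPs, and then apply a known hitting-set construction for sums of ROABPs. Concretely, I would argue by induction on the depth $d$. The base case $d \le 3$ is handled by the depth-$3$ machinery (which, as the abstract says, goes straight to a ROABP). For the inductive step, consider a regular formula $f = \sum_{i=1}^{a_1}\prod_{j=1}^{p_1} g_{i,j}$, where each $g_{i,j}$ is computed by a regular depth-$(d-2)$ multilinear formula of size at most $2^{n^\delta}$, and — crucially — the $g_{i,j}$ for a fixed $i$ act on \emph{disjoint} variable sets because the formula is multilinear. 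The disjointness of supports inside each product gate is what makes a product of ROABPs again a (slightly larger) ROABP: one simply concatenates the branching programs along the variable order, reordering the global variable order to respect the block structure. The regularity hypothesis is used to control how the parameters $a_i, p_i$ grow, since $\prod a_i p_i \le 2^{n^\delta}$ forces each individual fan-in to be bounded.

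The key quantitative step is bookkeeping the size blow-up through the recursion. At each of the $\lfloor d/2\rfloor$ levels one pays a factor that, after optimizing the split between "width of the ROABP" and "number of ROABPs in the sum", contributes a power of roughly $5$ to the exponent of $1/\delta$ — this is exactly where the $\delta \le 5^{-(\lfloor d/2\rfloor + 1)}$ constraint and the $O(\sqrt5^{-d})$ rate come from. I expect the recursion to have the shape: a regular depth-$d$ formula of size $2^{n^\delta}$ is a sum of at most $2^{\tilde O(n^{\delta'})}$ ROABPs, each of width $2^{\tilde O(n^{\delta'})}$, where $\delta'$ is a constant (depending on $d$) multiple of $\delta$; the constant satisfies a recurrence like $\delta'_d \approx 5\delta'_{d-2}$, solving to $\delta'_d = \Theta(5^{\lfloor d/2\rfloor}\delta)$, and the constraint on $\delta$ ensures $\delta'_d$ stays bounded away from $1$ (say below $1/3$) so that $n^{\delta'_d} = o(n)$.

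Once the reduction to a sum of $2^{\tilde O(n^{\delta'})}$ width-$2^{\tilde O(n^{\delta'})}$ ROABPs is in place, I would invoke a hitting-set generator for this class. A sum of $m$ ROABPs each of width $w$ over $n$ variables has a hitting set of size roughly $(nw)^{O(m)}$ or, in the subexponential regime, $2^{\tilde O(m + w + \text{something})}$; plugging $m, w = 2^{\tilde O(n^{\delta'})}$ gives a hitting set of size $2^{\tilde O(n^{\delta'})}$. To land the claimed bound $2^{\tilde O(n^{1-\delta/3})}$ rather than $2^{\tilde O(n^{\delta'})}$, one does the standard trick of \emph{not} pushing the reduction all the way down at full strength: instead one partitions the variables into blocks of size about $n^{1-\delta'}$ (or picks the split point to balance the two terms), handling each block's ROABP structure with a $2^{\tilde O(\text{block size})}$ hitting set and combining. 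Balancing $n^{\delta'}$ against $n/(\text{block size})$ (each ROABP restricted to a block has width $2^{\tilde O(n^\delta)}$ per block, and there are $n^{1-\delta}$ blocks of size $n^\delta$, say) yields the $n^{1-\delta/3}$ exponent after optimization; the precise split is where the $\delta/3$ in the exponent is pinned down.

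The main obstacle, and the step I would spend the most care on, is the reduction from a regular depth-$d$ multilinear formula to a sum of ROABPs with the right parameters — in particular, showing that one can absorb a product of ROABPs on disjoint variable sets into a single ROABP while only multiplying the widths (not incurring an exponential blow-up in the number of layers or in the number of summands), and tracking exactly how the regularity of the formula keeps all intermediate fan-ins small enough that the exponent stays in the subexponential regime. A secondary subtlety is that the hitting-set generator for sums of ROABPs must be applied in a way that is robust to the particular variable orderings induced by different product gates (different summands may force incompatible orderings), which is typically handled by a generator that works simultaneously for all variable orders, at the cost of an extra factor that must be checked not to dominate. Everything else — the Schwartz–Zippel-type union bound to verify the hitting-set property, the field-size assumption from the remark, and the translation to the lower bound corollary — is routine.
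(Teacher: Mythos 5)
Your high-level plan (reduce a regular formula to ROABP-like objects, then hit those) matches the paper's advertised strategy only superficially; the step your argument actually hinges on does not exist. After your induction you are left with a \emph{sum} of exponentially many ($2^{\tilde{O}(n^{\delta'})}$) ROABPs, and--as you yourself note in passing--different summands induce \emph{incompatible variable orders}, because each product gate concatenates its factors' supports in its own way. There is no known hitting set for a sum of $m$ ROABPs in different orders whose size is subexponential for $m$ superpolylogarithmic: the best results (Agrawal--Gurjar--Korwar--Saxena) are doubly exponential in the number of summands (already $\exp(n)$ for $m=O(\log\log n)$ set-multilinear summands, and white-box at that), and your own stated bound $(nw)^{O(m)}$ is $2^{2^{\tilde{O}(n^{\delta'})}}$, not $2^{\tilde{O}(n^{\delta'})}$. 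A generator ``robust to all variable orders'' exists for a \emph{single} unknown-order ROABP, not for a sum of them; constructing one for sums is essentially the open problem this paper is circumventing, so it cannot be invoked as a black box. Relatedly, your recursion $\delta'_d\approx 5\delta'_{d-2}$ is not where the $5^{\lfloor d/2\rfloor+1}$ comes from, and the depth-$3$ base case does not ``go straight to a ROABP'' for the whole formula either.

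What the paper actually does avoids ever forming a global sum of ROABPs. First, a \emph{single} cut of the regular formula at a carefully chosen layer $t$ (chosen so that $p_t\le n^{(1/5)^{d+1-t}}$ while $p_{t+1}>n^{(1/5)^{d-t}}$) collapses the top part into a $\Sigma\Pi$ expression of fan-in $M\le S^{n^{\alpha_t}}$ and expands each bottom factor, which has degree at most $n^{1-(c-1)\alpha_t}$, into a sparse polynomial; this yields a depth-$4$ $\SPSP$ formula with $(\log M)^3\cdot\log S=o(n)$ (Theorem~\ref{thm:squeeze}; the constant $5$ and the constraint on $\delta$ come from making $5\delta+4\alpha_t<1$ and $4\delta<\alpha_t$ hold for all $t$). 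Second, the depth-$4$ hitting set (Theorem~\ref{thm:hitting-set-depth-$4$-general}) handles the sum-of-many-orders problem by taking $\tilde{O}(n^{\eps}\log S)$ derivatives/restrictions to kill all high-support bottom polynomials, then hashing the variables into $\approx n^{1-\eps/2}$ buckets so that, \emph{restricted to one bucket at a time} (with the other variables pushed into an extension field), the entire formula becomes a single ROABP of width $M\cdot s^{k\log n}$, hit by the Forbes--Shpilka/AGKS set; the buckets are then processed sequentially. If you want to salvage your approach, you must replace the appeal to a sum-of-ROABPs generator with some mechanism of this kind.
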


As before we obtain a lower bound for such formulas.

\begin{cor}\label{cor:intro:lowerbound-regular}
There is an explicit multilinear polynomial $f\in \F[x_1,\ldots,x_n]$, whose coefficients can be computed in exponential time, 
such that any regular depth-$d$ multilinear formula for $f$ has size $\exp(\tilde{\Omega}(n^{ \frac{1}{5^{\lfloor d/2 \rfloor+1}} }))$.
\end{cor}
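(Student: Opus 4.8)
The plan is to invoke Theorem~\ref{thm:intro:hitting-set-regular} together with the standard ``hardness from a hitting set'' argument of Heintz--Schnorr~\cite{HeintzSchnorr80} and Agrawal~\cite{Agrawal05}; since here we only need to exhibit a polynomial that \emph{falls outside} the class (rather than one whose formulas must be larger than the PIT parameter), the argument is short. Fix $d$ and set $\delta := \frac{1}{5^{\lfloor d/2\rfloor + 1}}$. By Theorem~\ref{thm:intro:hitting-set-regular} there is a hitting set $\cH$ for the class $\cC$ of polynomials computed by regular depth-$d$ multilinear formulas of size at most $2^{n^\delta}$, with $|\cH| = 2^{\tilde O(n^{1-\delta/3})}$, and $\cH$ is explicitly constructible in time polynomial in its size (over fields of characteristic zero or of size $\exp(n)$ we have $\cH \subseteq \F^n$; otherwise $\cH$ lives over the extension field from the Remark).

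Next I would produce a nonzero multilinear polynomial $f$ vanishing on every point of $\cH$ by linear algebra. The space of multilinear polynomials in $x_1,\ldots,x_n$ has dimension $2^n$, with the $2^n$ coefficients as free parameters, and imposing $f(a) = 0$ for each $a \in \cH$ is a homogeneous linear system of $|\cH|$ equations over $\F$. Since $\delta > 0$, for all sufficiently large $n$ we have $|\cH| = 2^{\tilde O(n^{1-\delta/3})} < 2^n$, so the system is underdetermined and has a nonzero solution; Gaussian elimination yields one deterministically. Constructing $\cH$ and solving the system takes time $\poly(2^n, |\cH|) = \exp(n)$, so the coefficients of $f$ are computable in exponential time, as the statement requires (and $f \in \F[x_1,\ldots,x_n]$, or over the relevant extension field in the small-field case).

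Finally I would argue that $f$ is hard. By construction $f$ is a nonzero multilinear polynomial that vanishes identically on $\cH$. If $f$ were computed by a regular depth-$d$ multilinear formula of size at most $2^{n^\delta}$, then $f \in \cC$, and since $\cH$ is a hitting set for $\cC$ the nonzero polynomial $f$ could not vanish on all of $\cH$ --- a contradiction. Hence every regular depth-$d$ multilinear formula computing $f$ has size greater than $2^{n^\delta} = 2^{n^{1/5^{\lfloor d/2\rfloor+1}}} = \exp\!\left(\tilde\Omega(n^{1/5^{\lfloor d/2\rfloor+1}})\right)$, which is the claimed bound.

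I expect essentially no obstacle in the corollary itself: all of the difficulty is concentrated in Theorem~\ref{thm:intro:hitting-set-regular}, whose proof (via the reduction to depth-$4$ formulas and then to ROABPs sketched in the abstract) is the technical heart of the paper. The only point requiring any care is checking $|\cH| < 2^n$, which is exactly where the exponent $1 - \delta/3 < 1$, rather than $1$, in the size of the hitting set is used, and which forces $\delta$ to be bounded away from $1$ (here $\delta \le 5^{-(\lfloor d/2\rfloor + 1)}$); one could also trade a slightly weaker exponent in the hitting set bound for a correspondingly weaker lower bound.
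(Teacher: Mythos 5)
Your proposal is correct and is essentially identical to the paper's own argument: the paper proves this corollary (together with the depth-$3$ and depth-$4$ ones) by invoking the Heintz--Schnorr/Agrawal reduction (Theorem~\ref{thm:lowerbound-from-hitting}), i.e., constructing the hitting set of Theorem~\ref{thm:intro:hitting-set-regular}, observing that its size $2^{\tilde{O}(n^{1-\delta/3})}$ is less than $2^n$, and solving the resulting underdetermined homogeneous linear system for the $2^n$ multilinear coefficients in time $2^{O(n)}$. No gaps; the only care needed is the check $|\cH|<2^n$, which you handle.
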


We note that Raz and Yehudayoff gave an $\exp(n^{\Omega(\frac{1}{d})})$ lower bound  for depth-$d$ multilinear formulas, which 
is much stronger than what Corollary~\ref{cor:intro:lowerbound-regular} gives. Yet, our result also gives a PIT algorithm, which does 
not follow from the results of \cite{RazYehudayoff09}. As we later explain, we lose a square root in the term at 
the exponent for every increase of the depth and this is the reason that we get only $\exp(n^{1/\exp(d)})$ instead of $\exp(n^{1/d})$.

In addition to lower bounds, our work also implies deterministic factorization of multilinear polynomials. 
%More precisely, 
In~\cite{ShpilkaVolkovich10}, Shpilka and Volkovich proved that if one can perform PIT deterministically for certain classes of multilinear polynomials then a deterministic factoring algorithm for those classes follows. Specifically, for a class of polynomials $\cC$ they defined the  class $\cC_V$, consisting of all polynomials that can be computed by circuits of the form $C = C_1 + C_2 \times C_3$, where the circuits $C_i$ belong to the class $\cC$ and the circuits $C_2$ and $C_3$ are defined over disjoint sets of variables. They proved that if the class $\cC_V$ has a deterministic PIT that runs in time $T(n,s)$ for circuits on $n$  variables of size $s$ then there is a deterministic factoring algorithm for the class $\cC$ that runs in time $O(n^3\cdot T(s))$ (Theorem 1.1 in \cite{ShpilkaVolkovich10}). 

In our case, since the product of two variable disjoint multilinear $\SPS$ ($\SPSP$) formulas of size $2^{n^\delta}$ is a 
multilinear $\SPS$ ($\SPSP$) formula of size $2^{2n^\delta}$, which is still inside of the class $\SPS$ ($\SPSP$), the result of \cite{ShpilkaVolkovich10}, when combined with our PIT results, implies that we can deterministically factor such formulas.  
Therefore, we obtain the following corollary:

\begin{cor}[Deterministic Factorization]\label{cor:intro:factoring}
	Given a multilinear polynomial $f \in \F[x_1, \ldots, x_n]$ that can be computed by a $\SPS$ ($\SPSP$) formula of size $2^{n^\delta}$, there exists an efficient deterministic algorithm that outputs the factors of $f$. The algorithm outputs $\SPS$ ($\SPSP$) formulas for the factors if the formula for $f$ is given to it explicitly, and black-boxes if it only has black-box access to $f$. The running time of this algorithm is $2^{\tilde{O}(n^{\frac{2}{3}+ \frac{2}{3} \delta})}$ when $f$ is computed by a $\SPS$ formula and $2^{\tilde{O}(n^{\frac{2}{3}+ \frac{4}{3} \delta})}$ when it is computed by a $\SPSP$ formula. 
\end{cor}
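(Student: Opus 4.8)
The plan is to derive Corollary~\ref{cor:intro:factoring} by feeding our black-box PIT results (Theorems~\ref{thm:intro:hitting-set-depth-$3$} and~\ref{thm:intro:hitting-set-depth-$4$}) into the PIT-to-factoring reduction of Shpilka and Volkovich \cite{ShpilkaVolkovich10}. Recall that their Theorem~1.1 asserts: if the class $\cC_V = \{C_1 + C_2\times C_3 : C_i \in \cC,\ C_2,C_3 \text{ over disjoint variable sets}\}$ admits a deterministic PIT algorithm running in time $T = T(n,s)$ on $n$-variate size-$s$ instances, then $\cC$ admits a deterministic factorization algorithm running in time $O(n^3\cdot T)$, which in the white-box model returns $\cC$-circuits for the irreducible factors and in the black-box model returns black-box access to them. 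Thus it suffices to exhibit a good PIT algorithm for $(\SPS)_V$ and for $(\SPSP)_V$.

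The crux is the observation, already noted in the text preceding the corollary, that each of the classes $\SPS$ and $\SPSP$ of multilinear formulas is closed, up to a polynomial blow-up in the \emph{exponent} of the size, under the operation $\cC \mapsto \cC_V$. Indeed, if $C_2$ and $C_3$ are variable-disjoint multilinear $\SPS$ formulas of size $2^{n^\delta}$, then $C_2\times C_3 = \sum_{i}\sum_{j}\bigl(\prod_{k}\ell_{i,k}\bigr)\bigl(\prod_{k'}\ell'_{j,k'}\bigr)$ is again a multilinear $\SPS$ formula (the product of two product gates on disjoint variable sets is a single product gate whose linear factors are still pairwise variable-disjoint), of size at most $2^{2n^\delta}$; adding $C_1$ just merges two $\Sigma\Pi\Sigma$ formulas into one, so $C_1+C_2\times C_3$ is a multilinear $\SPS$ formula of size at most $2^{2n^\delta+1} = 2^{O(n^\delta)}$. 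The identical argument, using that a product of two variable-disjoint products of $s$-sparse polynomials is again a product of sparse polynomials, shows that $(\SPSP)_V$ on size-$2^{n^\delta}$ instances lies inside the class of multilinear $\SPSP$ formulas of size $2^{O(n^\delta)}$.

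Now I would instantiate the PIT algorithm by applying Theorem~\ref{thm:intro:hitting-set-depth-$3$} (resp.\ Theorem~\ref{thm:intro:hitting-set-depth-$4$}) to multilinear $\SPS$ (resp.\ $\SPSP$) formulas of size $2^{O(n^\delta)}$. Replacing $\delta$ by $\delta+o(1)$ only affects the $n^{o(1)}$ factors hidden in $\tilde O(\cdot)$, so this yields a hitting set of size $2^{\tilde O(n^{2/3+2\delta/3})}$ (resp.\ $2^{\tilde O(n^{2/3+4\delta/3})}$); evaluating a given formula at all points of the hitting set costs an extra $\poly(s)=2^{O(n^\delta)}$ factor, which is absorbed since $2/3+2\delta/3>\delta$ for the relevant $\delta$. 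Hence $T(n,s)=2^{\tilde O(n^{2/3+2\delta/3})}$ (resp.\ $2^{\tilde O(n^{2/3+4\delta/3})}$), and multiplying by the $O(n^3)$ overhead of the reduction does not change the bound, giving exactly the running times claimed. The field-size hypothesis stated in our remark is precisely of the form required both by the hitting-set construction and by \cite{ShpilkaVolkovich10}, so no additional assumption is introduced.

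For the form of the output I would invoke the standard fact that the irreducible factorization of a multilinear polynomial consists of pairwise variable-disjoint multilinear polynomials (were a variable to occur in two factors, its square would occur in $f$). Consequently each factor produced by the reduction is itself a multilinear polynomial: in the white-box setting it arises from the given formula by variable restrictions, hence is a multilinear $\SPS$ (resp.\ $\SPSP$) formula, and in the black-box setting one obtains black-box access to it. The only step with genuine content is the closure property under $(\cdot)_V$ recorded above — and it is easy precisely because we place no bound on the top fan-in — so I do not anticipate any real obstacle; the remainder is bookkeeping of parameters.
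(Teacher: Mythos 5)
Your proposal is correct and follows essentially the same route as the paper: the paper's justification is precisely the Shpilka--Volkovich reduction (their Theorem~1.1) combined with the observation that a product of two variable-disjoint multilinear $\SPS$ (resp.\ $\SPSP$) formulas of size $2^{n^\delta}$ is again such a formula of size $2^{2n^\delta}$, so that the class is closed under the $\cC_V$ operation and Theorems~\ref{thm:intro:hitting-set-depth-$3$} and~\ref{thm:intro:hitting-set-depth-$4$} supply the required PIT. Your additional bookkeeping (absorbing the $\delta\mapsto\delta+o(1)$ shift into the $\tilde O(\cdot)$ and noting that irreducible factors of a multilinear polynomial are variable-disjoint) only fleshes out details the paper leaves implicit.
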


%%%%%%%%%%%%%%%%%%%%%%%%%%%%%%%%%%%%%%%%%%%%%%%%%%%%
%%%%%%%%%%%%%%%%%%%%%%%%%%%%%%%%%%%%%%%%%%%%%%%%%%%%
%%%%%%%%%%%%%%%%%%%%%%%%%%

\subsection{Proof Overview}\label{sec:technique}

We first discuss our proof technique for the case of depth-$3$ multilinear formulas. 
Our (idealized) aim is to reduce such a formula $\Phi$ to a depth-$3$ multilinear formula in which each linear function is of the form 
$\alpha x + \beta$. That is, each linear function contains at most one variable. If we manage to do that then we can use the 
quasi-polynomial sized hitting set of \cite{ForbesShpilka12,AgrawalGKS14} for this model.

Of course, the problem with the above argument is that in general, depth-$3$ formulas have more than one variable per linear function. 
To overcome this difficulty, we will partition the variables to several sets $T_1,\ldots,T_m$ and hope that each linear function in the 
formula contains at most one variable from each $T_i$. If we can do that then we would use the hitting set for each set of variables 
$T_i$ and combine those sets together to get our hitting set. That is, the combined hitting set is composed of concatenation of all 
vectors of the form $v_1 \circ v_2 \circ \ldots \circ v_m$ where $v_i$ comes from the hitting set restricted to the variables in $T_i$ 
(the concatenation is performed in a way that respects the partition of course). Thus, if we can carry out this procedure then we will 
get a hitting set of size roughly $n^{m\log n}$. This step indeed yields a hitting set, since when we restrict our attention to each $T_i$ 
and think of the other variables as constants in some huge extension field, then we do get a small ROABP (in the variables of $T_i$) 
and hence plugging in the hitting set of  \cite{ForbesShpilka12,AgrawalGKS14} gives a non-zero polynomial. Thus, we can first do this 
for $T_1$ and obtain some good assignment $v_1$ that makes the polynomial non-zero after substituting $v_1$ to $T_1$. Then we 
can find $v_2$ etc.

There are two problems with the above argument. One problem is how to find such a good partition. The second is that this idea 
simply cannot work as is. For example, if we have the linear function $x_1+\cdots+x_n$, then it will have a large intersection with 
each $T_i$. 

We first deal with the second question. to overcome the difficulty posed by the example, we would like to somehow ``get rid'' of all 
linear functions of large support and then carry out the idea above. To remove linear functions with large support from the formula we 
use another trick. Consider a variable $x_k$ that appears in a linear function $\ell_0$ that has a large support. Assume that 
$\frac{\partial f}{\partial x_k} \not\equiv 0$ as otherwise we can ignore $x_k$. Now, because of multilinearity, we can transform 
our original formula $\Phi$ to a formula computing $\frac{\partial f}{\partial x_k}$. This is done by replacing each linear function 
$\ell(X) = \sum_{i=1}^{n}\alpha_i x_i + \alpha_0$ with the constant $\alpha_k$.
In particular, the function $\ell_0$ that used to have a high support does not appear in the new formula. Furthermore, this process 
does not increase the support size of any other linear function. A possible issue is that if we have to repeat this process for every 
function of large support then it seems that we need to take a fresh derivative for every such linear function. The point is that because 
we only care about linear functions that have a large support to begin with, we can find a variable that simultaneously appears in 
many of those functions and thus one derivative will eliminate many of the ``bad'' linear functions. 
Working out the parameters, we see that we need to take roughly $n^\epsilon \cdot \log|\Phi|$ many derivatives to reduce to the 
case where all linear functions have support size at most $n^{1-\epsilon}$.

Now we go back to our first problem. We can assume that we have a depth-$3$ formula in which each linear function has support 
size at most $n^{1-\epsilon}$ and we wish to find a partition of the variables  to  sets $T_1,\ldots,T_m$ so that each $T_i$ contains 
at most one variable from each linear function. This cannot be achieved as a simple probabilistic argument shows, so we relax our 
requirement and only demand that in each multiplication gate (of the formula) only a few linear functions  have a large intersection. 
If at most $k$ linear functions in each gate have a large intersection, we can expand each multiplication 
gate to at most $n^k$ new gates (by simply expanding all linear functions that have large intersection) and then apply our argument. 
As we will be able to handle subexponential size formulas, this blow up is tolerable for us.

Note that if we were to pick the partition at random, when $m= n^{1-\epsilon+\gamma}$, for some small $\gamma$, then we will get 
that with very high probability at most $n^\delta$ linear functions will have interaction at most $n^\delta$ with each $T_i$, where 
$\delta$ is such that $|\Phi| < \exp(n^\delta)$. To get a deterministic version of this partitioning, we simply use an $n^\delta$-wise 
independent family of hash functions $\{h:[n]\to [m]\}$. Each hash function $h$ induces a partition of the variables to 
$T_i = \{x_k \mid h(k)=i\}$. Because of the high independence, we are guaranteed that there is at least one hash function that  
induces a good partition. 

Now we have all the ingredients in place. To get our hitting set we basically do the following (we describe the construction as a 
process, but it should be clear that every step can be performed using some evaluation vectors).
\begin{enumerate}

\item \label{item:derivative} Pick $n^\epsilon \cdot \log|\Phi|$ many variables and compute a black-box for the polynomial that is 
obtained by taking the derivative of $f$ with respect to  those variables. The cost of this step is roughly 
${n \choose n^\epsilon \cdot \log|\Phi|} \cdot 2^{n^\epsilon \cdot \log|\Phi|}$, where the first term is for picking the variables and 
the second is what we have to pay to get access to the derived polynomial.

\item \label{item:partition} Partition the remaining variables to (roughly) $n^{1-\epsilon/2}$ many sets using a (roughly) $\log|\Phi|$-wise 
independent family of hash functions. The cost of this step is roughly $n^{\log|\Phi|}$ as this is the size of the hash function family.

\item Plug in a fresh copy of the hitting set of \cite{ForbesShpilka12,AgrawalGKS14} to each of the sets of variables $T_i$. The cost 
is roughly $n^{\log n \cdot n^{1-\epsilon/2}}$.

\end{enumerate}

Combining everything we get a hitting set of size roughly 
$$\left({n \choose n^\epsilon \cdot \log|\Phi|} \cdot 2^{n^\epsilon \cdot \log|\Phi|}\right)\cdot \left(n^{\log|\Phi|}\right)\cdot \left(n^{\log n \cdot n^{1-\epsilon/2}}\right) \approx 2^{\tilde{O}\left( n^{1-\epsilon/2} + n^\epsilon \log|\Phi| \right)}.$$
Optimizing the parameters we get our hitting set.

We would like to use the same approach also for the case of depth-$4$ formulas. Here the problem is that in the two bottom layers 
the formula compute a polynomial and not a linear function. In particular, when taking a derivative we are no longer removing 
functions that have a large support. Nevertheless, we can still use a similar idea. We show  there is a variable $x_i$ that by either 
setting $f|_{x_i=0}$ or considering $\frac{\partial f}{\partial x_i}$, we are guaranteed that the total sparsity of all polynomials that 
have large supports goes down by some non-negligible factor. Thus, repeating this process (of either setting a variable to $0$ or 
taking a derivative) $n^\epsilon \cdot \log|\Phi|$ many times we reach a depth-$4$ formula where all polynomials computed at the 
bottom addition gate have small supports. Next, we partition the variables to buckets and consider a single bucket $T_i$. Now, 
another issue is that even if the intersection of a low-support polynomial with some $T_i$ is rather small, the sparsity of the 
resulting polynomial (which is considered as a polynomial in the variable in the intersection) can still be exponential in the size of 
the intersection. This is why we lose a bit in the upper bound compared to the depth-$3$ case. Combining all steps again we get 
the result for depth-$4$ formulas.

The proof for regular formulas works by first reducing to the depth-$4$ case and then applying our hitting set. The reduction is 
obtained in a similar spirit to the reduction of Kayal et al.\ \cite{KayalSS14}. We break the formula at an appropriate layer and then 
express the top layers as a $\Sigma\Pi$ circuit and the bottom layers as products of polynomials of not too high degrees. We then 
use the trivial observation that if the degree of a polynomial is at most $n^{1-\epsilon}$ then its sparsity is at most $n^{n^{1-\epsilon}}$ 
and proceed as before. Due to the different requirements of the reduction and of the hashing part, we roughly lose a constant 
factor in the exponent of $n$, in the size of the hitting set, whenever the depth grows, resulting in a hitting set of size roughly 
$\exp(n^{1-1/\exp(d)})$.

To obtain the lower bounds we simply use the idea of \cite{HeintzSchnorr80,Agrawal05}. That is, given a hitting set we find a 
non-zero multilinear polynomial that vanishes on all points of the hitting set by solving a homogeneous system of linear equations. 

%%%%%%%%%%%%%%%%%%%%%%%%%%%%%%%%%%%%%%%%%%%%%%%%%%%%
%%%%%%%%%%%%%%%%%%%%%%%%%%%%%%%%%%%%%%%%%%%%%%%%%%%%
%%%%%%%%%%%%%%%%%%%%%%%%%%

\subsection{Related Work}

\paragraph{The work of Agrawal, Gurjar, Korwar and Saxena \cite{AgrawalGKS14}:}

The closest work to ours is the one by Agrawal et al.\ \cite{AgrawalGKS14}. In addition to other results, they gave a white-box PIT 
algorithm that runs in time $n^{O(2^{c} n^{1-\frac{2}{2^c} }\log s)}$ for depth-$3$ formulas that can be represented as a sum of 
$c$ set-multilinear formulas, each of size $s$ (potentially with respect to different partitions of the variables).

Theorem~\ref{thm:intro:hitting-set-depth-$3$} improves upon this results in several ways. First, the theorem gives a hitting set, 
i.e., a black-box PIT. Secondly, for  $c=O(\log\log n)$ the running time of the algorithm of \cite{AgrawalGKS14} is $\exp(n)$, whereas 
our construction can handle a sum of $\exp(n^\beta)$ set-multilinear formulas and still maintain a subexponential complexity. 
%Furthermore our construction is blackbox compared to the white-box algorithm of \cite{AgrawalGKS14}. 
%We note that the proof of Theorem~\ref{thm:intro:hitting-set-depth-$3$} can be modified a bit to handle the case of a sum of $c$ set-multilinear circuits of size $s$. In that case the result would be much better and in particular will improve upon the result of \cite{AgrawalGKS14}. We leave the details to the interested reader.\footnote{In a nutshell, we can take a $2\log(c)$-wise independent hash function family rather than $\log(s)$-wise independent, which will give an exponent of roughly $\sqrt{n}$ rather than $n^{2/3}$.}

Nonetheless, there are some similarities behind the basic approach of the this work and the work of Agrawal et al. Recall that a set-multilinear depth-$3$ formula is based on a partition of the variables, where each linear function in the formula contains variables from a single partition. 
Agrawal et al.\ start with a sum of $c$ set-multilinear circuits, each with respect to a different partitioning of the variables, and their first goal is to reduce the formula to a set-multilinear formula, i.e., to have only one partition of the variables. For this they define a distance between different partitions and show, using an involved combinatorial argument, that one can find some partition $T_1,\ldots,T_m$ of the variables so that when restricting our attention to $T_i$, all the $c$ set-multilinear formulas will be somewhat ``close to each other''. If the distance is $\Delta$ (according to their definition) then they prove that they can express the sum as a set-multilinear circuit of size roughly $s\cdot n^\Delta$, where $s$ is the total size of the depth-$3$ formula. Unlike our work, they find the partition in a white-box manner by gradually refining the given $c$ partitions of the set-multilinear circuits composing the formula. The final verification step is done, in a similar manner to ours, by substituting the hitting set of \cite{AgrawalSS13} (or that of \cite{ForbesShpilka12}) to each of the sets $T_i$. The step of finding the partition $T_1,\ldots,T_m$ is technically involved and is the only step where white-box access is required. 

\paragraph{Lower bounds for multilinear circuits and formulas:}

Lower bounds for the multilinear model were first proved by Nisan and Wigderson \cite{NisanWigderson96}, who gave exponential lower bounds for depth-$3$ formulas. Raz first proved quasi-polynomial lower bounds for multilinear formulas computing the Determinant and Permanent polynomials \cite{Raz09a} and later gave a separation
 between multilnear $\text{NC}_1$ and multilinear $\text{NC}_2$ \cite{Raz06}. 
%Raz, Shpilka and Yehudayoff proved a super-linear lower bound for syntactically multilinear circuits \cite{RSY08}. 
Raz and Yehudayoff proved a lower bound of $\exp(n^{\Omega(\frac{1}{d})})$ for depth-$d$ multilinear formulas. 
 %Recently, exponential lower bounds were proved for depth-$4$ homogeneous formulas \cite{GuptaKKS13a,KayalLSS14,KumarSaraf14}. 
As in the general case, the depth reduction techniques of \cite{ValiantSkyumBR83,AgrawalVinay08,Koiran10,Tavenas13} also work for multilinear formulas. Thus, proving a lower bound of the form $\exp(n^{\frac{1}{2}+\epsilon})$ for $\SPSP$ multilinear formulas, would imply a super-polynomial lower bound for multilinear circuits. Currently, the best lower bound for syntactic multilinear circuits is $n^{4/3}$ by Raz, Shpilka and Yehudayoff \cite{RSY08}.

Kayal, Saha and Saptharishi \cite{KayalSS14} proved a quasi-polynomial lower bounds for regular formulas that have the additional condition that the syntactic degree of the formula is at most twice the degree of the output polynomial.

%%%%%%%%%%%%%%%%%%%%%%%%%%%%%%%%%%%%%%%%%%%%%%%%%%%%
%%%%%%%%%%%%%%%%%%%%%%%%%%%%%%%%%%%%%%%%%%%%%%%%%%%%
%%%%%%%%%%%%%%%%%%%%%%%%%%
\subsection{Organization}
In Section~\ref{sec:prelim} we provide basic definitions and notations, and also prove some general lemmas which will be helpful in the next sections. In Section~\ref{sec:red}, we explain how to reduce general depth-$3$ and depth-$4$ formulas to formulas such that every polynomial at the bottom has small support. Then, in Section~\ref{sec:hit-bottom}, we construct a hitting set for those types of formulas. In Section~\ref{sec:hit}, we explain how to combine the ideas of the previous two sections and construct our hitting set for depth-$3$ and depth-$4$ multilinear formulas.

We then move on in Section~\ref{sec:regular} to depth $d$ regular formulas, and show how to reduce them to depth-$4$ formulas and obtain a hitting set for this class. In the short Section~\ref{sec:lower-bounds} we spell out briefly how, using known observations about the relation between PIT and lower bounds, we obtain our lower bounds for multilinear formulas. Finally, in Section~\ref{sec:open} we discuss some open problems and future directions for research.

%%%%%%%%%%%%%%%%%%%%%%%%%%%%%%%%%%%%%%%%%%%%%%%%%%%%
%%%%%%%%%%%%%%%%%%%%%%%%%%%%%%%%%%%%%%%%%%%%%%%%%%%%
%%%%%%%%%%%%%%%%%%%%%%%%%%

%%%%%%%%%%%%%%%%%%%%%%%%%%%%%%%%%%%%%%%%%%%%%%%%%%%%
%%%%%%%%%%%%%%%%%%%%%%%%%%%%%%%%%%%%%%%%%%%%%%%%%%%%
%%%%%%%%%%%%%%%%%%%%%%%%%%
\section{Preliminaries}
\label{sec:prelim}

In this section, we establish notation, some definitions and useful lemmas that will be used throughout the paper.

%%%%%%%%%%%%%%%%%%%%%%%%%%%%%%%%%%%%%%%%%%%%%%%%%%%%
\subsection{Notations and Basic Definitions}\label{sec:notation}

For any positive integer $n$, we denote by $[n]$ the set of integers from $1$ to $n$, and by $\binom{[n]}{\le r}$ the family of subsets 
$A \subseteq [n]$ such that $|A| \le r$. We often associate a subset $A \subseteq [n]$ with a subset of variables 
$\var(A) \subseteq \{ x_1,\ldots,x_n \}$ in a natural way (i.e., $\var(A)=\{x_i \mid i \in A \}$). In those cases we make no 
distinction between the two and use $A$ to refer to $\var(A)$.
Additionally, if $A$ and $B$ are disjoint subsets of $[n]$, we denote their disjoint union by $A \sqcup B$. For a vector $v\in\F^n$ we denote with $v|_A$ the restriction of $v$ to the coordinates $A$.

In order to improve the readability of the text, we omit floor and ceiling notations.

Let $f(x_1, \ldots, x_n) \in \F[x_1, \ldots, x_n]$ be a polynomial. We will denote by $\D_x f$ the formal derivative
of $f$ with respect to the variable $x$, and by $f|_{x=0}$ the polynomial obtained from $f$ by setting $x = 0$. Moreover, 
if $A \subseteq [n]$, we will denote by $\D_A f$ the polynomial obtained when taking the formal derivative of $f$ with respect
to all variables in $A$. In a similar fashion, we denote by $f|_{A=0}$ the polynomial obtained when we set all the variables in $A$
to zero, and more generally, if $|A|=r$ and $\valpha = (\alpha_1,\ldots,\alpha_r) \in \F^{r}$, $f|_{A=\valpha}$ will denote the restriction of $f$ obtained when setting the $i$'th variable in $A$ to $\alpha_i$, for $1 \le i \le r$. 

In addition to the conventions above, the following definitions will be very useful in the next sections.

\begin{define}[Variable Set and Non-trivial Variable Set]
	Let $f(x_1, \ldots, x_n) \in \F[x_1, \ldots, x_n]$ be a polynomial. Define the variable set ($\var$) and the non-trivial
	variable set ($\var^*$) as follows:
	\begin{align*}  
		\var(f)     &= \{ x \in \{x_1, \ldots, x_n\} \mid \D_x f \neq 0  \}  \\
		\var^*(f) &= \{ x \in \{x_1, \ldots, x_n\} \mid \D_x f \neq 0 \text{ and } f|_{x=0} \neq 0  \}.
	\end{align*}
	That is, the variable set of a polynomial $f$ is the set of variables $x \in \{x_1, \ldots, x_n\}$ that appear in the
	representation of $f$ as a sum of monomials, whereas the non-trivial variable set is the set of variables of $f$ 
	that do not divide it. 
\end{define}

We shall say that $f$ has a small support if $\var(f)$ (or $\var^*(f)$) is not too large. 

\begin{define}[Monomial Support and Sparsity]
	Let $f(x_1, \ldots, x_n) \in \F[x_1, \ldots, x_n]$ be a polynomial. We define the \emph{monomial support} of $f$, written 
	$\mon(f)$, as the set of monomials that have a non-zero coefficient in $f$. In addition, we define the sparsity of $f$, written 
	$\| f \|$, as the size of the set $\mon(f)$, that is,
	$$  \| f \| = |\mon(f)|. $$
	In other words, the sparsity of $f$ is the number of monomials that appear with a non-zero coefficient in $f$.
\end{define}

In the constructions of our hitting sets we will need to combine assignments to different subsets of variables. The following notation will be useful. For a partition of $[n]$, $T_1\sqcup T_2\sqcup \cdots \sqcup T_m =[n]$, and sets $\cH_i \subseteq \F^{|T_i|}$, we denote with $\cH_1^{T_1}\times \cdots \times \cH_m^{T_m}$ the set of all vectors of length $n$ whose restriction to $T_i$ is an element of $\cH_i$:
$$\cH_1^{T_1}\times \cdots \times \cH_m^{T_m} = \{ v \in\F^n \mid \forall i\in [m], \; v|_{T_i} \in \cH_i  \}.$$ 

%%%%%%%%%%%%%%%%%%%%%%%%%%%%%%%%%%%%%%%%%%%%%%%%%%%%
\subsection{Depth-$3$ and Depth-$4$ Formulas}

We define some special classes of depth-$3$ and depth-$4$ formulas that will be used throughout this paper.

\begin{define}[Restricted Top Fan-in]
	Let $\Phi$ be a multilinear depth-$4$ formula. We say that $\Phi$ is a multilinear $\tfSPSP{M}$ 
	formula if it is of the form $\sum_{i=1}^m\prod_{j=1}^{t_i} f_{i,j}$, where $m \le M$.
	If, in addition to the conditions above, we have that each $f_{i,j}$ is a linear function, that is, $\Phi$ is actually a 
	depth-$3$ formula, we will say that $\Phi$ is a multilinear $\tfSPS{M}$ formula.
\end{define}

Our next definition considers the case where polynomials computed at the bottom layers do not contain too many variables, that is, they have small support. 

\begin{define}[Restricted Top Fan-in and Variable Set]
	Let $\Phi$ be a multilinear depth-$4$ formula. We say that $\Phi$ is a multilinear $\tfrsSPSP{M}{\tau}$ 
	formula if it is of the form $\sum_{i=1}^m\prod_{j=1}^{t_i} f_{i,j}$, where $m \le M$ and for each $1 \le i \le s$ 
	we have that
	\begin{enumerate}[(i)]
		\item $|\var(f_{i,j})| \le \tau$ for all $1 \le j \le t_i$
		\item $\var(f_{i,j_1}) \cap \var(f_{i,j_2}) = \emptyset$, for any $j_1 \neq j_2$.
	\end{enumerate}
	If, in addition to the conditions above, we have that each $f_{i,j}$ is a linear function, that is, $\Phi$ is actually a 
	depth-$3$ formula, we will say that $\Phi$ is a multilinear $\tfrsSPS{M}{\tau}$ formula.
\end{define}

Since the formula will be given to us as a black-box, we can make some assumptions about it,
which will help us to preserve non-zeroness when taking derivatives or setting variables to zero. To this end, we define a notion of simplicity of depth-$4$ formulas,\footnote{Note that this is not the same notion as used, e.g., in \cite{DvirShpilka06}. } and prove that we can assume without loss of generality that any input formula is simple. 

\begin{define}
	Let $f(x_1, \ldots, x_n) \in \F[x_1, \ldots, x_n]$ be a multilinear polynomial and let 
	$$ \Phi = \sum_{i=1}^M\prod_{j=1}^{t_i} f_{i,j} $$ 
	be a multilinear depth-$4$ formula computing $f$. We say that $\Phi$ is a \emph{simple} multilinear
	depth-$4$ formula if for each variable $x \in \var(f)$ that divides $f$, it must be the case that
	for every $1 \le i \le M$, there exists $j \in [t_i]$ such that $f_{i,j} = x$.
\end{define}

In words, $\Phi$ is simple if whenever a variable $x$ divides $f$, it also divides every product gate. 
The following proposition tells us that we can indeed assume, without loss of generality, that any multilinear depth-$4$ formula given to us
is a simple formula. 

\begin{prop}\label{prop:simple}
	If $\Phi$ is a depth-$4$ multilinear $\tfSPSP{M}$ formula computing $f(x_1, \ldots, x_n)$, then $f$ can be computed 
	by a simple depth-$4$ multilinear $\tfSPSP{M}$ formula $\Psi$ where $|\Psi| \le |\Phi|.$ 
\end{prop}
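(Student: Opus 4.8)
The plan is to repair the formula one problematic variable at a time. Call $x$ a \emph{divisor} of $f$ if $x \in \var(f)$ and $x \mid f$, and let $S$ be the set of divisors of $f$; being simple is exactly the requirement that for every $x \in S$ and every $i$ some factor $f_{i,j}$ equals $x$. The core step is a single-variable statement: given any multilinear $\tfSPSP{M}$ formula $\Phi=\sum_{i=1}^{m}\prod_{j=1}^{t_i}f_{i,j}$ computing $f$ and any divisor $x$ of $f$, one can produce another multilinear $\tfSPSP{M}$ formula $\Phi'$ computing $f$, with $|\Phi'|\le|\Phi|$, such that (a) every product gate of $\Phi'$ has a factor equal to $x$, and (b) any variable $y\neq x$ that was already a literal factor of every product gate of $\Phi$ remains a literal factor of every product gate of $\Phi'$. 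Granting this, one obtains $\Psi$ by applying the step to the elements of $S$ one after another in any order: property (b) ensures the divisors handled in earlier rounds stay installed, the top fan-in and the size never increase, and after $S$ is exhausted $\Psi$ is simple and still computes $f$ (the divisor set of $f$ cannot change, since $f$ does not).

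To prove the single-variable statement, first recall that in a multilinear formula the factors $f_{i,1},\ldots,f_{i,t_i}$ of a product gate are supported on pairwise disjoint variable sets (otherwise, since $\F[X]$ is a domain, a monomial of $x$-degree at least two survives); hence for each $i$ at most one factor involves $x$. Split the gates into the set $C$ of gates having an $x$-involving factor $f_{i,j_i}=x\cdot a_i+b_i$, where $a_i=\D_x f_{i,j_i}\neq 0$ and $b_i=f_{i,j_i}|_{x=0}$ are multilinear, $x$-free, and supported inside $\var(f_{i,j_i})\setminus\{x\}$, and the complementary set of entirely $x$-free gates. Since $x\mid f$ we have $f|_{x=0}=0$, and expanding this identity shows that the $x$-free gates together with $\sum_{i\in C}b_i\prod_{j\neq j_i}f_{i,j}$ sum to zero; subtracting this vanishing quantity from $f=\sum_i\prod_j f_{i,j}$ leaves
\[
f=\sum_{i\in C}x\cdot a_i\cdot\prod_{j\neq j_i}f_{i,j}.
\]
Let $\Phi'$ be the depth-$4$ formula read off the right-hand side, the $i$-th product gate ($i\in C$) having factors $x$, $a_i$, and the $f_{i,j}$ with $j\neq j_i$. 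These factors sit on pairwise disjoint variable blocks (the block of $f_{i,j_i}$ splits into $\{x\}$ and the support of $a_i$), so $\Phi'$ is multilinear; its top fan-in is $|C|\le m\le M$; and $x$ is a literal factor of every gate. For (b): a literal factor $y\neq x$ of gate $i$ lives in some $f_{i,j'}$ with $\var(f_{i,j'})=\{y\}\not\ni x$, so $j'\neq j_i$ and that factor is carried to $\Phi'$ unchanged, while every gate of $\Phi'$ is indexed by $C$.

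Finally, the size bound $|\Phi'|\le|\Phi|$ holds because we only discard gates (the $x$-free ones) and, in each surviving gate, keep the factors $f_{i,j}$ ($j\neq j_i$) verbatim while replacing the single factor $f_{i,j_i}$ by the two factors $x$ and $a_i=\D_x f_{i,j_i}$; under the natural (leaf-counting) size measure for the bottom $\Sigma\Pi$ subformulas, splitting the variable $x$ off a factor never increases its size, since each monomial of $f_{i,j_i}$ containing $x$ loses the occurrence of $x$ in passing to $a_i$ and there is at least one such monomial to pay for the new leaf $x$. The part of the argument I expect to require the most care is precisely this bookkeeping: one must verify that the rewriting done for $x$ never touches the literal factors already installed for earlier divisors (this is property (b), and it rests on the simple point that a literal factor $y\neq x$ cannot be the $x$-involving factor $f_{i,j_i}$), and one must choose the size convention so that peeling $x$ off a factor is genuinely non-increasing, the only tight case being when $f_{i,j_i}$ is itself a monomial divisible by $x$. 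Everything else — that $\Phi'$ computes $f$, is multilinear, and has top fan-in at most $M$ — is immediate from the displayed identity together with the disjointness of the variable blocks.
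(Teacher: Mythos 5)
Your proposal is correct and follows essentially the same route as the paper: isolate the (at most one, by disjointness of factor supports) $x$-containing factor in each gate, use $f|_{x=0}=0$ to discard the $x$-free residue, keep only $\sum_{i} x\cdot(\D_x f_{i,j_i})\cdot\prod_{j\neq j_i}f_{i,j}$, and iterate over all divisors of $f$. Your explicit verification that earlier-installed literal factors survive later rounds, and your leaf-counting size bookkeeping, are points the paper leaves implicit, but they do not change the argument.
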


\begin{proof}
	Since $\Phi$ is a $\tfSPSP{M}$ formula, we have that 
	$$  f = \ds\sum_{i=1}^M\prod_{j=1}^{t_i} f_{i,j}. $$ 
	Let $x \in \var(f)$ be such that $x \mid f$.
	Notice that we can write each $f_{i,j}$ in the following form: 
	$$  f_{i,j} = x \cdot g_{i,j} + h_{i,j}, \ \text{ where } x \not\in \var(g_{i,j}) \cup \var(h_{i,j}). $$
	Moreover, observe that if $x \not\in \var(f_{i,j})$, then we must have that $f_{i,j} = h_{i,j}$. Since the formula is
	multilinear, for each $i \in [M]$ there exists at most one $j$ such that $x \in \var(f_{i,j})$. If such $j$ exists,
	we might as well assume without the loss of generality that $j=1$.
	
	Let $A = \{ i : 1 \le i \le M, \text{ and } x \in \var(f_{i,1})  \}$ and $B = [M] \setminus A$.
	Now, rewriting the formula above for $f$, we get:
	\begin{align*}  
	f = \ds\sum_{i=1}^M\prod_{j=1}^{t_i} f_{i,j} &=  \ds\sum_{i \in A} f_{i,1} \cdot \prod_{j=2}^{t_i} f_{i,j} + 
	\ds\sum_{i \in B}\prod_{j=1}^{t_i} f_{i,j} \\ 
	&=\ds\sum_{i \in A} (xg_{i,1} + h_{i,1}) \cdot \prod_{j=2}^{t_i} h_{i,j}  + \ds\sum_{i \in B}\prod_{j=1}^{t_i} h_{i,j}  \\
	&=\ds\sum_{i \in A} xg_{i,1} \cdot \prod_{j=2}^{t_i} h_{i,j} + \ds\sum_{i \in A} h_{i,1} \cdot \prod_{j=2}^{t_i} h_{i,j} + 
	\ds\sum_{i \in B}\prod_{j=1}^{t_i} h_{i,j}. 
	\end{align*}
	Since $x \mid f$, it follows that $f = xg$. Hence, we must have that (in the above equation)
	$$ \ds\sum_{i \in A} h_{i,1} \cdot \prod_{j=2}^{t_i} h_{i,j} + \ds\sum_{i \in B}\prod_{j=1}^{t_i} h_{i,j} = 0 $$
	and therefore
	$$  f = \ds\sum_{i \in A} xg_{i,1} \cdot \prod_{j=2}^{t_i} h_{i,j}.  $$
	Since $|A| \le M$ and $\| g_{i,1} \| \le \| f_{i,1} \|, \ \| h_{i,j} \| \le \| f_{i,j} \|$ for every $i \in [M]$ and $2 \le j \le k_{i}$, the
	formula 
	\[
	\Phi' = \sum_{i \in A} x \cdot g_{i,1} \cdot \prod_{j=2}^{t_i} h_{i,j} = \sum_{i \in A} \prod_{j=2}^{t_i} x \cdot g_{i,1} \cdot h_{i,j}
	\] is a multilinear $\tfSPSP{M}$ formula
	computing $f$, of size $|\Phi'| \le |\Phi|$ and such that $x$ appears as a polynomial at each product gate.
	
	By repeating this process for each variable $\var(f) \setminus \var^*(f)$, we get our $\tfSPSP{M}$ formula $\Psi$.
	Since at each step we preserve the invariant that the size of the formula does not increase, we must have that
	$|\Psi| \le |\Phi|.$
\end{proof}

As a corollary, together with the simple observation that any derivative or restriction of a multilinear formula results in a multilinear formula of at most the same size, we obtain that 
partial derivatives or restrictions of a multilinear polynomial can also be computed by simple formulas.

\begin{cor}
	If $\Phi$ is a depth-$4$ multilinear $\tfSPSP{M}$ formula computing $f(x_1, \ldots, x_n)$, then for any disjoint sets 
	$A, B \subseteq \var(f)$, $\D_A f|_{B=0}$ can be computed by a simple depth-$4$ multilinear $\tfSPSP{M}$ formula 
	$\Psi$ where $|\Psi| \le |\Phi|.$ We will refer to $\Psi$ as $\D_A \Phi|_{B=0}$.
\end{cor}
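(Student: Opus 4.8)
The plan is to derive this as a direct consequence of Proposition~\ref{prop:simple} together with the observation that differentiation and restriction to zero are operations that do not increase the size of a multilinear formula and keep us within the $\tfSPSP{M}$ class. First I would observe that if $\Phi = \sum_{i=1}^M \prod_{j=1}^{t_i} f_{i,j}$ is a multilinear depth-$4$ formula, then for a single variable $x$, both $\D_x \Phi$ and $\Phi|_{x=0}$ are again multilinear depth-$4$ formulas with top fan-in at most $M$ and size at most $|\Phi|$. For $\Phi|_{x=0}$ this is immediate: substituting $x=0$ into each $f_{i,j}$ either kills a product gate entirely (which only decreases the top fan-in) or replaces $f_{i,j}$ by a polynomial of no larger sparsity, and the variable-disjointness condition within each product gate is preserved. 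For $\D_x \Phi$, by multilinearity and the product rule, in each product gate $\prod_j f_{i,j}$ at most one factor, say $f_{i,1}$, contains $x$, so $\D_x(\prod_j f_{i,j}) = (\D_x f_{i,1})\cdot \prod_{j \ge 2} f_{i,j}$; since $\D_x f_{i,1}$ is a polynomial not containing $x$ with sparsity at most $\|f_{i,1}\|$, this is again a product of variable-disjoint polynomials, and the top fan-in has not grown.

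Next I would iterate: applying the single-variable observation repeatedly over all variables in $A$ (in any order) shows $\D_A \Phi$ is a multilinear $\tfSPSP{M}$ formula of size at most $|\Phi|$, and then applying the single-variable restriction observation repeatedly over all variables in $B$ shows $(\D_A \Phi)|_{B=0}$ is a multilinear $\tfSPSP{M}$ formula of size at most $|\Phi|$. Here I should note that $A$ and $B$ being disjoint guarantees these operations are well-defined and commute in the appropriate sense: differentiating with respect to variables in $A$ and then zeroing variables in $B$ yields exactly $\D_A f|_{B=0}$, since no variable is both differentiated and set to zero. At this point we have a (not necessarily simple) multilinear $\tfSPSP{M}$ formula computing $\D_A f|_{B=0}$ of size at most $|\Phi|$.

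Finally I would invoke Proposition~\ref{prop:simple} on this formula: it produces a \emph{simple} multilinear $\tfSPSP{M}$ formula $\Psi$ computing the same polynomial $\D_A f|_{B=0}$, with $|\Psi|$ at most the size of the formula we fed in, which is at most $|\Phi|$. We name this formula $\D_A \Phi|_{B=0}$, as stated. The only mild subtlety — and the closest thing to an obstacle — is being careful that the sequence of elementary operations genuinely stays inside the $\tfSPSP{M}$ model at every step (in particular that variable-disjointness inside product gates is never violated and that the top fan-in is only ever decreased, never increased); but each of these is a one-line check, so the proof is essentially bookkeeping on top of Proposition~\ref{prop:simple}. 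No new combinatorial idea is needed.
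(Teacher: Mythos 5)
Your proof is correct and follows the same route as the paper, which derives the corollary directly from Proposition~\ref{prop:simple} together with the observation that derivatives and restrictions of a multilinear formula stay multilinear, stay within the $\tfSPSP{M}$ class, and do not increase size. The single-variable checks you spell out (product rule hitting at most one factor per gate, preservation of variable-disjointness) are exactly the bookkeeping the paper leaves implicit.
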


Therefore, from now on we will always assume that any depth-$4$ multilinear formula given to us is a simple formula.

%%%%%%%%%%%%%%%%%%%%%%%%%%%%%%%%%%%%%%%%%%%%%%%%%%%%
\subsection{ROABPs for Products of Sparse Polynomials}

Another important model that we need for our constructions is that of Algebraic Branching Programs. 

\begin{define}[Nisan~\cite{Nisan91}]
	An algebraic branching program (ABP) is a directed acyclic graph with one vertex $s$ of in-degree zero (the {\em source}) and
	one vertex $t$ of out-degree zero (the {\em sink}). The vertices of the graph are partitioned into levels labeled $0, 1, \ldots, D$.
	Edges in the graph can only go from level $\ell-1$ to level $\ell$, for $\ell \in [D]$. The source is the only vertex at level $0$
	and the sink is the only vertex at level $D$. Each edge is labeled with an affine function in the input variables. The {\em width} of an
	ABP is the maximum number of nodes in any layer, and the {\em size} of an ABP is the number of vertices in the ABP.
	
	Each path from $s$ to $t$ computes the polynomial which is the product of the labels of the path edges, and the ABP 
	computes the sum, over all $s\to t$ paths, of such polynomials.  
\end{define}

\begin{define}[Ordered Read-Once Algebraic Branching Programs]
	A {\em Ordered Read-Once Algebraic Branching Program (ROABP)} in the variable set $\{x_1, \ldots, x_D\}$ is an ABP of 
	depth $D$, such that each edge between layer $\ell-1$ and $\ell$ is labeled by a univariate polynomial in $x_\ell$.
	%$x_{\pi(\ell)}$ of degree $< n$, where $\pi : [D] \to [D]$ is a permutation.
\end{define}

In this section we show an elementary construction of ROABPs for a very specific class of polynomials. This construction 
however will be useful in the upcoming sections.

\begin{lem}
\label{lem:roabp-product-sparse-polys}
Let $\F$ be a field, and $f(y_1,\ldots,y_m) = \sum_{i=1}^M \prod_{j=1}^{t_i} f_{i,j}$ be a multivariate polynomial over $\F$, 
such that for every $1 \le i \le M$: 
\begin{enumerate}
\item At most $k$ different  $1 \le j \le t_i$, satisfy $ | \var(f_{i,j}) | > 1$.
\item For every $1\leq j \leq t_i$, $ \| f_{i,j} \| \leq s$.
\end{enumerate}
Then $f$ can be computed by an ROABP of width at most $M \cdot s^{k}$. 
\end{lem}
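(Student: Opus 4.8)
The plan is to construct the ROABP directly by combining standard facts about ROABPs: first, that a sparse polynomial in any number of variables (with a fixed variable order) has a small-width ROABP; second, that the product of ROABPs on disjoint variable sets can be realized by concatenating them; and third, that a sum of $M$ ROABPs (all respecting a common variable order) is computed by an ROABP whose width is the sum of the widths. Fix the variable order $y_1 < y_2 < \cdots < y_m$ once and for all; every ROABP we build will be ordered with respect to this order.

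First I would analyze a single product gate $\prod_{j=1}^{t_i} f_{i,j}$. Split the factors into the ``heavy'' ones --- those $f_{i,j}$ with $|\var(f_{i,j})| > 1$, of which there are at most $k$ by hypothesis~1 --- and the ``light'' ones, each of which is either a field constant or an affine function in a single variable $y_\ell$. Since the formula is (implicitly) multilinear/variable-disjoint across $j$ within a fixed $i$ --- more precisely, we only need that each variable $y_\ell$ appears among the factors $f_{i,1}, \ldots, f_{i,t_i}$ in a controlled way --- the key observation is that the product of all the light factors contributes at most a univariate polynomial to each edge layer, hence a width-$1$ ROABP. For the heavy factors: each $f_{i,j}$ is $s$-sparse, and an $s$-sparse polynomial has an ROABP of width at most $s$ in any fixed variable order (group its monomials and read the variables in order, carrying a partial-evaluation state of size $\le s$). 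Multiplying ROABPs on disjoint variable sets just concatenates them along the common order, and the width of the product is the product of the widths; since there are at most $k$ heavy factors each of width $\le s$, and one combined light factor of width $1$, the product gate $\prod_{j=1}^{t_i} f_{i,j}$ has an ROABP of width at most $s^k$. (If factors are not variable-disjoint one can still interleave them, but the natural reading of the hypotheses, matching the intended application, is that within a product gate the $f_{i,j}$ are on disjoint variable sets; I would state this or derive it from multilinearity.)

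Next I would take the sum over $i \in [M]$. Each summand has an ROABP of width $\le s^k$ on the common variable order $y_1 < \cdots < y_m$, and the standard construction for the sum of ROABPs --- place the $M$ branching programs ``in parallel,'' identifying their sources and their sinks layer by layer --- yields an ROABP for $f = \sum_{i=1}^M \prod_{j=1}^{t_i} f_{i,j}$ of width at most $M \cdot s^k$. This gives exactly the claimed bound.

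The main obstacle --- really the only subtlety --- is bookkeeping the width of the \emph{product} of the per-factor ROABPs correctly, i.e.\ verifying that multiplying variable-disjoint ROABPs multiplies widths rather than doing something worse, and handling the case where the $f_{i,j}$ within a gate might share variables (which would force an interleaving argument and could in principle inflate the width). I expect that in the regime of interest the variable-disjointness holds by construction, so the clean bound $s^k$ per gate goes through; if not, one restricts attention to the at-most-$k$ heavy factors and a single univariate ``light'' product per layer, and the same multiplicative count applies. Everything else (sparse $\Rightarrow$ width-$s$ ROABP; sum $\Rightarrow$ additive width) is routine and can be cited or stated as a sub-claim.
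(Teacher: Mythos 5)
Your proof is correct and follows essentially the same route as the paper's: split each product gate into the at most $k$ ``heavy'' ($s$-sparse, multivariate) factors and the ``light'' univariate ones, obtain a width-$s^k$ ROABP for each gate, and place the $M$ gates in parallel to get width $M\cdot s^k$. The only cosmetic difference is that the paper realizes the per-gate width bound by first expanding the product of the heavy factors into at most $s^k$ monomials and giving each monomial (times the light factors) a width-$1$ program, rather than taking the product of $k$ width-$s$ ROABPs as you do; the two constructions coincide up to bookkeeping.
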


\begin{proof}
Assume without the loss of generality that for every $i$ there is $k_i\leq k$ such that $f_{i,1},\ldots,f_{i,k_i}$ are those polynomials that contain more than a single variable. 
Note that the sparsity of every product $g_i \eqdef \prod_{j=1}^{k_i} f_{i,j}$ is at most $s^k$. We construct an ROABP of width 
$s^k$ for each $g_i$. The final ROABP is constructed by connecting the ROABPs for all $M$ products in parallel.

Fix $1 \le i \le M$. Expanding the product $g_i = \prod_{j=1}^{k_i} f_{i,j}$ we get at most $s^k$ monomials. Now, multiply each 
such monomial with the remaining functions in the $i$'th gate, $\prod_{j=k_i+1}^{t_i}f_{i,j}$. Notice that the multiplicands in each 
such term can be reordered so that first $x_1$ appears then $x_2$ etc. Thus, we can construct a ROABP of width $s^k$ for 
computing each such product of a monomial with $\prod_{j=k_i+1}^{t_i}f_{i,j}$. Then, connecting all those ROABPs in parallel we get 
a ROABP of width $s^k$ for the $i$'th multiplication gate. Connecting in a similar fashion all ROABPs for the different multiplication 
gates we get a ROABP of width (at most) $M\cdot s^k$ computing $f$. 
%Fix $1 \le i \le M$. The ROABP for $g_i$ is constructed simply by expanding the product and having $s^g$ disjoint paths in the 
%branching program, each computing a monomial in the variable set $\cup_{j=1}^{k_i}\var(f_{i,j})$ times the product $\prod_{j=k_i+1}^{t_i}f_{i,j}$. More formally, on every level $ 1 \le \ell \le m$ we have $s^k$ vertices $v_1^{\ell},
%\ldots,v_{s^t}^{\ell}$, one corresponding to each monomial in the product. For every $1 \le r \le s^t$, we connect the vertices 
%$v_{r}^{\ell} \rightarrow v_r^{\ell+1}$, by an edge labelled $y_\ell$ if the variable $y_\ell$ appears in the $r$-th monomial, or by an 
%edge labelled 1 otherwise. For every $1 \le r \le s^t$, we connect the source $s$ to $v_r^{1}$ by an edge which is labelled by the 
%coefficient of the $r$-th monomial, and connect all vertices on level $m$ to the sink.
%This ABP is indeed read-once , and it is immediate to see, by definition, that it computes $g_i$.
\end{proof}

%%%%%%%%%%%%%%%%%%%%%%%%%%%%%%%%%%%%%%%%%%%%%%%%%%%%
\subsection{Hashing}

In this section we present the basic hashing tools that we will use in our construction. We first recall the notion of a $k$-wise independent hash family.

\begin{define}\label{def:k-hash}
A family of hash functions ${\cal F}=\{ h:[n] \to [m] \}$ is $k$-wise independent if for any $k$ distinct elements $(a_1, \dots, a_k) \in [n]^k$ and any $k$ (not necessarily distinct) elements $(b_1, \dots, b_k) \in [m]^k$, we have:
$$\Pr_{h \in \cF} \left[ h(a_1)=b_1 \wedge \cdots \wedge h(a_k)=b_k \right] = m^{-k}.$$
\end{define}

Our next lemma studies the case where several sets are hashed simultaneously by the same hash function.
We present the proof in a general form and only later, in the application, fix the parameters. 

\begin{lem}
\label{lem:hash}
Let $0<\delta<\eps$, and $n,M \in \N$ such that $M=2^{n^{\delta}}$. Assume $\cA_1,\ldots,\cA_M$ are families of pairwise disjoint subsets of $[n]$
%, for $M = 2^{n^{\delta}}$, for some $\delta<\eps$, 
such that for every $1 \le i \le M$ and every $A \in \cA_i$, $|A| \le n^{1-\eps}$.
Let 
%$\delta = \log \log M / \log n$ and pick 
$\gamma > 0$ be such that $\gamma \ge (\eps-\delta)/2$.
Let $\cF$ be a family of $k$-wise independent hash functions from $[n]$ to $[m]$ for $k =n^{\delta} + 2\log n $ and $m=10 n^{1-\eps+\gamma}$.

Then there exists $h \in \cF$ such that for every $ 1 \le i \le M$ and every $1 \le j \le m$, both of the following conditions hold:

\begin{enumerate}

\item \label{item:no-intersection-larger-than-k} For every set $A \in \cA_i$, $\left| h^{-1}(j) \cap A \right| \le k$.

\item \label{item:at-most-k-intersect} The number of sets $A \in \cA_i$ such that $\left| h^{-1}(j) \cap A \right| > 1$ is at most $k \log n$.

\end{enumerate}

\end{lem}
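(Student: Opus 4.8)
The plan is to use the probabilistic method: fix a random $h \in \cF$, show that each of the two bad events has small probability (after a union bound over all $i \in [M]$, $j \in [m]$, and the relevant sets $A$), and conclude that some $h$ avoids all bad events. Throughout, the key point is that $\cF$ is $k$-wise independent with $k = n^{\delta} + 2\log n$, so as long as we only examine the images of at most $k$ elements of $[n]$ at a time, they behave as if fully independent and uniform in $[m]$.

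For the first condition, fix $i$, $j$, and a set $A \in \cA_i$. If $|A| \le k$ this is automatic, so assume $|A| > k$ and consider any fixed subset $S \subseteq A$ of size exactly $k$. By $k$-wise independence, $\Pr[h(a) = j \text{ for all } a \in S] = m^{-k}$. Taking a union bound over the $\binom{|A|}{k} \le \binom{n^{1-\eps}}{k} \le n^{(1-\eps)k}$ choices of $S$, the probability that $|h^{-1}(j) \cap A| \ge k$ (hence $> k$, but this event is what we bound) is at most $n^{(1-\eps)k} \cdot m^{-k} = \left(n^{1-\eps}/m\right)^k = (10 n^{\gamma})^{-k}$, which is doubly exponentially small. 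Summing over the at most $M \cdot m \cdot M$ (bounding $|\cA_i| \le M$, or more crudely $\le 2^n$) relevant triples $(i,j,A)$ still leaves this $\ll 1$, using $k \ge 2\log n$ and $M = 2^{n^\delta}$ with $\delta < \eps \le \eps - \gamma + \gamma$, i.e. the exponent comparison $k \log(10 n^\gamma) \gg n^\delta + \log(\text{number of sets})$ works out because $\gamma \geq (\eps - \delta)/2 > 0$ and $k \ge n^\delta$.

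For the second condition, fix $i$ and $j$. For each $A \in \cA_i$, let $X_A$ be the indicator that $|h^{-1}(j) \cap A| > 1$, i.e. that at least two elements of $A$ land in bucket $j$. Since the sets in $\cA_i$ are pairwise disjoint and each has size $\le n^{1-\eps}$, we have $\Pr[X_A = 1] \le \binom{|A|}{2} m^{-2} \le n^{2(1-\eps)} / m^2 = (10 n^\gamma)^{-2} \cdot n^{-2\gamma + 2\gamma}$; more simply $\Pr[X_A = 1] \le n^{2-2\eps}/(100 n^{2-2\eps+2\gamma}) = 1/(100 n^{2\gamma}) \le 1/100$. Hence $\E\left[\sum_{A \in \cA_i} X_A\right] \le |\cA_i|/(100 n^{2\gamma})$. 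The disjointness of the sets in $\cA_i$ means that the pairs of elements witnessing different $X_A$'s are themselves disjoint, and $k$-wise independence with $k$ large means any collection of up to $\lfloor k/2 \rfloor$ of the $X_A$'s are mutually independent; so a Chernoff-type tail bound (for sums of $\lfloor k/2\rfloor$-wise independent indicators, e.g. the standard moment bound) gives that $\sum_A X_A$ exceeds $k\log n$ with probability at most something like $2^{-\Omega(k \log n)}$ — again doubly exponentially small — provided $k \log n$ is a constant factor above the mean $|\cA_i|/(100 n^{2\gamma})$. Here we use $|\cA_i| \le$ (total sparsity) which will be at most something like $M \cdot \poly(n) \le 2^{n^\delta}\poly(n)$, and $n^{2\gamma} \ge n^{\eps - \delta}$, so the mean is at most $2^{n^\delta} n^{-(\eps-\delta)}\poly(n)$, which is $\le k \log n$ once we absorb the bound appropriately — actually the cleanest route is: the mean is at most $|\cA_i|/n^{2\gamma}$ and we want this comfortably below $k\log n = (n^\delta + 2\log n)\log n$; since $|\cA_i|$ is a family of pairwise disjoint subsets of $[n]$ we trivially have $|\cA_i| \le n$, so the mean is at most $n^{1-2\gamma} \le n$, and we take the tail bound from there. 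Union bounding over the $M \cdot m$ pairs $(i,j)$ again costs only a $2^{n^\delta}\poly(n)$ factor, which is swamped.

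The main obstacle is the second condition: one must set up a tail inequality for $\sum_{A \in \cA_i} X_A$ using only $k$-wise (not full) independence, and verify that the tail threshold $k \log n$ genuinely dominates the expectation by the right margin so that the doubly-exponential decay beats the $2^{n^\delta}$ union-bound loss. This is where the hypotheses $\gamma \ge (\eps - \delta)/2$ and $k = n^\delta + 2\log n$ are used in tandem: $\gamma$ controls the per-set collision probability (shrinking the mean by $n^{2\gamma} \ge n^{\eps-\delta}$), while the $n^\delta$ term in $k$ guarantees the threshold is large enough, and the additive $2\log n$ in $k$ supplies the extra slack to kill the $\poly(n)$ union-bound factors. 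Once the right concentration statement for bounded-independence sums is invoked (a standard $t$-th moment argument with $t = \Theta(k)$), the rest is bookkeeping, and the probabilistic method delivers the desired $h$.
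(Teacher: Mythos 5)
Your treatment of item 1 is essentially the paper's: a union bound over $k$-subsets of $A$ giving $(10n^{\gamma})^{-k}$, then a union bound over $(i,j)$ and over the at most $n$ (disjoint, nonempty) sets per family. (Your fallback count of $2^n$ sets per family would \emph{not} be absorbed by $(10n^{\gamma})^{-k}\approx 2^{-O(n^{\delta}\log n)}$, so you genuinely need disjointness to cap $|\cA_i|$ at $n$.) The real gap is in item 2. You bound $\E\bigl[\sum_{A\in\cA_i}X_A\bigr]$ by $|\cA_i|\cdot\max_A\Pr[X_A=1]\le n\cdot(100n^{2\gamma})^{-1}=n^{1-2\gamma}/100$ and then ask a tail inequality to force the sum below $k\log n$. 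But $n^{1-2\gamma}$ is in general far \emph{larger} than $k\log n\approx n^{\delta}\log n$: with $\gamma=(\eps-\delta)/2$ one gets $n^{1-2\gamma}=n^{1-\eps+\delta}$, which dwarfs $n^{\delta}\log n$ whenever $\eps$ is bounded away from $1$ (in the applications $\eps$ is around $1/3$ or $2/3$). No concentration inequality can place a sum below its mean, so the argument as written fails. The repair must exploit the trade-off between set sizes and set counts: sets with large collision probability are large, and by disjointness there are few of them. Concretely, $\E\bigl[\sum_A X_A\bigr]\le m^{-2}\sum_A\binom{|A|}{2}\le m^{-2}\cdot n^{1-\eps}\sum_A|A|\le n^{2-\eps}/(100n^{2-2\eps+2\gamma})=n^{\eps-2\gamma}/100\le n^{\delta}/100\le k/100$, which is what makes the threshold $k\log n$ reachable. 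The paper implements exactly this trade-off by partitioning each $\cA_i$ into $O(\log n)$ dyadic buckets according to set size, showing that each bucket contributes at most $k/2$ colliding sets with high probability, and summing over buckets.

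A secondary flaw: the indicators $X_A$ are \emph{not} mutually independent in groups of $\lfloor k/2\rfloor$ under mere $k$-wise independence of $h$, since each $X_A$ depends on the values of $h$ on all $|A|$ (up to $n^{1-\eps}$) elements of $A$, not on just two of them. What does hold, and what the paper uses, is the product upper bound $\Pr\bigl[\bigwedge_{l=1}^{t}\{|A_l\cap h^{-1}(j)|\ge 2\}\bigr]\le\prod_{l=1}^{t}\binom{|A_l|}{2}m^{-2}$ for $t\le k/2$, obtained by choosing one witnessing pair per set and applying $k$-wise independence to the $2t\le k$ chosen elements. That bound suffices for the moment computation you have in mind, but it must be argued this way rather than asserted as independence.
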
 

\begin{proof}
We show that for a random $h \in \cF$, both items \ref{item:no-intersection-larger-than-k} and \ref{item:at-most-k-intersect} happen 
with probability at least $2/3$.

Fix $ 1 \le i \le M$, $1 \le j \le m$ and $A \in \cA_i$. By $k$-wise independence and the assumption that $|A| \le n^{1-\eps}$, we have that
\begin{align}
\label{eq:prob-intersect-larger-than-k}
\Pr \left[ \left| h^{-1}(j) \cap A \right| \ge k \right] & \le 
\sum_{B \subseteq A, |B|=k} \Pr [\forall b \in B, h(b)=j] \nonumber \\ 
& \le \binom{n^{1-\eps}}{k} \cdot \frac{1}{ {\left( 10 n^{1-\eps+\gamma} \right)}^{k}}  \nonumber \\ 
& \le n^{(1-\eps)k} \cdot \frac{1}{ {\left( n^{(1-\eps)k + \gamma k} \right)}} \cdot \frac{1}{10^k} \nonumber \\ 
& \le  {n^{-\gamma k}} \cdot 10^{-k}  .
\end{align}

Taking a union bound over all $1 \le i \le M$ and $1 \le j \le m$, and using the estimate \eqref{eq:prob-intersect-larger-than-k} and 
the fact that $m \le n$, we get that item \ref{item:no-intersection-larger-than-k} in the statement of the lemma does not happen 
with probability at most
\[
M \cdot m \cdot n^{-\gamma k} \cdot 10^{-k} \le \frac{1}{3}
\]
for large enough $n$, by the choice of $k$.

Turning to item \ref{item:at-most-k-intersect} in the statement of the lemma, it is convenient to partition every family of subsets $\cA$ 
into $(1-\eps) \log n$ disjoint buckets, according to the size of the sets in $\cA$. Fix such $\cA$, and, for $1 \le i \le (1-\eps) \log n$, 
 define the bucket
\[
\cB_i = \left\{ A \in \cA :  \frac{n^{1-\eps}}{2^{i}} \le |A| \le   \frac{n^{1-\eps}}{2^{i-1}} \right\}.
\]
We show that with high probability over the choice of $h$, and for every $j \in [m]$, in every bucket there are at most $k$ sets whose 
intersection with $h^{-1}(j)$ has size larger than 1.

For every sets $A \in \cA$, by $k$-wise independence (in particular, pairwise independence) the probability that 
$|A \cap h^{-1} (j)| \ge 2$ is at most
\[
\binom{|A|}{2} \cdot \frac{1}{{ \left( 10n^{1-\eps+\gamma} \right)}^2} \le |A|^2 \cdot \frac{1}{100 n^{2-2\eps+2\gamma}}.
\]

Fix a bucket $\cB_i$. By definition, for every  $A \in \cB_i$ it holds that $|A| \le \frac{n^{1-\eps}}{2^{i-1}}$, and so for every set 
$A \in \cB_i$, the probability that $|A \cap h^{-1} (j)| \ge 2$ is at most
\begin{equation}
\label{eq:prob-intersect-2}
\frac{n^{2-2\eps}}{2^{2i-2}} \cdot \frac{1}{100} \cdot n^{2\eps-2-2\gamma} = \frac{1}{100} \cdot {n^{-2\gamma} \cdot 2^{2-2i}}.
\end{equation}

Since $\cA$ is a partition, by pairwise disjointness, the number of sets in $\cB_i$ is at most $n^\eps \cdot 2^{i}$. Hence, by 
$k$-wise independence and \eqref{eq:prob-intersect-2}, the probability there exist $k/2$ sets in the bucket $\cB_i$, with 
intersection sizes at least $2$, is at most

\begin{align}
\label{eq:prob-intersect-k-over-2}
\binom{n^\eps \cdot 2^{i} }{k/2} \cdot {\left( \frac{1}{100} \cdot {n^{-2\gamma} \cdot 2^{2-2i}} \right)}^{k/2} & \le
{\left( \frac{en^\eps \cdot 2^{i}}{k/2} \right)}^{k/2} {\left( \frac{1}{100} \cdot {n^{-2\gamma} \cdot 2^{2-2i}} \right)}^{k/2} \nonumber \\
& = {\left( \frac{e \cdot 2^{3-i}}{100} \right)}^{k/2} \cdot {\left( \frac{n^{\eps-2\gamma}}{k} \right) }^{k/2},
\end{align}
where we have used the inequality $\binom{a}{b} \le \left( \frac{ea}{b} \right) ^b$.  Observe that $n^{\eps-2\gamma} \le k$, by the 
choice of $\gamma$.

Taking a union bound over all $\log n$ buckets, and then over all $M$ partitions and all $m$ possible values of $j$, and using the 
estimation \eqref{eq:prob-intersect-k-over-2}, we get that the probability that  there more than $k/2$ sets that  intersect 
$h^{-1}(j)$ in more than one element, for some $j$, is at most 
\[
M \cdot m \cdot \log n \cdot \left( \frac{e \cdot 2^{3-i}}{100} \right)^{k/2} \cdot \left( \frac{n^{\eps-2\gamma}}{k} \right)^{k/2} \le \frac{1}{3},
\]
for large enough $n$, by the choices of $k$ and $\gamma$. Hence, item \ref{item:at-most-k-intersect} in the statement of the 
lemma follows as well.
\end{proof}

We conclude this section with the following well known fact (see, e.g., Chapter 16 in \cite{AlonSpencer08}, and the references 
therein):

\begin{fact}
\label{fact:small-hash}
There exists an explicitly constructible family $\cF$ of $k$-wise independent hash functions from $[n]$ to 
$[10n^{1-\eps+\gamma}]$ of size $|\cF| = n^{O(k)}$.
\end{fact}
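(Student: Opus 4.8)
The plan is to invoke the classical construction of $k$-wise independent families via evaluations of low-degree polynomials over a finite field, with the one twist that the range should be a power of the field characteristic; this is exactly what allows us to pass from the field to the range $[m]$ while keeping the independence \emph{exact} (as required by Definition~\ref{def:k-hash}, which demands probability precisely $m^{-k}$).

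First I would fix $q = 2^t$, the smallest power of two with $q \ge n$ (so $q = O(n)$), and $m = 2^s \le q$, the smallest power of two that is $\ge 10 n^{1-\eps+\gamma}$; enlarging the constant $10$ if necessary, we may simply take $m$ itself to be this power of two, which is harmless under the paper's convention of ignoring floors and ceilings. The field $\F_q = \F_{2^t}$ is explicitly constructible (an irreducible polynomial of degree $t$ over $\F_2$ can be found deterministically in time $\poly(t)$, and then field arithmetic is efficient). Identify $[n]$ with an arbitrary $n$-element subset of $\F_q$, and identify $[m]$ with $\zo^s$, sitting inside $(\F_q,+)\cong(\zo^{\,t},\oplus)$ as the coordinate projection onto the first $s$ bits; call this surjection $\pi\colon \F_q \to [m]$, and note it is a group homomorphism whose fibers all have size $2^{t-s}$, hence it pushes the uniform distribution on $\F_q$ forward to the uniform distribution on $[m]$.

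Next I would define the family explicitly: for $\valpha = (\alpha_0,\ldots,\alpha_{k-1}) \in \F_q^{\,k}$ set $p_\valpha(x) = \sum_{\ell=0}^{k-1} \alpha_\ell x^\ell$ and $h_\valpha(a) = \pi(p_\valpha(a))$, and let $\cF = \{ h_\valpha : \valpha \in \F_q^{\,k}\}$. Then $|\cF| \le q^k = O(n)^k = n^{O(k)}$, as claimed. For $k$-wise independence, fix distinct $a_1,\ldots,a_k \in [n] \subseteq \F_q$; the map $\valpha \mapsto (p_\valpha(a_1),\ldots,p_\valpha(a_k))$ is linear and given by the Vandermonde matrix of the $a_i$, which is invertible precisely because the $a_i$ are distinct, so this tuple is uniform over $\F_q^{\,k}$ when $\valpha$ is uniform. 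Applying $\pi$ in each coordinate and using that $\pi$ maps uniform to uniform on a set of size $m$ dividing $q$, the tuple $(h_\valpha(a_1),\ldots,h_\valpha(a_k))$ is uniform over $[m]^k$ — exactly the condition in Definition~\ref{def:k-hash}.

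The only genuinely delicate point is the range reduction: a naive ``reduce $p_\valpha(a)$ modulo $m$'' would only give \emph{approximate} $k$-wise independence when $m \nmid q$, which is why we force $m$ and $q$ to be powers of two so that $\pi$ is an exact uniform-preserving homomorphism; everything else is the standard Vandermonde argument plus a count of degree-$(<k)$ polynomials. For completeness one may instead cite the statement verbatim from Chapter~16 of \cite{AlonSpencer08}.
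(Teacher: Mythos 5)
Your proof is correct: the paper states this fact without proof, citing Chapter~16 of \cite{AlonSpencer08}, and your argument is precisely the standard polynomial-evaluation construction over $\F_{2^t}$ that the citation refers to, including the correct handling of the one real subtlety (making the range size divide the field size so that the independence is exact rather than approximate). The replacement of $10n^{1-\eps+\gamma}$ by the next power of two is harmless for the way the fact is used in Lemma~\ref{lem:hash}, since enlarging $m$ only strengthens the probability bounds there and the paper's convention is to ignore such rounding.
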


%%%%%%%%%%%%%%%%%%%%%%%%%%%%%%%%%%%%%%%%%%%%%%%%%%%%
%%%%%%%%%%%%%%%%%%%%%%%%%%%%%%%%%%%%%%%%%%%%%%%%%%%%
%%%%%%%%%%%%%%%%%%%
\section{Reducing the Bottom Support of Depth-$3$ and Depth-$4$ Formulas}\label{sec:red}

In this section we make the first step towards proving Theorems~\ref{thm:intro:hitting-set-depth-$3$}  and \ref{thm:intro:hitting-set-depth-$4$}. As outlined in Section~\ref{sec:technique}, our first step is making the functions computed at the bottom layers (linear functions in the case of depth-$3$ and ``sparse'' polynomials in the case of depth-$4$) have small (variable) support. Hence,  we establish reductions from any $\tfSPS{M}$ or $\tfSPSP{M}$ formula to a $\tfrsSPS{M}{\tau}$ or 
$\tfrsSPSP{M}{\tau}$ formula, respectively. We first describe the simple depth-$3$ case. We continue by elaborating on the more general case of depth-$4$ formulas, which is slightly more involved. Both proofs follow the outline described in Section~\ref{sec:technique}.

%%%%%%%%%%%%%%%%%%%%%%%%%%%%%%%%%%%%%%%%%%%%%%%%%
\subsection{Reducing Bottom Support for Depth-$3$}

Given a depth-$3$ formula $\sum_{i=1}^M \prod_{j=1}^{t_i} \ell_{i,j}$, we would like 
to eliminate all linear functions that contain many variables. To this end, we observe that there must exist a variable that 
appears in many of these functions, and that taking a derivative according to that variable eliminates those functions from the formula.

\begin{lem}
\label{lem:red3}
Let $f(x_1,\ldots,x_n) = \sum_{i=1}^M \prod_{j=1}^{t_i} \ell_{i,j}$ be a non-zero multilinear polynomial computed by a multilinear 
$\tfSPS{M}$ formula $\Phi$ and let $\eps>0$. Then, there 
exists a set of variables $A$ of size $|A| \le \tilde{O} (n^{\epsilon} \cdot \log M)$ such that $\partial_A f$ is a non-zero multilinear 
polynomial that can be computed by a multilinear $\tfrsSPS{M}{n^{1-\eps}}$ formula.
\end{lem}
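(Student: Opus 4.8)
The plan is to iteratively remove "bad" linear functions—those with support larger than $n^{1-\eps}$—by taking partial derivatives. Since the formula is multilinear, taking a derivative $\partial_x$ of the polynomial computed by $\Phi$ yields a polynomial computable by a multilinear $\tfSPS{M}$ formula of no larger size: concretely, we replace each linear function $\ell(X) = \sum_i \alpha_i x_i + \alpha_0$ by the field element $\alpha_x$ (the coefficient of $x$). Crucially, this operation never increases the support of any surviving linear function, and it removes from the formula (well, trivializes to a constant) every linear function in which $x$ appeared with nonzero coefficient. So the whole game is a covering argument: we want to hit all bad linear functions with as few variables as possible.

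First I would set up the covering/counting step. Consider the multiset $\cL$ of all linear functions $\ell_{i,j}$ appearing in $\Phi$ that have $|\var(\ell_{i,j})| > n^{1-\eps}$. There are at most $|\Phi| \le \poly\log(M) \cdot \text{(number of gates)}$—more simply, at most $|\Phi|$—such functions, and each involves more than $n^{1-\eps}$ variables. A double-counting argument then shows that some single variable $x_k$ appears in at least a $\frac{n^{1-\eps}}{n} = n^{-\eps}$ fraction of the bad functions. Taking $\partial_{x_k}$ eliminates all those functions at once while not creating any new bad ones. Repeating, after $t$ derivatives the number of surviving bad functions is at most $|\cL| \cdot (1 - n^{-\eps})^t$, so choosing $t = O(n^{\eps} \cdot \log|\cL|) = O(n^{\eps}\log|\Phi|) = \tilde O(n^{\eps}\log M)$ (using $|\Phi| \le M \cdot \poly(n)$ and $\log|\Phi| = \tilde O(\log M)$, absorbing factors of $\log n$ into the $\tilde O$) drives this quantity below $1$, hence to $0$. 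At that point every linear function in the resulting formula has support at most $n^{1-\eps}$, i.e., it is a $\tfrsSPS{M}{n^{1-\eps}}$ formula; the multilinear-disjointness condition (ii) is inherited since derivatives of a multilinear formula stay multilinear (this is exactly the content of the corollary following Proposition~\ref{prop:simple}).

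The one genuine subtlety—and the step I expect to be the main obstacle—is maintaining non-zeroness: a priori $\partial_{x_k} f$ could be identically zero even though $f$ is not. This is where the choice of which variable to differentiate must be coordinated with non-vanishing, rather than chosen purely by the covering count. The fix is that among the variables achieving (or nearly achieving) the $n^{-\eps}$ covering fraction, we can afford to be slightly less greedy: if $\partial_{x_k} f \equiv 0$ then $x_k \notin \var(f)$, so $x_k$ does not actually appear in $f$ at all, and we may simply discard it from consideration (it contributes nothing). More carefully, one argues that at each stage either the current polynomial $g$ is a nonzero constant (and we are trivially done, the empty formula works) or $\var^*(g) \neq \emptyset$; one then picks a variable $x_k$ that both lies in a constant fraction of bad functions and satisfies $\partial_{x_k} g \not\equiv 0$—such a variable exists because the bad functions have large support, so they collectively touch many variables in $\var(g)$, and a variable touched by a bad function and lying in $\var(g)$ has a nonzero partial derivative for at least one of the two operations available. (If one wants to be fully rigorous about the pigeonhole interacting with the non-vanishing constraint, the clean route is: restrict attention to variables in $\var(g)$ only when running the double-counting argument, noting that every bad function has more than $n^{1-\eps} - (\text{vars already killed})$ of its variables still "live", which for $t = \tilde O(n^\eps \log M) \ll n^{1-\eps}$ is still $\Omega(n^{1-\eps})$, so the $n^{-\eps}$-fraction bound survives.) Collecting all the differentiated variables into the set $A$, we get $|A| \le t = \tilde O(n^\eps \log M)$ and $\partial_A f$ nonzero, multilinear, and computed by a $\tfrsSPS{M}{n^{1-\eps}}$ formula, as required.
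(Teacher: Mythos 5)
Your proposal follows essentially the same route as the paper: greedily pick a variable covering an $n^{-\eps}$ fraction of the large-support linear functions, differentiate to wipe them all out at once, and iterate $\tilde O(n^{\eps}\log M)$ times. The one step where your justification is off is the claim that each bad function retains more than $n^{1-\eps} - (\text{vars already killed})$ live variables: a variable can drop out of $\var(g)$ by cancellation after a derivative with respect to a \emph{different} variable, so the set of dead variables is not bounded by $|A|$, and this count does not survive as stated. The paper's proof avoids the issue by defining badness from the outset as $|\var(\ell_{i,j})\cap\var(f)|\ge n^{1-\eps}$, i.e., relative to the variables of the \emph{current} polynomial. With that definition no lower bound on surviving live support is needed: a function whose live support falls below the threshold simply exits the bad set for free, the pigeonhole is run only over $\var(g)$ (so the chosen variable automatically has a nonzero derivative), and the bad set still shrinks by a $(1-n^{-\eps})$ factor per round. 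Your fallback suggestion of restricting the double counting to $\var(g)$ is exactly this; you just need to couple it with the relative definition of badness rather than the absolute one, after which the argument closes. (This relative guarantee on $\var(\ell_{i,j})\cap\var(f)$ is also all that the downstream hitting-set argument uses.)
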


\begin{proof}
Define
\[
\cB = \{ \ell_{i,j} \mid |\var(\ell_{i,j}) \cap \var(f) | \ge n^{1-\eps} \}
\]
to be the set of ``bad'' linear functions. Those are linear functions that contain more than $n^{1-\eps}$ variables that also 
appear in $f$. We show how to eliminate those linear functions from the formula while preserving non-zeroness.

Since for every $\ell \in \cB$, $|\var(\ell) \cap \var(f)| \ge n^{1-\eps}$, there exists a variable $x_i$ that appears in at least 
$|\cB|n^{1-\eps}/n = |\cB|/n^\eps$ linear functions in $\cB$ (and also in $f$). Consider the polynomial $\partial_{x_i}f$. 
Since $x_i \in \var(f)$, this is a non-zero polynomial. Furthermore, using the fact that deriving with respect to a variable is a 
linear operation, and the fact that every multiplication gate in the formula multiplies linear functions with disjoint support, a formula for 
$\partial_{x_i}f$ can be obtained from $\Phi$ by replacing every linear function in which $x_i$ appears with an appropriate constant. 
Therefore, every such function is removed from $\cB$, and so the set of bad linear functions in $\partial_{x_i}f$ is of size at most 
$|\cB|- |\cB|/n^\eps = |\cB|\cdot (1-1/n^\eps)$. We continue this process for at most $O(n^\eps \cdot \log |\cB|)$ steps, until we 
reach a point where $|\cB| < 1$ and so $|\cB| = 0$.

Finally, it remains to be noted that $|\cB| \le Mn$, since by multilinearity each multiplication gate multiplies linear functions with disjoint 
support, and so its fan-in is at most $n$.
\end{proof}

%%%%%%%%%%%%%%%%%%%%%%%%%%%%%%%%%%%%%%%%%%%%%%%%%
\subsection{Reducing Bottom Support for Depth-$4$}

The process for depth-$4$ formulas follows the same outline as the depth-$3$ case, but there are a few more details. 
Given a parameter $t \in \N$, we want to transform any multilinear 
$\tfSPSP{M}$ formula computing a non-zero polynomial $f(x_1, \ldots, x_n)$ into a $\tfrsSPSP{M}{\tau}$ formula, 
while preserving non-zeroness. By Proposition~\ref{prop:simple}, we can assume any formula that we work with is 
a \emph{simple formula}. We again define the 
``bad'' polynomials as those that contain many variables (that also appear in $f$). 
Our progress measure for their elimination will be the {\em total sparsity} of all bad polynomials, which we define below.

\begin{define}
	Let $\tau \in \N$ and $\Phi = \sum_{i=1}^M\prod_{j=1}^{t_i} f_{i,j}$ be a multilinear $\tfSPSP{M}$ 
	formula computing a non-zero multilinear polynomial $f(x_1, \ldots, x_n) \in \F[x_1, \ldots, x_n]$. 
	Let $\cB = \{ f_{i,j} \mid \var(f_{i,j}) > \tau \}$. We say that $\Phi$ is $\Delta$-far from a $\tfrsSPSP{M}{\tau}$ formula if
	$$ \sum_{g \in \cB} \| g \| = \Delta.  $$
	We also say that a polynomial $f(x_1, \ldots, x_n)$ is $\Delta$-far from a $\tfrsSPSP{M}{\tau}$ formula if it can be 
	computed by a formula that is $\Delta$-far from such a formula.
\end{define}

\begin{obs}
	Notice that a formula $\Phi$ is $0$-far from being $\tfrsSPSP{M}{\tau}$ iff $\Phi$ itself is a $\tfrsSPSP{M}{\tau}$ formula.
\end{obs}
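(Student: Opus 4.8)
The plan is simply to unwind the definitions; the statement is a bookkeeping fact and I do not expect any genuine obstacle. By the preceding definition, $\Phi = \sum_{i=1}^M\prod_{j=1}^{t_i} f_{i,j}$ is $0$-far from a $\tfrsSPSP{M}{\tau}$ formula precisely when $\sum_{g \in \cB} \| g \| = 0$, where $\cB = \{ f_{i,j} \mid |\var(f_{i,j})| > \tau \}$ is the collection of bottom polynomials with too large a variable set. So the whole content is to identify this condition with the two defining requirements (i) and (ii) of a $\tfrsSPSP{M}{\tau}$ formula.

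The only point worth spelling out is that every $g \in \cB$ is a nonzero polynomial: since $|\var(g)| > \tau \ge 1$ we have $\var(g) \neq \emptyset$, so $g$ is not the zero polynomial and $\| g \| \ge 1$. Hence $\sum_{g \in \cB} \| g \| = 0$ if and only if $\cB = \emptyset$, i.e., if and only if $|\var(f_{i,j})| \le \tau$ for every $i \in [M]$ and every $j \in [t_i]$ — which is exactly condition (i). Condition (ii), the variable-disjointness of $f_{i,1}, \ldots, f_{i,t_i}$ at each product gate, is automatically satisfied because $\Phi$ is a multilinear depth-$4$ formula (a product of polynomials sharing a variable cannot be multilinear).

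Therefore $\cB = \emptyset$ if and only if $\Phi$ satisfies both defining conditions of a $\tfrsSPSP{M}{\tau}$ formula, and this yields both implications at once: if $\Phi$ is $0$-far then $\cB = \emptyset$ and $\Phi$ is such a formula, and conversely if $\Phi$ is a $\tfrsSPSP{M}{\tau}$ formula then (i) forces $\cB = \emptyset$ and hence $\Delta = 0$. The observation is recorded only so that in the reductions that follow we may treat $\Delta$ as a progress measure and terminate once it hits $0$.
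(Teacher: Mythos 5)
Your proof is correct and is exactly the definitional unwinding the paper has in mind — the paper states this as an "Observation" with no proof at all, and your write-up supplies the two small points worth recording (that $\|g\|\ge 1$ for every $g\in\cB$ since such a $g$ is non-constant, and that the disjointness condition (ii) is already forced by multilinearity of the ambient $\tfSPSP{M}$ formula).
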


Now that we have a measure of how far a given $\tfSPSP{M}$ formula (computing a non-zero polynomial) is from being 
$\tfrsSPSP{M}{\tau}$, we 
can show the existence of a small set of variables such that when we either take derivatives or set these variables to zero, 
we obtain a $\tfrsSPSP{M}{\tau}$ formula computing another non-zero polynomial. Since we are working with \emph{simple} formulas, 
if a variable $x$ appears in a bad polynomial $f_{i,j} \in \cB$, then it must be the case that
$x \in \var^*(f)$, and therefore we are free to either take a partial derivative with respect to $x$ or to set $x$ to zero, while preserving
non-zeroness of the input polynomial $f(x_1, \ldots, x_n)$. Therefore, the non-zeroness condition is taken care of by simplicity.

We begin by showing that we can always make good
progress in this measure. More precisely, we have the following lemma:

\begin{lem}\label{lem:red4}
	Let $\Phi = \sum_{i=1}^M\prod_{j=1}^{t_i} f_{i,j}$ be a multilinear %simple 
	$\tfSPSP{M}$ formula computing 
	a non-zero multilinear polynomial $f(x_1, \ldots, x_n) \in \F[x_1, \ldots, x_n]$.  
	If $\Phi$ is $\Delta$-far from a $\tfrsSPSP{M}{\tau}$ formula, then there exists $x \in \var^*(f)$ such that one of the 
	polynomials $\D_xf$ or $f|_{x=0}$ is non-zero and is at most $\Delta(1-\frac{\tau}{2n})$-far from a 
	$\tfrsSPSP{M}{\tau}$ formula.
\end{lem}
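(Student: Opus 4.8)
The plan is to find a single variable $x$ lying in the variable sets of bad polynomials of large total sparsity, and to show that at least one of the projections $\D_x(\cdot)$, $(\cdot)|_{x=0}$ decreases the measure $\Delta$ by the required multiplicative factor while keeping the polynomial nonzero. The key multilinearity fact is a \emph{sparsity-splitting identity}: for a multilinear $g$ and a variable $x$, writing $g = x\cdot g_1 + g_0$ with $x\notin\var(g_1)\cup\var(g_0)$, we have $\D_x g = g_1$, $g|_{x=0}=g_0$, and --- since the monomials of $g$ divisible by $x$ are disjoint from the others --- $\|\D_x g\| + \|g|_{x=0}\| = \|g\|$. Using multilinearity at the product gates (where a gate's factors have pairwise disjoint variable sets), $\D_x\Phi$ is obtained from $\Phi$ by deleting the product gates not involving $x$ and replacing, in each surviving gate, the unique $x$-containing factor by its derivative, while $\Phi|_{x=0}$ replaces every factor $f_{i,j}$ by $f_{i,j}|_{x=0}$. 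In both cases the result is again a multilinear $\tfSPSP{M}$ formula, and a factor with at most $\tau$ variables stays so (factor supports only shrink, or a factor disappears).

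For the quantitative step, fix $x$, write $\cB$ for the bad set of $\Phi$, and set $W_x = \sum_{g\in\cB,\, x\in\var(g)}\|g\|$. Let $\Delta_1,\Delta_2$ be the respective far-nesses of $\D_x\Phi$ and $\Phi|_{x=0}$ from $\tfrsSPSP{M}{\tau}$. Only former bad factors can contribute to these: a bad $g$ with $x\notin\var(g)$ contributes at most $\|g\|$ to each, whereas a bad $g$ with $x\in\var(g)$ contributes at most $\|\D_x g\|$ to $\Delta_1$ and at most $\|g|_{x=0}\|$ to $\Delta_2$. Summing and applying the splitting identity to the terms with $x\in\var(g)$,
\[ \Delta_1 + \Delta_2 \ \le\ 2\!\!\sum_{g\in\cB,\, x\notin\var(g)}\!\!\|g\|\ +\!\!\sum_{g\in\cB,\, x\in\var(g)}\!\!\|g\|\ =\ 2(\Delta-W_x)+W_x\ =\ 2\Delta - W_x, \]
so $\min(\Delta_1,\Delta_2)\le\Delta-W_x/2$. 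To choose $x$ with $W_x$ large, note $\sum_{x}W_x = \sum_{g\in\cB}\|g\|\cdot|\var(g)| \ge \tau\Delta$, since $|\var(g)|>\tau$ for every $g\in\cB$; as there are only $n$ variables, some $x^*$ satisfies $W_{x^*}\ge\tau\Delta/n$, and then one of $\D_{x^*}\Phi$, $\Phi|_{x^*=0}$ is at most $\Delta(1-\tfrac{\tau}{2n})$-far, as required.

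It remains to ensure that the chosen $x^*$ lies in $\var^*(f)$, so that \emph{both} $\D_{x^*}f$ and $f|_{x^*=0}$ are nonzero and we may keep whichever of the two formulas is less far; this is where simplicity is used. We may assume $\Delta>0$ (else $\Phi$ is already $\tfrsSPSP{M}{\tau}$). Because $f$ does not depend on variables outside $\var(f)$, at the outset we substitute $0$ for all of them in $\Phi$: this leaves $f$ (hence $\var^*(f)$) unchanged, keeps $\Phi$ a simple multilinear $\tfSPSP{M}$ formula (a variable dividing $f$ is in $\var(f)$, so it is untouched and still occurs as a literal factor at every gate), does not increase $\Delta$, and forces $\var(f_{i,j})\subseteq\var(f)$ for all $i,j$ --- so the estimates of the previous paragraph apply verbatim and give the bound relative to the original $\Delta$. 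Moreover, if a variable $x$ of a bad polynomial $f_{i,j}$ divided $f$, simplicity would supply a factor of the $i$-th product gate equal to $x$, and disjointness of that gate's factors would force $f_{i,j}=x$, contradicting $|\var(f_{i,j})|>\tau\ge 1$. Hence no variable of a bad polynomial divides $f$, and since all such variables lie in $\var(f)$, they lie in $\var^*(f)$; in particular $x^*$ does, as $W_{x^*}>0$ means $x^*$ occurs in a bad polynomial. I expect this bookkeeping with simplicity and non-vanishing to be the only delicate part; the quantitative heart is just the averaging-plus-splitting estimate.
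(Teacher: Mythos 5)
Your proof is correct and follows essentially the same route as the paper's: average $\sum_{g\in\cB}|\var(g)|\cdot\|g\|>\tau\Delta$ over the $n$ variables to find $x^*$ with $W_{x^*}\ge\tau\Delta/n$, then use the multilinear splitting $\|g\|=\|\D_x g\|+\|g|_{x=0}\|$ to conclude that one of the two projections drops the measure by $W_{x^*}/2$. Your extra preprocessing (zeroing variables outside $\var(f)$ and the simplicity-plus-disjointness argument that no variable of a bad factor divides $f$) just makes explicit the step the paper compresses into the assertion that $|\var(f_{i,j})|>1$ implies $\var(f_{i,j})\subseteq\var^*(f)$.
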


\begin{proof}
	By Proposition~\ref{prop:simple}, we can assume without loss of generality that $\Phi$ is simple.
	We note that making $\Phi$ simple can only reduce $\Delta$.

	Let $\cB = \{ f_{i,j} \mid |\var(f_{i,j})| > \tau \}$. Notice that by simplicity of $\Phi$, we have that
	$|\var(f_{i,j})| > 1 \then \var(f_{i,j}) \subseteq \var^*(f)$. Since
	$\Phi$ is $\Delta$-far from a $\tfrsSPSP{M}{\tau}$ formula, we have that
	$$  \Delta = \sum_{g \in \cB} \| g \| .$$
	For each $x \in \var^*(f)$, let $F_x = \{ g \in \cB \mid x \in \var(g) \}$. Notice that 
         
        $$ \ds\sum_{x \in \var^*(f)} \left( \ds\sum_{g \in F_x} \| g \| \right) = \ds\sum_{g \in \cB} |\var(g)| \cdot \| g \| > 
        \tau \cdot \ds\sum_{g \in \cB} \| g \| = \tau \Delta .  $$    
        This implies that there exists $x \in \var^*(f)$ for which 
        
        \begin{equation}\label{eq:dense-var}
        		\ds\sum_{g \in F_x} \| g \| \ge \dfrac{\Delta \cdot \tau}{|\var^*(f)|} \ge \dfrac{\Delta \cdot \tau}{n} .
	\end{equation} 
        Since $ \| g \| = \| g|_{x = 0} \| + \| \D_x g \| $ for any multilinear polynomial $g(x_1, \ldots, x_n)$,  
        we have that  
        $$\ds\sum_{g \in F_x} \| g \|  = \ds\sum_{g \in F_x} \| g|_{x = 0} \| + \ds\sum_{g \in F_x} \| \D_x g \| . $$
        Hence, by equation~\eqref{eq:dense-var}, one of $\ds\sum_{g \in F_x} \| g|_{x = 0} \|$ or $\ds\sum_{g \in F_x} \| \D_x g \|$ 
        must be  larger than $\frac{\Delta \cdot \tau}{2n}$. If  
        $$ \ds\sum_{g \in F_x} \| g|_{x = 0} \| > \frac{\Delta \tau}{2n} ,$$ 
        by taking the derivative of $f$ with respect to $x$, we have that $\D_x f \neq 0$ (since $x \in \var^*(f)$) and that 
        the distance of $\D_x \Phi$ to a $\tfrsSPSP{M}{\tau}$ formula is upper bounded by
        $$  \ds\sum_{g \in \cB \setminus F_x} \| g \| + \ds\sum_{g \in F_x} \| \D_x g \| = \ds\sum_{g \in \cB} \| g \| - 
        \ds\sum_{g \in F_x} \| g|_{x = 0} \| < \Delta - \frac{\Delta \tau}{2n} = \Delta\left(1-\frac{\tau}{2n}\right).$$    
        Notice that the bound above is an upper bound, since the new set of polynomials $f_{i,j}$ of large support must be a subset of
        $\cB.$
        
        Analogously, if $\sum_{C \in F_x} \| \D_x C \| > \frac{\Delta \tau}{2n}$, then we take $f|_{x=0}$. The upper bounds are the same as those obtained for the first case. This finishes the proof of the lemma.
\end{proof}

By repeatedly applying Lemma~\ref{lem:red4}, we obtain the following corollary, which guarantees the existence of a small set
of variables that allow us to transform our $\tfSPSP{M}$ formula into a $\tfrsSPSP{M}{\tau}$ one.

\sloppy
\begin{cor}[Reduction to Depth-$4$ with Small Bottom Support]\label{cor:depth4}
	Let $\Phi$ be a multilinear simple $\tfSPSP{M}$ formula computing a non-zero multilinear 
	polynomial $f(x_1, \ldots, x_n) \in \F[x_1, \ldots, x_n]$. There exist disjoint sets $A, B \subset [n]$ with 
	$|A \sqcup B| \le \frac{2n}{\tau} \cdot \log(|\Phi|)$ such that the polynomial $\D_Af|_{B=0}$ is non-zero and 
	can be computed by a simple multilinear $\tfrsSPSP{M}{\tau}$ formula $\Phi$.
\end{cor}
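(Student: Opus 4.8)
The plan is to obtain Corollary~\ref{cor:depth4} by iterating Lemma~\ref{lem:red4} until the distance drops below $1$, at which point it must be exactly $0$, meaning the resulting formula is genuinely a $\tfrsSPSP{M}{\tau}$ formula. First I would record the starting distance: since $\Phi$ is a multilinear $\tfSPSP{M}$ formula of size $|\Phi|$, every bad polynomial $f_{i,j} \in \cB$ satisfies $\|f_{i,j}\| \le |\Phi|$, and there are at most $|\Phi|$ such polynomials in total, so the initial distance $\Delta_0$ is at most $|\Phi|^2$. (Even the cruder bound $\Delta_0 \le |\Phi|$, viewing $\|\cdot\|$ as counting leaves, would do; either way $\log \Delta_0 = O(\log|\Phi|)$, which is all that matters.)

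Next I would set up the iteration. By Lemma~\ref{lem:red4}, as long as the current formula $\Phi_r$ computing a non-zero polynomial $f_r$ is $\Delta_r$-far from a $\tfrsSPSP{M}{\tau}$ formula with $\Delta_r \ge 1$, there is a variable $x \in \var^*(f_r)$ such that one of $\D_x f_r$ or $f_r|_{x=0}$ is non-zero and is $\Delta_{r+1}$-far with $\Delta_{r+1} \le \Delta_r(1 - \tfrac{\tau}{2n})$. Crucially, by the corollary following Proposition~\ref{prop:simple}, both $\D_x f_r$ and $f_r|_{x=0}$ are again computed by simple multilinear $\tfSPSP{M}$ formulas of size at most $|\Phi|$, so the hypotheses of Lemma~\ref{lem:red4} are preserved and the iteration is well-defined. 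Each step either derives with respect to a new variable (add it to $A$) or sets a new variable to zero (add it to $B$); since $x \in \var^*(f_r) \subseteq \var(f_r)$ and both operations remove $x$ from the variable set, these are genuinely distinct variables, so $A$ and $B$ stay disjoint and $|A \sqcup B|$ equals the number of iterations performed.

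Then I would bound the number of iterations. After $N$ steps the distance is at most $\Delta_0 (1 - \tfrac{\tau}{2n})^N \le \Delta_0 \exp(-N\tau/2n)$; this is $< 1$ once $N > \tfrac{2n}{\tau}\ln \Delta_0$. Since $\ln\Delta_0 = O(\log|\Phi|)$, taking $N = \tfrac{2n}{\tau}\log(|\Phi|)$ suffices (absorbing the constant, and using that we are ignoring floor/ceiling issues as stated in the preliminaries; one may also write the bound with an implicit constant or a $\tilde{O}$ if preferred to match the statement's $\frac{2n}{\tau}\log(|\Phi|)$). At that point $\Delta_N < 1$, but the distance is a sum of nonnegative integers $\sum_{g\in\cB}\|g\|$, so $\Delta_N < 1$ forces $\Delta_N = 0$, i.e.\ $\cB = \emptyset$ and $\Phi_N$ is literally a $\tfrsSPSP{M}{\tau}$ formula by the Observation above. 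Setting $\D_A f|_{B=0} := f_N$ (which is well-defined and non-zero by construction, and independent of the order of operations since partial derivatives and zero-restrictions commute), and noting $\Phi_N$ is simple by the same corollary, gives exactly the claimed conclusion.

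There is no real obstacle here — the lemma does all the work and this is just the bookkeeping for the iteration — but the one point that needs a little care is the accounting of the progress measure: one must check that the new set of bad polynomials after a derivative or restriction is a \emph{subset} of the operation applied to the old $\cB$ (so the distance genuinely cannot increase beyond the stated bound, rather than e.g.\ some previously-good polynomial crossing the $\tau$ threshold), but this is exactly the remark made inside the proof of Lemma~\ref{lem:red4} that $\|g|_{x=0}\|, \|\D_x g\| \le \|g\|$ and that neither operation increases any variable set, so no good polynomial can become bad. I would also make explicit, for cleanliness, that since $|\var^*(f_r)| \le n$ throughout, the per-step multiplicative decrease $(1-\tau/2n)$ is uniform across all iterations, which is what makes the geometric bound valid.
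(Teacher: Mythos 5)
Your proposal is correct and follows essentially the same route as the paper: the paper's proof is exactly this induction, iterating Lemma~\ref{lem:red4} with the distance decaying by a factor of $(1-\frac{\tau}{2n})$ per step, using $\Delta_0 \le |\Phi|$ to bound the number of steps by $\frac{2n}{\tau}\log|\Phi|$, and stopping once the (integer-valued) distance drops below $1$. The bookkeeping points you flag (disjointness of $A$ and $B$, non-increase of the bad set, simplicity being preserved) are handled the same way in the paper.
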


\begin{proof}
	Let $\Delta$ be such that $\Phi$ is $\Delta$-far from being $\tfrsSPSP{M}{\tau}$. Notice that $\Delta \le |\Phi|$.
	
	We show by induction that there exist disjoint sets $A_k$ and $B_k$ such that 
	$|A_k \sqcup B_k| \le k$, and the polynomial 
	$\D_{A_k} f|_{B_k=0}$ is non-zero and at most $\Delta(1-\frac{\tau}{2n})^k$-far from being $\tfrsSPSP{M}{\tau}$.
	
	For $k \ge 0$, define $A_k, B_k \subseteq [n]$, $f_k(x_1, \ldots, x_n) = \D_{A_k} f|_{B_k=0}$ and $\Delta_k$ be an upper 
	bound on how far
	$f_k(x_1, \ldots, x_n)$ is from being $\tfrsSPSP{M}{\tau}$. Initially, set $A_0 = B_0 = \emptyset$. In this case, we have 
	that $f_0(x_1, \ldots, x_n) = f(x_1, \ldots, x_n)$
	and $\Delta_0 = \Delta = \Delta(1-\frac{\tau}{2n})^0$. This is the base case for our induction.
	
	Suppose our hypothesis is true for some $k \ge 0$. If $\Delta(1-\frac{\tau}{2n})^k < 1$, then we know that our formula is already
	$\tfrsSPSP{M}{\tau}$ and therefore we are done. Else, by applying Lemma~\ref{lem:red4}, we have that there is a variable
	$x \in \var^*(f_k)$ such that either $\D_x f_k$ or $f_k|_{x=0}$ is (at most) $\Delta_k(1-\frac{\tau}{2n})$-far from being $\tfrsSPSP{M}{\tau}$.
	Thus, $\Delta_k(1-\frac{\tau}{2n}) \le \Delta(1-\frac{\tau}{2n})^k \cdot (1-\frac{\tau}{2n}) = \Delta(1-\frac{\tau}{2n})^{k+1}$ and 
	$x \in \var^*(f_k) \subseteq [n] \setminus (A_k \sqcup B_k)$. Therefore, if $\D_x f_k$ is the close polynomial then  
	we set $A_{k+1} = A_k \cup \{x\}, B_{k+1} = B_k$. Otherwise, we set $A_{k+1} = A_k, B_{k+1} = B_k \cup \{x\}.$ It is 
	easy to see that the inductive properties hold in this case as well. This ends the inductive proof.
	
	Since $\Delta(1-\frac{\tau}{2n})^k < 1$ for $k \ge \frac{2n}{\tau} \log \Delta$, and since 
	$\frac{2n}{\tau} \log(|\Phi|) \ge \frac{2n}{\tau} \log \Delta$,
	it is enough to choose at most $\frac{2n}{\tau} \log(|\Phi|)$ variables. This proves this corollary.
\end{proof}

%%%%%%%%%%%%%%%%%%%%%%%%%%%%%%%%%%%%%%%%%%%%%%%%%%%%
%%%%%%%%%%%%%%%%%%%%%%%%%%%%%%%%%%%%%%%%%%%%%%%%%%%%
%%%%%%%%%%%%%%%%%%%%%%%%%%
\section{Hitting Set for $\rsSPS{n^{1-\epsilon}}$ and $\rsSPSP{n^{1-\epsilon}}$ Formulas}
\label{sec:hit-bottom}

In this section we construct subexponential sized hitting set for the classes of $\tfrsSPS{M}{n^{1-\eps}}$ and $\tfrsSPSP{M}{n^{1-\eps}}$ multilinear formulas. Recall that in Section~\ref{sec:red} we showed how to reduce general depth-$3$ and depth-$4$ formulas to these types of formulas. In the next section, we will show how to tie all loose edges and combine the arguments of Section~\ref{sec:red} with those of this section in order to handle the general case.

An essential ingredient in our construction is a quasi-polynomial sized hitting set for Read-Once Algebraic Branching 
Programs (ROABPs) \cite{ForbesShpilka13,AgrawalGKS14}. We note that in our setting, we may assume that the reading order of the variables by the ABP is known.

\begin{thm}[\cite{ForbesShpilka13,AgrawalGKS14}]
\label{thm:hitting-set-ROABP}
Let $\cC$ be the class of $n$-variate polynomials computed by a ROABP of width $w$, such that the degree of each variable is at most $d$, over a field $\F$ so that $|\F| \ge \poly(n,w,d)$. Then $\cC$ has a hitting set of size $\poly(n,w,d)^{\log n}$ that can be constructed in time $\poly(n,w,d)^{\log n}$.
\end{thm}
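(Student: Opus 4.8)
The plan is to reconstruct the divide-and-conquer construction of Forbes and Shpilka \cite{ForbesShpilka13} (and, as an alternative, the weight-assignment approach of Agrawal, Gurjar, Korwar and Saxena \cite{AgrawalGKS14}); since in our setting the reading order $x_1,\ldots,x_n$ of the ROABP is fixed, we only need the ``ordered'' (known-order) case, which is the easier one. First I would record the standard matrix representation: a polynomial $f$ computed by a width-$w$ ROABP with each variable of degree $\le d$ can be written as $f(x) = u^{T}\, M_1(x_1)\,M_2(x_2)\cdots M_n(x_n)\, v$, where each $M_i(x_i)$ is a $w\times w$ matrix whose entries are univariate polynomials in $x_i$ of degree at most $d$ and $u,v\in\F^w$. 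Two consequences are immediate and will be used repeatedly: (i) width-$w$ ROABPs are closed under substituting field elements (or univariate polynomials) into any subset of the variables, and under splitting, namely for any prefix split $[n]=[\ell]\sqcup\{\ell+1,\ldots,n\}$ one has $f=\sum_{k=1}^{w} g_k(x_1,\ldots,x_\ell)\,h_k(x_{\ell+1},\ldots,x_n)$ with each $g_k$ (resp.\ $h_k$) again a width-$w$ ROABP on the corresponding block; and (ii) by Nisan's characterization, $f$ has a width-$\le w$ ROABP in this order iff the ``coefficient matrix'' at every prefix split (rows indexed by monomials in the prefix, columns by monomials in the suffix, entries the coefficients of $f$) has rank $\le w$ — equivalently its evaluation dimension at every prefix split is $\le w$.

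The construction proceeds by recursion on the number of variables, halving each time, so there are $O(\log n)$ levels; at the base case, a single variable of degree $\le d$, I would take any set of $d+1$ distinct field elements, and this is where the hypothesis $|\F|\ge\poly(n,w,d)$ enters. The crucial point is that the inductive invariant must be stronger than ``hits every nonzero width-$w$ ROABP'': I would require the generator (hitting set) $\cG$ produced at each level to be a \emph{rank condenser} for the class, i.e.\ for every prefix split and every family of at most $w$ linearly independent width-$w$ ROABPs on the prefix block, their images under $\cG$ remain linearly independent (and symmetrically on the suffix). The combine step is then: given rank-condensing generators $\cG_L,\cG_M$ for the two halves, assemble a generator $\cG$ for all $n$ variables. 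Correctness at the midpoint split is the heart of the matter: writing $f=\sum_{k=1}^{r} g_k h_k$ as a rank decomposition of its coefficient matrix (so $r\le w$ and $\{g_k\}$, $\{h_k\}$ are each linearly independent families of width-$\le w$ ROABPs on the two halves), the strengthened invariant makes $\{g_k\circ\cG_L\}$ and $\{h_k\circ\cG_M\}$ linearly independent, hence $f\circ\cG=\sum_k (g_k\circ\cG_L)\,(h_k\circ\cG_M)$ is the image of a nonzero element of $\operatorname{span}\{g_k\circ\cG_L\}\otimes\operatorname{span}\{h_k\circ\cG_M\}$ under the injective multiplication map into $\F[\text{seeds}]$, and is therefore nonzero. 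The remaining prefix splits of the full variable set, together with the re-establishment of the rank-condensing invariant for $f$ at all of them, are handled by the analogous bookkeeping applied to the relevant sub-ROABPs after fixing a generic (condensing) assignment on the complementary block. Since each level multiplies the seed count by $\poly(n,w,d)$ and there are $\log n$ levels, the resulting hitting set has size $\poly(n,w,d)^{\log n}$ and is constructible in time $\poly(n,w,d)^{\log n}$.

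The part I expect to be the main obstacle — and the only part that is genuinely conceptual rather than routine bookkeeping — is the combine step: stating the rank-condenser invariant precisely enough that it simultaneously (a) implies non-vanishing at the current split, (b) is strong enough to justify using two essentially independent seeds for the two halves, and (c) is preserved to the next level of the recursion for every prefix split, not just the midpoint. This is exactly the role played by the rank-condenser / evaluation-dimension machinery of \cite{ForbesShpilka13}, respectively the basis-isolating weight assignments of \cite{AgrawalGKS14}; once that invariant is in place the field-size requirement and the $\poly(n,w,d)^{\log n}$ bound follow by the straightforward recursion above.
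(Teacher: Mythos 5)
The paper does not prove this statement: it is imported verbatim from \cite{ForbesShpilka13,AgrawalGKS14} and used as a black box (in Construction~\ref{con:subexp-hitting-set-gen-multilinear} and Lemma~\ref{lem:subexp-hitting-set-gen-size}), so there is no internal proof to compare against. Judged on its own, your reconstruction correctly identifies the framework of \cite{ForbesShpilka13} — the matrix form $u^{T}M_1(x_1)\cdots M_n(x_n)v$, the Nisan-style prefix-rank/evaluation-dimension characterization, and a divide-and-conquer over $O(\log n)$ levels maintaining a rank-condensing invariant — and the base case and the role of the field-size hypothesis are fine.

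However, the one step you argue concretely, the combine step, is set up in a way that cannot yield the claimed size bound. You give $\cG_L$ and $\cG_M$ disjoint (``essentially independent'') seeds and deduce $f\circ\cG\neq 0$ from injectivity of the multiplication map $\operatorname{span}\{g_k\circ\cG_L\}\otimes\operatorname{span}\{h_k\circ\cG_M\}\to\F[\text{seeds}]$; that tensor-product argument genuinely requires the two seed sets to be disjoint. But then the hitting-set size at level $i$ satisfies $|\cH_i|\ge|\cH_{i-1}|^2$, so after $\log n$ halvings you get $\poly(n,w,d)^{\Omega(n)}$ — the trivial bound — not $\poly(n,w,d)^{\log n}$. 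Your closing accounting (``each level multiplies the seed count by $\poly(n,w,d)$'') silently assumes the opposite, namely that the two halves \emph{share} their recursive seed. That sharing is precisely the hard part of \cite{ForbesShpilka13}: both halves are fed the same seed, which destroys the easy tensor argument, and the rank condenser together with one fresh, short seed per level is what restores linear independence of the images after this collision. So the machinery you defer to is not there to ``justify using two essentially independent seeds''; it exists to let you avoid them. As written, the proposal establishes correctness only of the exponential-size construction, while asserting the quasipolynomial size of a different construction whose correctness it does not address.
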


We begin by describing a unified construction for both $\tfrsSPS{M}{n^{1-\eps}}$ and $\tfrsSPSP{M}{n^{1-\eps}}$ formulas. 
We then describe how to set the parameters of the construction for each of the cases.

\begin{construct}[Hitting set for multilinear $\tfrsSPS{M}{n^{1-\eps}}$ and $\tfrsSPSP{M}{n^{1-\eps}}$ formulas]
\label{con:subexp-hitting-set-gen-multilinear}
Let $0<\delta<\epsilon$ and $n,k,s,M$ integers, such that $M=2^{n^\delta}$ and  $k= n^{\delta} + 2\log n$. 
Set %$\gamma=(\eps-\delta)/2$, 
$m=10n^{1-(\eps+\delta)/2}$ and $t=k \log n$.
Let $\cF$ be a family of $k$-wise independent hash functions from $[n]$ to $[m]$, as in Lemma~\ref{lem:hash}.
For every $h \in \cF$, define the set $I_h$ as follows:
\begin{enumerate}
\item Partition the variables to sets\footnote{Recall that we associate subsets of $[n]$ with subsets of the variables, and make no distinction in the notation.} $T_1 \sqcup T_2 \sqcup \cdots \sqcup T_m = h^{-1}(1) \sqcup h^{-1}(2) \sqcup \cdots \sqcup h^{-1}(m)$.
\item For every $1 \le i\le m$, let $\cH_i$ be a hitting set for ROABPs of width $M \cdot s^t$ and individual degree $d=1$ (as promised by Theorem~\ref{thm:hitting-set-ROABP}), on the variables of $T_i$ (of course, $|T_i|\leq n$).
\item We define $I_h$ as the set of all vectors $v$ such that the restriction of $v$ to the coordinates $T_i$, $v|_{T_i}$, is in $\cH_i$. I.e., in the notation of Section~\ref{sec:notation}, %For simplicity we denote 
$$I_h = \cH_1^{T_1} \times \cH_2^{T_2} \times \cdots \times \cH_m^{T_m}.$$ 
%with the convention that the direct product respects the partition defined by $h$.
\end{enumerate}
Finally, define $\cH = \bigcup_{h\in \cF} I_h$.
\end{construct}

The following lemma gives an upper bound to the size of the hitting set constructed in Construction~\ref{con:subexp-hitting-set-gen-multilinear}. 
%Later on, we shall refer to it several times and explain how to set the parameters in each of the cases in order to obtain more explicit upper bounds.

\begin{lem}
\label{lem:subexp-hitting-set-gen-size}
Let $\delta,\epsilon,k,n,s$ and $M$ be the parameters of Construction \ref{con:subexp-hitting-set-gen-multilinear}. 
%Let $\gamma=(\eps-\delta)/2$ and $k= n^{\delta} + 2\log n$ be as in the construction.
%$M$, $\eps$ and $s$ be the parameters of Construction \ref{con:subexp-hitting-set-gen-multilinear}.
%Suppose $M=2^{n^\delta}$, $k=n^{\delta}+2\log n$, and $\gamma = (\eps-\delta)/2$.
The set $\cH$ constructed in Construction \ref{con:subexp-hitting-set-gen-multilinear} has size 
$n^{O(k)} \cdot {\left (M \cdot s^{k \log n} \right)}^{\tilde{O}(n^{1-(\eps+\delta)/2})}={\left (M \cdot s^{k \log n} \right)}^{\tilde{O}(n^{1-(\eps+\delta)/2})}$, and it can be constructed in time $\poly(|\cH|)$.
\end{lem}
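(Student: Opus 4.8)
The plan is to bound the size of $\cH = \bigcup_{h \in \cF} I_h$ by bounding $|\cF|$ and $\max_h |I_h|$ separately, and then multiplying. First I would invoke Fact~\ref{fact:small-hash} to get $|\cF| = n^{O(k)}$, since $\cF$ is a family of $k$-wise independent hash functions from $[n]$ to $[m]$ with $m = 10n^{1-(\eps+\delta)/2} \le 10n^{1-\eps+\gamma}$ for the choice $\gamma = (\eps-\delta)/2$ (which satisfies $\gamma \ge (\eps-\delta)/2$ as required by Lemma~\ref{lem:hash}). So the number of ``outer'' choices contributes a factor of $n^{O(k)}$.

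Next I would bound $|I_h|$ for a fixed $h$. By construction, $I_h = \cH_1^{T_1} \times \cdots \times \cH_m^{T_m}$, so $|I_h| = \prod_{i=1}^m |\cH_i|$. Each $\cH_i$ is a hitting set for ROABPs of width $w = M \cdot s^t = M \cdot s^{k\log n}$ and individual degree $d = 1$ on at most $n$ variables, so by Theorem~\ref{thm:hitting-set-ROABP} we have $|\cH_i| \le \poly(n, M\cdot s^{k\log n}, 1)^{\log n} = \left(M \cdot s^{k\log n}\right)^{\tilde{O}(1)}$ (absorbing the $\poly(n)$ and the $\log n$ exponent into the $\tilde{O}$ notation; more carefully, $\poly(n,w)^{\log n} = n^{O(\log n)} \cdot w^{O(\log n)}$). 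Taking the product over the $m = 10 n^{1-(\eps+\delta)/2} = \tilde O(n^{1-(\eps+\delta)/2})$ sets $T_i$, we get
$$|I_h| \le \left(M \cdot s^{k\log n}\right)^{\tilde{O}(n^{1-(\eps+\delta)/2})}.$$

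Finally I would combine: $|\cH| \le |\cF| \cdot \max_h |I_h| \le n^{O(k)} \cdot \left(M\cdot s^{k\log n}\right)^{\tilde{O}(n^{1-(\eps+\delta)/2})}$. To get the claimed simplification to $\left(M\cdot s^{k\log n}\right)^{\tilde{O}(n^{1-(\eps+\delta)/2})}$, I would observe that $n^{O(k)}$ is dominated by (or at worst comparable to) the second factor, since $M = 2^{n^\delta}$ already makes the exponent of the second factor at least $\tilde\Omega(n^{1-(\eps+\delta)/2} \cdot n^\delta) = \tilde\Omega(n^{1-(\eps-\delta)/2})$, which swamps $n^{O(k)} = 2^{\tilde O(n^\delta)}$ whenever $1 - (\eps-\delta)/2 > \delta$, i.e. the relevant regime; in any case $n^{O(k)} = 2^{\tilde O(n^\delta)} \le \left(M \cdot s^{k\log n}\right)^{\tilde O(1)}$ since $M = 2^{n^\delta}$, so it can simply be folded into the exponent. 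The construction time is polynomial in $|\cH|$ because each $\cH_i$ is constructible in time $\poly(n,w,d)^{\log n}$ by Theorem~\ref{thm:hitting-set-ROABP}, $\cF$ is explicitly constructible by Fact~\ref{fact:small-hash}, and forming the products $\cH_1^{T_1}\times\cdots\times\cH_m^{T_m}$ and the union over $h$ costs at most $\poly(|\cH|)$.

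The only mildly delicate point — and the step I would be most careful about — is the bookkeeping that turns $\poly(n, M\cdot s^{k\log n})^{\log n}$ raised to the $m$-th power into the clean expression $\left(M\cdot s^{k\log n}\right)^{\tilde O(n^{1-(\eps+\delta)/2})}$, together with justifying that the $n^{O(k)}$ prefactor from $|\cF|$ is absorbed. This is purely a matter of checking that all the stray $\poly(n)$ and $\log n$ factors land inside the $\tilde O(\cdot)$ in the exponent, using $M = 2^{n^\delta}$ and $k = n^\delta + 2\log n$; there is no real mathematical obstacle, just a need to be honest about which terms dominate.
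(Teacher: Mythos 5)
Your proposal is correct and follows exactly the argument the paper intends: the paper's own proof is a single sentence citing the construction, Fact~\ref{fact:small-hash}, and Theorem~\ref{thm:hitting-set-ROABP}, and your write-up simply fills in the bookkeeping (bounding $|\cF|$ by $n^{O(k)}$, bounding each $|I_h|$ as a product of $m$ ROABP hitting sets, and absorbing the $n^{O(k)}$ prefactor using $M=2^{n^\delta}$). No discrepancies.
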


\begin{proof}
This is a direct consequence of the construction, Fact~\ref{fact:small-hash} and Theorem~\ref{thm:hitting-set-ROABP}.
\end{proof}

\subsection{Depth-$3$ Formulas}

We begin by describing the argument for depth-$3$ formulas. The following lemma proves that indeed, by setting the proper parameters, the set $\cH$ from Construction \ref{con:subexp-hitting-set-gen-multilinear} does hit $\tfrsSPS{M}{n^{1-\eps}}$ formulas.

\begin{lem}
\label{lem:depth-$3$-multilinear-hitting-set-hits-small-support}
Let $f(x_1,\ldots, x_n) \in \F[x_1,\ldots,x_n]$ be a multilinear polynomial computed by a multilinear $\tfrsSPS{M}{n^{1-\eps}}$ formula  
$\Phi = \sum_{i=1}^M \prod_{j=1}^{t_i} \ell_{i,j} $. Let $\cH$ be the set constructed in 
Construction~\ref{con:subexp-hitting-set-gen-multilinear} with $s=k+1$. Then there exists a point $\valpha \in \cH$ such that 
$f(\valpha) \neq 0$.
\end{lem}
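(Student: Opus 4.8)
The plan is to show that the hitting set $\cH$ from Construction~\ref{con:subexp-hitting-set-gen-multilinear} contains a point on which $f$ does not vanish, by exhibiting a good hash function $h \in \cF$ and then arguing that $I_h$ hits $f$. First I would apply Lemma~\ref{lem:hash} to the formula $\Phi$. To do this, I need to package the product gates of $\Phi$ as the families of pairwise disjoint sets $\cA_1,\ldots,\cA_M$: for the $i$-th product gate $\prod_{j=1}^{t_i}\ell_{i,j}$, let $\cA_i = \{\var(\ell_{i,j}) \cap \var(f) : 1 \le j \le t_i\}$. By multilinearity of $\Phi$, the sets in $\cA_i$ are pairwise disjoint, and since $\Phi$ is a $\tfrsSPS{M}{n^{1-\eps}}$ formula, each such set has size at most $n^{1-\eps}$. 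The parameters of Construction~\ref{con:subexp-hitting-set-gen-multilinear} are chosen ($m = 10n^{1-(\eps+\delta)/2}$, so $\gamma = (\eps-\delta)/2$, and $k = n^\delta + 2\log n$, $t = k\log n$) precisely so that Lemma~\ref{lem:hash} applies. Thus there exists $h \in \cF$ such that for every gate $i$ and every bucket $j \in [m]$: (1) every $\ell_{i,j'}$ restricted to $T_j = h^{-1}(j)$ involves at most $k$ variables, and (2) at most $t = k\log n$ of the linear functions in gate $i$ involve more than one variable of $T_j$.

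Next I would fix this good $h$ and show $f(\valpha) \ne 0$ for some $\valpha \in I_h = \cH_1^{T_1} \times \cdots \times \cH_m^{T_m}$. The idea is to substitute the coordinates bucket by bucket. Think of the variables outside $T_1$ as formal symbols (equivalently, pass to a large extension field and treat them as fresh transcendental constants — here is where the field-size assumption from the Remark is used). Then $f$, viewed as a polynomial in the variables of $T_1$ alone, is computed by $\Phi$ restricted to those variables. I claim this restricted object is (computed by) a ROABP of width at most $M \cdot s^t$ with individual degree $1$: within each product gate, by property (1) each surviving linear function depends on at most $k \le s$ variables of $T_1$ hence has sparsity at most $s$, and by property (2) at most $t$ of them depend on more than one variable of $T_1$. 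So Lemma~\ref{lem:roabp-product-sparse-polys} (with parameters $k \leftarrow t$ and $s \leftarrow s$) gives a ROABP of width $M \cdot s^t$ computing the restriction, and since $f$ (and hence its restriction to $T_1$) is not identically zero, Theorem~\ref{thm:hitting-set-ROABP} guarantees a point $v_1 \in \cH_1$ with the restriction nonzero after substituting $v_1$ into $T_1$. Now iterate: the polynomial $f|_{T_1 = v_1}$ is still nonzero, still multilinear, and still computed by a $\tfrsSPS{M}{n^{1-\eps}}$ formula (restriction does not increase support or top fan-in), so the same good hash function $h$ still works for it and we may substitute $v_2 \in \cH_2$ into $T_2$, and so on. After $m$ steps we obtain $\valpha = v_1 \circ \cdots \circ v_m \in I_h \subseteq \cH$ with $f(\valpha) \ne 0$.

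I should double-check the sparsity bookkeeping that feeds Lemma~\ref{lem:roabp-product-sparse-polys}: a linear function $\ell$ with $|\var(\ell) \cap T_j| \le k$, restricted to $T_j$, is a linear (hence $(k+1)$-sparse, counting the constant term) polynomial in the variables of $T_j$. This is exactly why the lemma is invoked with $s = k+1$, matching the hypothesis of the statement. One subtlety: after restricting some buckets and before restricting $T_j$, a linear function may become constant or may have some of its variables already set; this only decreases its support and the number of "multivariable" functions, so both properties (1) and (2) are preserved, and the width bound $M \cdot s^t$ is maintained throughout.

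The main obstacle, and the step deserving the most care, is the iterated-substitution argument across buckets. One must be careful that "substitute $v_1$ into $T_1$, then $v_2$ into $T_2$" produces a genuine element of $I_h$ and that nonzeroness is preserved at each step — the clean way to phrase it is: for any nonzero multilinear $g$ computed by a $\tfrsSPS{M}{n^{1-\eps}}$ formula and any fixed good $h$, there is $v_j \in \cH_j$ with $g|_{T_j = v_j} \ne 0$; apply this inductively with $g$ ranging over $f, f|_{T_1=v_1}, f|_{T_1=v_1,T_2=v_2},\ldots$. Here it is essential that the same $h$ satisfies the conclusion of Lemma~\ref{lem:hash} for all these restricted formulas simultaneously, which holds because Lemma~\ref{lem:hash} was applied once to $\Phi$ and restriction only shrinks the relevant sets. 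A secondary point to get right is the translation between "formal polynomial in the $T_j$-variables over the extension field generated by the remaining variables" and an honest ROABP over $\F$ (or a suitable extension), which is where the field-size hypothesis is quietly consumed; for individual degree $1$ and width $M \cdot s^t$ the required field size is $\poly$ in those quantities, comfortably below the bound assumed in the Remark.
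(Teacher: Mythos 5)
Your proposal is correct and follows essentially the same route as the paper's proof: define the partitions $\cA_i$ from the gates, invoke Lemma~\ref{lem:hash} to get a good $h$, use Lemma~\ref{lem:roabp-product-sparse-polys} to realize the restriction to each bucket $T_j$ as an ROABP of width $M\cdot (k+1)^{k\log n}$ over the extension field generated by the remaining variables, and substitute bucket by bucket using Theorem~\ref{thm:hitting-set-ROABP}. The additional bookkeeping you flag (sparsity $k+1$ including the constant term, preservation of the hash properties under restriction) matches the paper's argument.
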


\begin{proof}
For every multiplication gate $1 \le i \le M$ in $\Phi$, define a partition of the variables
\[
\cA_i = \{ \var(\ell_{i,j}) \cap \var(f) \mid 1 \le j \le t_i \}.
\]
Let $h \in \cF$ be the function guaranteed by Lemma~\ref{lem:hash} with respect to the partition $\cA_1,\ldots\cA_M$, and assume the setup of Construction~\ref{con:subexp-hitting-set-gen-multilinear}. We claim that there exists $\valpha \in I_h$ such that $f(\valpha) \neq 0$.

To that end, consider the partition of the variables induced by $h$: 
\[
T_1 \sqcup T_2 \sqcup \cdots \sqcup T_m = h^{-1}(1) \sqcup h^{-1}(2) \sqcup \cdots \sqcup h^{-1}(m).
\]

We view the polynomial as a polynomial $f_1$ in the variables of $T_1$, over the extension field $\F(T_2 \sqcup \cdots \sqcup T_n)$. We claim that $f_1$ can be computed by an ROABP of width $M \cdot {(k+1)}^{k \log n}$. To see this note that, by Lemma~\ref{lem:hash}, in any multiplication gate, at most $k \log n$ linear functions contain more than one variable from $T_1$, and each contains at most $k$ variables. It follows that the sparsity of every linear function (with respect to the variables in $T_1$) among those $k \log n$ functions, is at most $k+1$. By Lemma~\ref{lem:roabp-product-sparse-polys}, $f_1$ can be computed by an ROABP over $\F(T_2 \sqcup \cdots \sqcup T_n)$ of width $M \cdot {(k+1)}^{k \log n}$. By the hitting set property of Theorem~\ref{thm:hitting-set-ROABP},
  there exists $\ol{\alpha_1} \in \cH_1 \subseteq \F^{|T_1|}$ such that $f_2 \eqdef f_1|_{T_1 = \ol{\alpha_1}} \not\equiv 0$.

Similarly, the same conditions now hold for $f_2$, considered as a polynomial over the field $\F(T_3 \sqcup \cdots \sqcup T_n)$, and so there exists $\ol{\alpha_2} \in \cH_2 \subseteq \F^{|T_2|}$ such that $f_3 \eqdef f_2|_{T_2 = \ol{\alpha_2}} \not\equiv 0$.

We continue this process for $m$ steps, and at the last step we find $\ol{\alpha_{m}}$ such that $f_{m-1}(\ol{\alpha_{m}}) = f(\ol{\alpha_1},\cdots,\ol{\alpha_m}) \neq 0$, with $(\ol{\alpha_1},\cdots,\ol{\alpha_m}) \in \F^n$ being the length $n$ vector obtained by filling the entires of $\ol{\alpha_i} \in \F^{|T_i|}$ in the positions indexed by $T_i$.
\end{proof}

\subsection{Depth-$4$ Formulas}%  with small bottom support}

The argument for depth-$4$ formulas is very similar, apart from a small change in the setting of the parameters.

\begin{lem}
\label{lem:depth-$4$-multilinear-hitting-set-hits-small-support}
Let $f(x_1,\ldots, x_n) \in \F[x_1,\ldots,x_n]$ be a multilinear polynomial computed by a multilinear $\tfrsSPSP{M}{n^{1-\eps}}$ formula  
$\Phi = \sum_{i=1}^M \prod_{j=1}^{t_i} f_{i,j} $. Let $\cH$ be the set constructed in 
Lemma~\ref{con:subexp-hitting-set-gen-multilinear} with $s=2^k$. Then, there exists a point $\valpha \in \cH$ such that 
$f(\valpha) \neq 0$.
\end{lem}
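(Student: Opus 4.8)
The plan is to mimic the depth-$3$ argument of Lemma~\ref{lem:depth-$3$-multilinear-hitting-set-hits-small-support} almost verbatim, changing only the bookkeeping for the sparsity of the bottom polynomials. First I would, for each multiplication gate $1 \le i \le M$ of $\Phi = \sum_{i=1}^M \prod_{j=1}^{t_i} f_{i,j}$, define the partition of variables $\cA_i = \{ \var(f_{i,j}) \cap \var(f) \mid 1 \le j \le t_i \}$. Since $\Phi$ is a $\tfrsSPSP{M}{n^{1-\eps}}$ formula, within a fixed gate the sets $\var(f_{i,j})$ are pairwise disjoint and each has size at most $n^{1-\eps}$, so each $\cA_i$ is indeed a family of pairwise disjoint subsets of $[n]$ of size $\le n^{1-\eps}$, exactly the hypothesis of Lemma~\ref{lem:hash}. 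Invoking Lemma~\ref{lem:hash} with $\gamma = (\eps-\delta)/2$ (so that $m = 10 n^{1-(\eps+\delta)/2}$ matches Construction~\ref{con:subexp-hitting-set-gen-multilinear}) yields a hash function $h \in \cF$ such that for every gate $i$ and every bucket $j$: at most $k$ variables of any single $f_{i,j}$ land in $h^{-1}(j)$, and at most $k\log n$ of the polynomials $f_{i,j}$ in gate $i$ have more than one variable in $h^{-1}(j)$.

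Next I would run the same iterative substitution. Consider the partition $T_1 \sqcup \cdots \sqcup T_m = h^{-1}(1) \sqcup \cdots \sqcup h^{-1}(m)$ and view $f$ as a polynomial $f_1$ in the variables of $T_1$ over the extension field $\F(T_2 \sqcup \cdots \sqcup T_m)$. The key estimate: in each multiplication gate, at most $k\log n = t$ of the factors $f_{i,j}$ have more than one variable in $T_1$; call the rest ``$T_1$-univariate'' (as polynomials in the $T_1$-variables, after absorbing the other variables into the coefficient field). For each of the $\le t$ genuinely multivariate factors, it has at most $k$ variables in $T_1$, hence — being multilinear — its sparsity as a polynomial in those $T_1$-variables is at most $2^k$. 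This is precisely why we set $s = 2^k$ here instead of $s = k+1$ as in the depth-$3$ case: a linear function on $k$ variables has sparsity $k+1$, but an arbitrary multilinear polynomial on $k$ variables can have sparsity $2^k$. Now Lemma~\ref{lem:roabp-product-sparse-polys}, applied over $\F(T_2 \sqcup \cdots \sqcup T_m)$ with parameters ``at most $k=t$ multivariate factors per gate, each of sparsity $\le s = 2^k$'', shows $f_1$ is computed by an ROABP of width at most $M \cdot s^t = M \cdot (2^k)^{k\log n}$ — matching the width for which $\cH_1$ was built in Construction~\ref{con:subexp-hitting-set-gen-multilinear}. By Theorem~\ref{thm:hitting-set-ROABP} there is $\ol{\alpha_1} \in \cH_1$ with $f_2 \eqdef f_1|_{T_1 = \ol{\alpha_1}} \not\equiv 0$.

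I would then observe that $f_2$, viewed over $\F(T_3 \sqcup \cdots \sqcup T_m)$ as a polynomial in $T_2$, satisfies the very same structural conditions: substituting field elements for the $T_1$-variables does not increase the number of multivariate factors per gate nor their individual $T_2$-sparsities, and disjointness within a gate is preserved. So the argument repeats, producing $\ol{\alpha_2} \in \cH_2$ with $f_3 \not\equiv 0$, and after $m$ steps we reach a nonzero constant, i.e. a point $\valpha = (\ol{\alpha_1}, \ldots, \ol{\alpha_m}) \in I_h \subseteq \cH$ with $f(\valpha) \neq 0$.

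I expect the only real subtlety — the ``main obstacle'' — to be making rigorous the claim that after substituting $\ol{\alpha_1}, \ldots, \ol{\alpha_{i-1}}$ the relevant factors are still ``sparse in the remaining $T_i$-variables,'' so that Lemma~\ref{lem:roabp-product-sparse-polys} can be reapplied with the same width bound at every stage. This needs two small points spelled out: (1) for a fixed bucket $T_i$, the count of factors $f_{i,j}$ in gate $i$ with $|\var(f_{i,j}) \cap T_i| > 1$ is bounded by $k\log n$ by Lemma~\ref{lem:hash} (item~\ref{item:at-most-k-intersect}) regardless of the earlier substitutions, and (2) each such factor, restricted to $T_i$-variables, is multilinear on $\le k$ variables (item~\ref{item:no-intersection-larger-than-k}) hence has sparsity $\le 2^k$, and partial evaluation on the other variables only produces coefficients in the larger field without changing the $T_i$-monomial support; a factor with at most one $T_i$-variable contributes a univariate (degree $\le 1$) label, as Lemma~\ref{lem:roabp-product-sparse-polys} requires. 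Once these are stated, the size bound is immediate from Lemma~\ref{lem:subexp-hitting-set-gen-size} with $s = 2^k$, and the proof is complete.
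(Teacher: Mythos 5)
Your proposal is correct and follows essentially the same route as the paper's proof: the same partitions $\cA_i$ fed into Lemma~\ref{lem:hash}, the same ROABP width bound $M\cdot(2^k)^{k\log n}$ via Lemma~\ref{lem:roabp-product-sparse-polys} with the sparsity of a multilinear polynomial on $\le k$ variables replacing the $k+1$ bound of the depth-$3$ case, and the same bucket-by-bucket substitution using Theorem~\ref{thm:hitting-set-ROABP}. The extra care you take about why the structural conditions persist after each partial substitution is a point the paper leaves implicit, but it is not a different argument.
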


\begin{proof}
The proof is almost identical to that of Lemma~\ref{lem:depth-$3$-multilinear-hitting-set-hits-small-support}. In this case, for every $1 \le i \le M$ we define the partition
\[
\cA_i = \{ \var(f_{i,j}) \mid 1 \le j \le t_i \}.
\]

Note that the assumptions of Lemma~\ref{lem:hash} still hold, and so we denote by $h \in \cF$ the function guaranteed by Lemma~\ref{lem:hash} with respect to the partitions $\cA_1,\ldots\cA_M$, and again claim that there exists $\valpha \in I_h$ such that $f(\valpha) \neq 0$.

Consider once more the partition on the variables induced by $h$, 
\[
T_1 \sqcup T_2 \sqcup \cdots \sqcup T_m = h^{-1}(1) \sqcup h^{-1}(2) \sqcup \cdots \sqcup h^{-1}(m),
\]
and view the polynomial as a polynomial $f_1$ in the variables of $T_1$, over the field $\F(T_2 \sqcup \cdots \sqcup T_n)$. 

We now claim that $f_1$ can be computed by an ROABP of width $M \cdot {\left( 2^k \right)}^{k \log n} = 2^{k^2 \log n}$. The proof for this claim is exactly as in the depth-$3$ case, except that now the best bound we can give on the sparsity of each polynomial which intersects $T_1$ in more than one variable is $2^k$, as it is a multilinear polynomials in at most $k$ variables.

Similarly, we move on to handle $T_2,\ldots,T_m$ and obtain a point $\valpha$ such that $f(\valpha) \neq 0$.
\end{proof}

\section{Hitting Set for Depth-$3$ and Depth-$4$ Multilinear Formulas}
\label{sec:hit}

Recall that, in Section~\ref{sec:red}, we showed that any non-zero $\tfSPS{M}$ or $\tfSPSP{M}$ formula has a non-zero partial derivative (and, possibly, a restriction) which is computed by a non-zero $\tfrsSPS{M}{n^{1-\eps}}$ or $\tfrsSPSP{M}{n^{1-\eps}}$ formula, respectively. Then, in Section~\ref{sec:hit-bottom} we gave hitting sets for such formulas. In this section we provide the final ingredient, which is to show how to ``lift'' those hitting sets back to the general class, via a simple method, albeit one that requires some notation.

Handling restrictions is fairly easy, and causes no blow up in the hitting set size: If we have a set $\cH \subseteq \F^{n-r}$ that hits $f|_{B=0}$ for some multilinear polynomial $f(x_1,\ldots,x_n)$ and $B \subseteq [n]$ with $|B|=r$, then simply extending $\cH$ into a subset of $\F^n$ by filling out zeros in all the entries indexed by $B$ will hit $f$ itself.

In order to handle partial derivates, first note that if $f(x_1,\ldots,x_n)$ is a multilinear polynomial, then
\[
\partial_{x_i} f = f(x_1,\ldots,x_{i-1},1,x_{i+1},\ldots,x_n) - f(x_1,\ldots,x_{i-1},0,x_{i+1},\ldots,x_n),
\]
and so if $\partial_{x_i}f (\valpha) \neq 0$ for some $\valpha \in \F^n$ then at least one of the two evaluations on the right hand side must be non-zero as well.

Applying this fact repeatedly, given a set $A \subseteq [n]$ we can evaluate $\partial_A f$ at any point by making $2^{|A|}$ evaluations of $f$. Motivated by this, we introduce the following notation:

\begin{define}
\label{def:extension}
Let $f(x_1,\ldots,x_n) \in \F[x_1,\ldots,x_n]$ be a multilinear polynomial and $A,B \subseteq [n]$ be two disjoint subsets of variables with $|A|=r, |B|=r'$. Let $\cH \subseteq \F^{n-(r+r')}$.

We define the ``lift'' of $\cH$ with respect to $(A,B)$ to be $$\Ext_A^B(\cH) = \left(\{0,1\}^{\vphantom{r'}r}\right)^A \times \left(\{0\}^{r'}\right)^B \times \cH^{[n]\setminus  (A\sqcup B)}.$$ In the special case where $B=\emptyset$, we simply denote $\Ext_A^B(\cH) = \Ext_A (\cH)$.

%For any $\valpha \in \cH$, we define $\Ext_A^B(\valpha) \subseteq \F^n$ to be all the points $\vbeta \in \F^n$ such that $f(\vbeta)$ is needed to be evaluated in order to compute $\partial_A f|_{B=0} (\valpha)$. That is, $\Ext_A(\valpha)$ contains all the possible $2^r$ ways to extend $\valpha \in \F^{n-(r+r')}$ into $\vbeta \in \F^n$ by filling out zeros and ones within the $r$ entries that are indexed by $A$, and zeros in all the $r'$ entries indexed by $B$.
%
%We denote $\Ext_A^B(\cH) = \bigcup_{\valpha \in \cH} \Ext_A^B(\valpha)$. In the special case where $B=\emptyset$, we simply denote $\Ext_A^B(\cH) = \Ext_A (\cH)$.
\end{define}

 That is, for all $\valpha\in\cH$, $\Ext_A^B(\cH)$ contains all the possible $2^r$ ways to extend $\valpha$ into $\vbeta \in \F^n$ by filling out zeros and ones within the $r$ entries that are indexed by $A$, and zeros in all the $r'$ entries indexed by $B$.

\subsection{Depth-$3$ Formulas}

In this section we prove Theorem~\ref{thm:intro:hitting-set-depth-$3$}. For the reader's convenience, we first restate the theorem:

\begin{thm}[Theorem~\ref{thm:intro:hitting-set-depth-$3$}, restated]
\label{thm:hitting-set-depth-$3$}
Let $\cC$ be the class of multilinear $\tfSPS{M}$ formulas for $M=2^{n^\delta}$.
There exists a hitting set $\cH$ of size $|\cH| = 2^{\tilde{O}(n^{2/3+2\delta/3})}$ for $\cC$, that can be constructed in time $\poly(|\cH|)$.
\end{thm}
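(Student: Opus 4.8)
The plan is to combine the reduction of Section~\ref{sec:red} with the hitting set of Section~\ref{sec:hit-bottom}, optimizing the free parameter $\epsilon$. Let $\Phi$ be a multilinear $\tfSPS{M}$ formula of size at most $|\Phi| \le M^2 = 2^{2n^\delta}$ (the squaring is harmless in the $\tilde O$ notation), computing a non-zero polynomial $f$. First I would invoke Lemma~\ref{lem:red3}: there is a set $A$ of variables with $|A| \le \tilde O(n^\epsilon \log M) = \tilde O(n^{\epsilon + \delta})$ such that $\partial_A f$ is non-zero and computed by a multilinear $\tfrsSPS{M}{n^{1-\epsilon}}$ formula. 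Then, by Lemma~\ref{lem:depth-$3$-multilinear-hitting-set-hits-small-support} (with $s = k+1$ and $k = n^\delta + 2\log n$), the set $\cH_0$ from Construction~\ref{con:subexp-hitting-set-gen-multilinear} hits $\partial_A f$; by Lemma~\ref{lem:subexp-hitting-set-gen-size} its size is $(M \cdot (k+1)^{k\log n})^{\tilde O(n^{1-(\epsilon+\delta)/2})} = 2^{\tilde O(n^\delta \cdot n^{1-(\epsilon+\delta)/2})} = 2^{\tilde O(n^{1 - (\epsilon-\delta)/2})}$.

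Next I would assemble the final hitting set. Since we do not know $A$ in advance, we take the union over all choices of $A$ of size $r = \tilde O(n^{\epsilon+\delta})$, and for each we take the lift $\Ext_A(\cH_0)$ as in Definition~\ref{def:extension}, which multiplies the size by $\binom{n}{r} \cdot 2^r = 2^{\tilde O(n^{\epsilon+\delta})}$. Using the fact, recalled at the start of Section~\ref{sec:hit}, that $\partial_A f(\valpha) \ne 0$ implies $f(\vbeta) \ne 0$ for at least one of the $2^{|A|}$ extensions $\vbeta$ of $\valpha$ by zeros and ones on $A$, we conclude that $\cH \eqdef \bigcup_{|A| = r} \Ext_A(\cH_0)$ hits $f$. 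The total size is
$$
|\cH| \le 2^{\tilde O(n^{\epsilon + \delta})} \cdot 2^{\tilde O(n^{1 - (\epsilon - \delta)/2})} = 2^{\tilde O\left( n^{\epsilon + \delta} + n^{1 - (\epsilon-\delta)/2} \right)}.
$$

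It remains to optimize $\epsilon$. The two exponents are $\epsilon + \delta$ and $1 - (\epsilon-\delta)/2$; I would balance them by setting $\epsilon + \delta = 1 - (\epsilon - \delta)/2$, i.e. $\tfrac{3}{2}\epsilon = 1 - \tfrac{3}{2}\delta$, giving $\epsilon = \tfrac{2}{3} - \delta$ and common exponent $\epsilon + \delta = \tfrac{2}{3}$. Wait — this does not yet see the $\delta$-dependence claimed in the theorem; the point is that plugging $\epsilon = \tfrac{2}{3} - \delta$ back in gives exponent $\epsilon+\delta = \tfrac{2}{3}$, but one must recheck: actually the first exponent should be kept as $\epsilon + \delta$ only if $\epsilon \le 1$, and the careful bookkeeping of the polylogarithmic and lower-order $n^\delta$ factors (which come from $k = n^\delta + 2\log n$ appearing both in the width $M \cdot s^{k\log n}$ and in the hash family size $n^{O(k)}$) contributes the extra $n^{2\delta/3}$-type terms, yielding the stated bound $|\cH| = 2^{\tilde O(n^{2/3 + 2\delta/3})}$. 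So I would choose $\epsilon$ slightly more carefully to balance $n^{\epsilon+\delta}$ against the true exponent $1-(\epsilon-\delta)/2 + \tilde O(\delta)$ coming from $\log(M s^{k\log n}) = \tilde O(n^\delta) \cdot \tilde O(n^{1-(\epsilon+\delta)/2})$, which is $\tilde O(n^{1-(\epsilon-\delta)/2})$; setting $\epsilon + \delta = 1 - (\epsilon-\delta)/2$ still gives $\epsilon + \delta = 2/3$, and the $2\delta/3$ in the final exponent arises because the $\tilde O$ hides a factor $n^\delta$ on the $n^{1-(\epsilon+\delta)/2}$ term so the genuine exponent is $\max(\epsilon+\delta,\ \delta + 1 - (\epsilon+\delta)/2)$; balancing these gives $\epsilon + \delta = \delta + 1 - (\epsilon+\delta)/2$, hence $\tfrac32(\epsilon+\delta) = \tfrac{\delta}{2}\cdot 0 + \dots$ — the cleanest route is simply to carry the exponents symbolically and optimize at the end. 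The main obstacle is precisely this careful parameter bookkeeping: tracking how the $n^\delta$ and $\poly\log n$ factors inside $k$, $M$, $s$, and the ROABP width propagate through the $\log n$-th power in Theorem~\ref{thm:hitting-set-ROABP}, and verifying that the union over all sets $A$ does not dominate; the conceptual content (reduce to small bottom support, hit via ROABPs, lift the derivative) is already in place from the earlier sections, and I would finally note that constructibility in time $\poly(|\cH|)$ follows since every ingredient — Fact~\ref{fact:small-hash}, Theorem~\ref{thm:hitting-set-ROABP}, and the enumeration over $A$ — is explicit.
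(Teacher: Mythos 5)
Your proposal follows exactly the paper's route: reduce to a $\tfrsSPS{M}{n^{1-\eps}}$ formula via the derivative argument of Lemma~\ref{lem:red3}, hit the reduced formula with Construction~\ref{con:subexp-hitting-set-gen-multilinear} via Lemma~\ref{lem:depth-$3$-multilinear-hitting-set-hits-small-support}, and lift by taking the union of $\Ext_A(\cH_A)$ over all candidate sets $A$. Your size accounting up to the balancing step is also correct: the union/lift contributes exponent $\tilde O(n^{\eps+\delta})$ and each $\cH_A$ contributes exponent $\tilde O(n^{\delta}\cdot n^{1-(\eps+\delta)/2}) = \tilde O(n^{1-(\eps-\delta)/2})$.

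The problem is the final optimization, which you get wrong and then fail to repair. The equation you correctly write down, $\eps+\delta = 1-(\eps-\delta)/2$, rearranges to $\tfrac{3}{2}\eps = 1 - \tfrac{\delta}{2}$, not $\tfrac{3}{2}\eps = 1 - \tfrac{3}{2}\delta$; the solution is $\eps = \tfrac{2}{3}-\tfrac{\delta}{3}$ (this is precisely the choice in Construction~\ref{con:hit3}), and the common exponent is then $\eps+\delta = \tfrac{2}{3}+\tfrac{2\delta}{3}$, which is the theorem's bound. Your subsequent attempt to explain the "missing" $n^{2\delta/3}$ as coming from polylogarithmic and lower-order factors hidden in $k$, $s$, and the ROABP width is a red herring: those factors are already fully accounted for in the $n^{\delta}$ multiplier you (correctly) extracted when passing from $(M\cdot(k+1)^{k\log n})^{\tilde O(n^{1-(\eps+\delta)/2})}$ to $2^{\tilde O(n^{1-(\eps-\delta)/2})}$. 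The $2\delta/3$ in the final exponent comes solely from solving the balancing equation correctly, and the trailing discussion ending in "carry the exponents symbolically and optimize at the end" leaves the proof without an actual derivation of the claimed bound. Fixing the one line of algebra closes the gap and makes your argument identical to the paper's.
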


The size of the hitting set is subexponential for any constant $\delta < 1/2$. Also, if $M=\poly(n)$ then the size of the hitting set is $2^{\tilde{O}(n^{2/3})}$.

With Definition~\ref{def:extension} in hand, we now provide our construction for $\tfSPS{M}$ formulas, towards the proof of Theorem~\ref{thm:hitting-set-depth-$3$}.

\begin{construct}[Hitting set for multilinear $\tfSPS{M}$ formulas]
\label{con:hit3}
Let $M=2^{n^\delta}$ and  $\eps =2/3 - \delta/3$. Let $r = \tilde{O}(n^{\eps} \log M) = \tilde{O}(n^{\frac{2}{3} + \frac{2}{3}\delta})$ as guaranteed by Lemma~\ref{lem:red3}.
For every $A \in \binom{[n]}{\le r}$, construct a set $\cH_A \in \F^{n-|A|}$ using 
Construction~\ref{con:subexp-hitting-set-gen-multilinear} with parameters $\delta,\epsilon,n,k,s=k+1$ and $M$ (recall that  in Construction~\ref{con:hit3} we set $k=n^{\delta} + 2 \log n$).
Finally, let
\[
\cH = \bigcup_{A \in \binom{[n]}{\le r}} \Ext_A(\cH_A).
\]
\end{construct}

We are now ready to prove Theorem~\ref{thm:hitting-set-depth-$3$}:

\begin{proof}[Proof of Theorem \ref{thm:hitting-set-depth-$3$}]
We show that the set $\cH$ constructed in Construction~\ref{con:hit3} has the desired properties.
First, note that by Lemma~\ref{lem:subexp-hitting-set-gen-size}, for every $A \subseteq [n]$ with $$|A| \leq  \tilde{O}(n^{\eps} \log M) =\tilde{O}(n^{2/3-\delta/3} \log M)= \tilde{O}(n^{\frac{2}{3} + \frac{2}{3}\delta}),$$
the set $\cH_A$ has size
\[
(M \cdot (k+1)^{k \log n})^{\tilde{O}(n^{2/3-\delta/3})} = 2^{\tilde{O}(n^{2/3+2\delta/3})},
\]
where we have used the fact that, in Construction~\ref{con:hit3}, we take $k=n^{\delta} + 2 \log n$.
It therefore follows that 
\[
|\Ext_A(\cH_A)| \le 2^{|A|} \cdot |\cH_A| = 2^{\tilde{O}(n^{2/3+2\delta/3})},
\]
and that
\[
|\cH| \le \sum_{i=0}^{\tilde{O}(n^{\frac{2}{3} + \frac{2}{3}\delta})} \sum_{A \subseteq [n], |A| = i} |\Ext_A(\cH_A)| =  2^{\tilde{O}(n^{2/3+2\delta/3})}.
\]

To show the hitting property of $\cH$, let $f(x_1,\ldots,x_n)$ be a non-zero multilinear polynomial computed by a $\tfSPS{M}$ formula, and let $A' \subseteq [n]$ be the set guaranteed by Lemma~\ref{lem:red3}.  Thus, $|A'| \leq \tilde{O}(n^{\eps} \log M)= \tilde{O}(n^{\frac{2}{3} + \frac{2}{3}\delta})$. Then by Lemma~\ref{lem:depth-$3$-multilinear-hitting-set-hits-small-support}, 
there exists $\alpha \in \cH_{A'}$ such that $\partial_{A'} f (\valpha) \neq 0$, and so there must exist
\[
\vbeta \in \Ext_{A'}(\cH_{A'}) \subseteq \cH
\] 
such that $f(\vbeta) \neq 0$.
\end{proof}

\subsection{Depth-$4$ Formulas}

Moving on to depth-$4$, the construction and proof are both very similar, with a slight change in the parameters. We first give a slightly more general form of Theorem~\ref{thm:intro:hitting-set-depth-$4$} that we will later use for regular formulas.

\begin{thm}[General theorem for multilinear $\SPSP$ formulas]
\label{thm:hitting-set-depth-$4$-general}
Let $\cC$ be the class of multilinear $\SPSP$ formulas of top fan-in $M$ and size $S$ so that $(\log M)^3\cdot \log S = o(n)$.
There exists a hitting set $\cH$ of size $|\cH| =  2^{\tilde{O}( n^{2/3} \cdot \log M \cdot (\log S)^{1/3} )}$ for $\cC$, that can be constructed in time $\poly(|\cH|)$.
\end{thm}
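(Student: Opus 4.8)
The plan is to run the depth-$4$ argument of Section~\ref{sec:hit} with all parameters kept symbolic in $M$ and $S$, and optimize at the end. Fix $\eps\in(0,1)$ to be chosen later and set $\tau=n^{1-\eps}$; choose $\delta$ so that $n^{\delta}=\log M$ (so that the top fan-in $M=2^{n^\delta}$ matches the hypotheses of Construction~\ref{con:subexp-hitting-set-gen-multilinear} and Lemma~\ref{lem:hash}) and $k=n^{\delta}+2\log n$. By Proposition~\ref{prop:simple} we may assume every input formula is simple, and by its corollary this is preserved under taking derivatives and restrictions. The hitting set is
\[
\cH=\bigcup_{\substack{A,B\subseteq[n]\ \mathrm{disjoint}\\ |A\sqcup B|\le \frac{2n}{\tau}\log S}}\Ext_A^B(\cH_{A,B}),
\]
where $\cH_{A,B}\subseteq\F^{\,n-|A\sqcup B|}$ is produced by Construction~\ref{con:subexp-hitting-set-gen-multilinear} on the variable set $[n]\setminus(A\sqcup B)$ with parameters $\delta,\eps,k$, $s=2^{k}$, and top fan-in bound $M$.

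For correctness, let $f\ne 0$ be computed by a simple multilinear $\SPSP$ formula of top fan-in $M$ and size $S$. By Corollary~\ref{cor:depth4} there are disjoint $A',B'$ with $|A'\sqcup B'|\le\frac{2n}{\tau}\log S$ such that $g\eqdef\D_{A'}f|_{B'=0}$ is non-zero and is computed by a simple multilinear $\tfrsSPSP{M}{n^{1-\eps}}$ formula. Applying Lemma~\ref{lem:depth-$4$-multilinear-hitting-set-hits-small-support} with $s=2^{k}$ gives $\valpha\in\cH_{A',B'}$ with $g(\valpha)\ne 0$. Finally, using $\D_{x}h=h|_{x=1}-h|_{x=0}$ for multilinear $h$ together with the fact that a restriction is just a substitution of zeros, the induction justifying Definition~\ref{def:extension} produces $\vbeta\in\Ext_{A'}^{B'}(\cH_{A',B'})\subseteq\cH$ with $f(\vbeta)\ne 0$.

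For the size, note $k=\tilde{O}(\log M)$, so with $s=2^{k}$ the ROABP width in Construction~\ref{con:subexp-hitting-set-gen-multilinear} is $M\cdot s^{k\log n}=2^{\tilde{O}((\log M)^{2})}$; since $n^{\delta}=\log M$, Lemma~\ref{lem:subexp-hitting-set-gen-size} gives $|\cH_{A,B}|=\bigl(M\cdot s^{k\log n}\bigr)^{\tilde{O}(n^{1-(\eps+\delta)/2})}=2^{\tilde{O}((\log M)^{3/2}\,n^{1-\eps/2})}$. The number of admissible pairs $(A,B)$ is $n^{\tilde{O}(n^{\eps}\log S)}$ (choose $A\sqcup B$, then $2$-colour it), and the lift $\Ext_A^B$ multiplies by at most $2^{|A|}\le 2^{\tilde{O}(n^{\eps}\log S)}$, so
\[
|\cH|=2^{\tilde{O}\bigl(n^{\eps}\log S\;+\;(\log M)^{3/2}\,n^{1-\eps/2}\bigr)}.
\]
Balancing the two terms by taking $n^{\eps}=\log M\cdot n^{2/3}(\log S)^{-2/3}$ makes both equal to $n^{2/3}\log M\,(\log S)^{1/3}$, giving $|\cH|=2^{\tilde{O}(n^{2/3}\log M\,(\log S)^{1/3})}$, and the construction runs in time $\poly(|\cH|)$ by Lemma~\ref{lem:subexp-hitting-set-gen-size} and Fact~\ref{fact:small-hash}.

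The routine-but-delicate part — and the main thing to get right — is checking that this choice of $\eps$ is admissible: we need $0<\delta<\eps<1$ and the constraint $n^{\eps-2\gamma}\le k$ of Lemma~\ref{lem:hash} with $\gamma=(\eps-\delta)/2$ (the last reduces to $n^{\delta}=\log M\le k$, which is trivial). One verifies $\eps>\delta$ since $n^{\eps}/n^{\delta}=(n/\log S)^{2/3}>1$, and $\eps<1$ together with subexponentiality of the bound both reduce exactly to the hypothesis $(\log M)^{3}\log S=o(n)$. A minor point is that taking derivatives and restrictions does not increase top fan-in or size (guaranteed by Corollary~\ref{cor:depth4} and the corollary to Proposition~\ref{prop:simple}), and that the field is large enough for Theorem~\ref{thm:hitting-set-ROABP} to apply to ROABPs of width $2^{\tilde{O}((\log M)^{2})}$ — if not, we query over an extension field per the standing remark. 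I do not anticipate a genuine obstacle here: the proof is a parametric rerun of the depth-$4$ case, and the only real work is the parameter optimization.
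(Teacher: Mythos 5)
Your proposal is correct and follows essentially the same route as the paper's own proof: it rebuilds Construction~\ref{con:hit4-general} (union of lifted sets $\Ext_A^B(\cH_{A,B})$ over small disjoint $A,B$, with $s=2^k$ and $n^\delta=\log M$), proves hitting via Corollary~\ref{cor:depth4} followed by Lemma~\ref{lem:depth-$4$-multilinear-hitting-set-hits-small-support}, and arrives at the identical choice $n^{\eps}=n^{2/3}\log M\,(\log S)^{-2/3}$ by balancing the same two terms. The only differences are cosmetic (bounding $|A\sqcup B|$ jointly rather than $|A|,|B|$ separately, and writing the per-set bound as $2^{\tilde{O}((\log M)^{3/2}n^{1-\eps/2})}$ instead of $2^{\tilde{O}(n^{1-\eps/2+3\delta/2})}$, which coincide since $n^\delta=\log M$).
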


We note that Theorem~\ref{thm:intro:hitting-set-depth-$4$} is an immediate corollary of Theorem~\ref{thm:hitting-set-depth-$4$-general}.

\begin{proof}[Proof of Theorem~\ref{thm:intro:hitting-set-depth-$4$}]
Apply Theorem~\ref{thm:hitting-set-depth-$4$-general} with $M=S=2^{n^{\delta}}$ for some constant $0<\delta<1/4$ (the bound in Theorem~\ref{thm:intro:hitting-set-depth-$4$} is meaningless for $\delta\geq 1/4$). It is clear that the conditions of Theorem~\ref{thm:hitting-set-depth-$4$-general} are met. Thus, we obtain a hitting set of size $|\cH| =  2^{\tilde{O}( n^{2/3} \cdot \log M \cdot (\log S)^{1/3} )} = 2^{\tilde{O}( n^{2/3 + 4\delta/3})}$ for the class.  
\end{proof}

For the proof of Theorem~\ref{thm:hitting-set-depth-$4$-general} we will use the following construction that is similar to Construction~\ref{con:hit3}.

\begin{construct}[Hitting set for multilinear $\SPSP$ formulas]
\label{con:hit4-general}
Let $M$ and $S$ be such that $(\log M)^3\cdot \log S = o(n)$.
Denote $M=2^{n^\delta}$ (hence $S = 2^{o(n^{1-3\delta})}$).
%Let $\eps = 2/3 + \delta /3$, and $r =2n^{\eps} \log S$.
Let $\epsilon$ be such that $$n^\epsilon = n^{2/3}\cdot \log M/(\log S)^{2/3}.$$ Set $$r =2n^{\eps} \log S = 2n^{2/3} \cdot \log M \cdot (\log S)^{1/3}.$$

For every two disjoint sets $A,B \subseteq [n]$ with $|A|,|B|\le r$, construct a set $\cH_{A,B} \in \F^{n-(|A|+|B|)}$ 
using Construction~\ref{con:subexp-hitting-set-gen-multilinear} with parameters $\delta,\epsilon,n,k,s=2^k$ and $M$ (recall that in Construction~\ref{con:subexp-hitting-set-gen-multilinear} we set $k=n^\delta + 2 \log n$).
Finally, let
\[
\cH = \bigcup_{\substack{A,B \in \binom{[n]}{\le r} \\ A\cap B=\emptyset}} \Ext_A^B(\cH_{A,B}).
\]
\end{construct}

\begin{proof}[Proof of Theorem~\ref{thm:hitting-set-depth-$4$-general}]
We show that the set $\cH$ constructed in Construction~\ref{con:hit4-general} has the desired properties. 
%Again we let $M$ denote the fan-in of the top sum gate, and recall that $M \le S$.
First, note that by Lemma~\ref{lem:subexp-hitting-set-gen-size}, for every $A, B \subseteq [n]$ with 
$|A|, |B|  \le 2n^{\epsilon} \log S$, the set $\cH_{A,B}$ has size
\[{\left (M \cdot 2^{k^2 \log n} \right)}^{\tilde{O}(n^{1-(\eps+\delta)/2})} = 
{\left ( 2^{k^2 \log n} \right)}^{\tilde{O}(n^{1-(\eps+\delta)/2})}=2^{\tilde{O}(n^{1-\eps/2 +3\delta/2})} ,
\]
for $k=n^\delta + 2 \log n$.
It therefore follows that 
\[
|\Ext_A^B(\cH_{A,B})| \le 2^{|A|} \cdot |\cH_{A,B}| = 2^{2n^{\epsilon} \log S} \cdot 2^{\tilde{O}(n^{1-\eps/2 +3\delta/2})},
\]
and that
\[
|\cH| \le \sum_{i,j=0}^{2n^{\epsilon} \log S} 
\sum_{\substack{A \subseteq [n] \\ |A| = i}}
\sum_{\substack{B \subseteq [n] \\ |B| = j}} |\Ext_A^B(\cH_{A,B})| =  2^{\tilde{O}(n^{\epsilon} \log S)} \cdot 2^{\tilde{O}(n^{1-\eps/2 +3\delta/2})}.
\]
By our setting of parameters
\begin{align*}
|\cH| \le 2^{\tilde{O}(n^{\epsilon} \log S)} \cdot 2^{\tilde{O}(n^{1-\eps/2 +3\delta/2})} &= 2^{\tilde{O}(n^{\epsilon} \log S + n^{1-\eps/2 +3\delta/2})} \\ &=^{(\dagger)} 2^{\tilde{O}( n^{2/3} \cdot \log M \cdot (\log S)^{1/3} )} ,
\end{align*}
%$n^{1-\epsilon/2} = n^{2/3}(\log S)^{1/3}/ \sqrt{\log M}$
where equality $(\dagger)$ follows from our choice of $\epsilon$ in Construction~\ref{con:hit4-general} and the fact that $\log M = n^\delta$.
%n^{2/3} \cdot \log M /(\log S)^{2/3}

To show the hitting property of $\cH$, let $f(x_1,\ldots,x_n)$ be a non-zero multilinear polynomial computed by a multilinear $\tfSPSP{M}$ formula of size $S$, and let $A',B' \subseteq [n]$ be 
the sets guaranteed by Lemma~\ref{cor:depth4} with $\tau=n^{1-\epsilon}$. Thus, $|A'|,|B'|\leq 2n^{\epsilon} \log S$. Then, by 
Corollary~\ref{lem:depth-$4$-multilinear-hitting-set-hits-small-support}, there exists $\valpha \in \cH_{A',B'}$ such that 
$\partial_{A'} f|_{B'=0} (\valpha) \neq 0$, and so there must exist
\[
\vbeta \in \Ext_{A'}^{B'}(\cH_{A',B'}) \subseteq \cH
\] 
such that $f(\vbeta) \neq 0$.
\end{proof}

%%%%%%%%%%%%%%%%%%%%%%%%%%%%%%%%%%%%%%%%%%%%%%%
%%%%%%%%%%%%%%%%%%%%%%%%%%%%%%%%%%%%%%%%%%%%%%%
%%%%%%%%%%%%%%%%%%%%%%%%%%%%%%%%%%%%%%%%%%%%%%%
\section{Multilinear Depth-$d$ Regular Formulas}
\label{sec:regular}

\subsection{Definition and Background}

In \cite{KayalSS14}, Kayal et al.\ define \emph{regular formulas}, which consist of formulas with alternating layers of sum
and product gates such that the fan-in of all the gates in any fixed layer is the same. In addition, they require the formal (syntactic) degree of the formula must be at most twice the (total) degree of its output polynomial.
They showed that any $n^{O(1)}$-sized arithmetic circuit can be computed by a regular formula of size $n^{O(\log^2 n)}$ and proved a lower bound of $n^{\Omega(\log n)}$ on the size of regular formulas that compute some explicit polynomial in $\VNP$. 

In this paper, we consider \emph{multilinear regular formulas}, which are regular formulas with the extra condition that each gate
computes a multilinear polynomial. However, we will remove the bound on the formal degree of the formula. 
More precisely, we have the following definition:

\begin{define}[Multilinear Regular Formulas]\label{def:reg-form}
	We say that a formula $\Phi$ is a multilinear $(a_1, p_1, a_2, p_2, \ldots, a_d, p_d, a_{d+1})$-regular formula computing a
	multilinear polynomial $f(x_1, \ldots, x_n)$ if it can be computed by a multilinear
	$ \Sigma^{[a_1]}\Pi^{[p_1]}\Sigma^{[a_2]}\Pi^{[p_2]}\ldots\Sigma^{[a_d]}\Pi^{[p_d]}\Sigma^{[a_{d+1}]}\text{-formula} $.
	Notice that the size of such a formula is $(\prod_{1 \le i \le d+1} a_i) \cdot (\prod_{1 \le i \le d} p_i)$ and the formal
	degree of such a formula is given by $\deg(\Phi) = \prod_{1 \le i \le d} p_i$. Since the formula is multilinear, we have that
	$\deg(\Phi) \le n$.
\end{define}

Comparing with the definition given in Section~\ref{sec:def}, an $(a_1, p_1, a_2, p_2, \ldots, a_d, p_d, a_{d+1})$-regular formula has depth $2d+1$. 

%%%%%%%%%%%%%%%%%%%%%%%%%%%%%%%%%%%%%%%%%%%%%%%%%%%%%%%%
\subsection{Reduction to Depth-$4$ Formulas}

In this section, we reduce a multilinear depth-$d$ regular formula to a depth-$4$ formula. 
We first give a depth reduction lemma (Lemma~\ref{lem:squeeze}) that tells us that we can reduce the depth by one with a slight blow up in the fan-ins of the regular formula. We then use this lemma to obtain a depth-$4$ formula. The idea is to break the regular formula into two formulas (the top part and the bottom part), and then to apply the depth reduction lemma separately to these two formulas. The delicate part is that we wish to obtain a depth-$4$ formula that has a subexponential size hitting set, as in Theorem~\ref{thm:hitting-set-depth-$4$-general}. For this we need the top fan-in $M$ and the total size $S$ to satisfy that $(\log M)^3\cdot \log S  =o(n)$. To achieve this we should carefully select the point in which to divide the formula. This is done in Theorem~\ref{thm:squeeze}.

%We perform this reduction
%in two stages: in the first stage, the depth reduction lemma below tells us that we can reduce the depth by one with a slight blow up
%in the fan-ins of the regular formula. In the second stage, we reduce a multilinear depth-$d$ regular formula
%to a small depth-$4$ formula, where we balance the size of the formula by dividing the regular formula into two formulas (the top part and the bottom part), and applying the depth reduction lemma separately to these two formulas. Then, we recombine these 
%depth reduced formulas to produce our depth-$4$ circuit.

We start with the depth reduction lemma.

\begin{lem}[Depth Reduction Lemma]\label{lem:squeeze}
	Let $\Psi$ be a multilinear $(a_1, p_1, a_2, p_2, 1)$-regular formula computing a polynomial $f(x_1, \ldots, x_n)$.
	Then, there exists a multilinear $(a_1a_2^{p_1}, p_1p_2, 1)$-regular formula $\Phi$ computing $f(x_1, \ldots, x_n)$.
\end{lem}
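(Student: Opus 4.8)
The statement is a standard "collapse one $\Sigma\Pi$ block into the layer above it by brute-force expansion" argument, so the plan is to start from the explicit form of a $(a_1,p_1,a_2,p_2,1)$-regular formula and multiply out the middle product gates. Concretely, $\Psi$ computes
\[
f = \sum_{i=1}^{a_1} \prod_{j=1}^{p_1} g_{i,j},
\]
where each $g_{i,j} = \sum_{k=1}^{a_2} Q_{i,j,k}$ is a sum of $a_2$ many products of $p_2$ linear forms (the bottom $\Sigma^{[a_2]}\Pi^{[p_2]}\Sigma^{[1]}$ part). The idea is to expand the product $\prod_{j=1}^{p_1} g_{i,j}$ using distributivity: it becomes $\sum_{\kappa} \prod_{j=1}^{p_1} Q_{i,j,\kappa(j)}$, where $\kappa$ ranges over all functions $[p_1]\to[a_2]$, of which there are $a_2^{p_1}$. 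Each term $\prod_{j=1}^{p_1} Q_{i,j,\kappa(j)}$ is a product of $p_1$ blocks, each block itself a product of $p_2$ linear forms, hence a product of $p_1 p_2$ linear forms — i.e.\ a single $\Pi^{[p_1 p_2]}\Sigma^{[1]}$ gate. Summing over $i\in[a_1]$ and over the $a_2^{p_1}$ choices of $\kappa$, we get a $\Sigma^{[a_1 a_2^{p_1}]}\Pi^{[p_1 p_2]}\Sigma^{[1]}$ formula, which is exactly a $(a_1 a_2^{p_1}, p_1 p_2, 1)$-regular formula.

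**Key steps, in order.** First, write out $\Psi$ layer by layer and name the subformulas at the third layer ($g_{i,j}$) and the linear forms at the bottom. Second, apply the distributive law to each product gate $\prod_{j=1}^{p_1} g_{i,j}$ and count: the new top fan-in is $a_1 \cdot a_2^{p_1}$ and each resulting product has exactly $p_1 p_2$ linear factors (pad with the constant $1$ if some $Q_{i,j,k}$ has fewer than $p_2$ genuine factors, so regularity is literally satisfied). Third, check multilinearity is preserved: every gate in the new formula computes either a linear form, a product of some of the $g_{i,j}$-subproducts (a partial product of a multilinear gate of $\Psi$, hence still multilinear since $\Psi$ was multilinear and the relevant sets of variables are disjoint), or the output $f$ itself; since each such polynomial already appeared (or is a sub-product of a polynomial that appeared) in $\Psi$, multilinearity is inherited. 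One should remark that the degree bound is also respected: $\deg(\Phi) = p_1 p_2 = \deg(\Psi) \le n$.

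**Main obstacle.** There is no real mathematical difficulty here; the "hard part" is purely bookkeeping — making sure that after expansion every gate is genuinely multilinear and that the fan-ins match the claimed regular shape exactly (in particular handling the degenerate case where an intermediate $Q_{i,j,k}$ is a product of fewer than $p_2$ linear forms, or where $g_{i,j}$ has fewer than $a_2$ summands, by padding with $1$'s and $0$'s so the literal counts $a_1 a_2^{p_1}$ and $p_1 p_2$ are achieved). I would state the expansion cleanly with the function $\kappa:[p_1]\to[a_2]$ as the index of the new sum, verify the two fan-in counts, note that multilinearity and the disjoint-support structure carry over because each new product gate's factors are (sub-products of) factors that were multiplied together in $\Psi$, and conclude.
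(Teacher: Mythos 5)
Your proof is correct and follows essentially the same route as the paper: write out the middle $\Pi^{[p_1]}\Sigma^{[a_2]}$ block explicitly, expand by distributivity over all $a_2^{p_1}$ choices of one summand per product-gate child (the paper indexes these by tuples $(k_{i,1},\ldots,k_{i,p_1})\in[a_2]^{p_1}$, you by functions $\kappa:[p_1]\to[a_2]$, which is the same thing), and count the resulting top fan-in $a_1 a_2^{p_1}$ and product fan-in $p_1p_2$. Your additional remarks on padding and on why multilinearity is inherited are sensible bookkeeping that the paper leaves implicit.
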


\begin{proof}
	Notice that a multilinear $(a_1, p_1, a_2, p_2, 1)$-regular formula is a $\Sigma^{[a_1]}\Pi^{[p_1]}\Sigma^{[a_2]}\Pi^{[p_2]}$
	formula. Writing $\Psi$ as $\sum_{i=1}^{a_1}\prod_{j_i=1}^{p_1}\sum_{k_{j_i}=1}^{a_2} m(i, j_{i}, k_{j_{i}})$, where each 
	$m(i, j_{i}, k_{j_{i}})$ is
	a monomial that is a product of $p_2$ input gates, and by expanding the expression above by computing all the products,
	we obtain:
	\begin{equation}
	\label{eq:regular-squeeze}
 \ds\sum_{i=1}^{a_1}\prod_{j_i=1}^{p_1}\sum_{k_{j_i}=1}^{a_2} m(i, j_{i}, k_{j_{i}}) =  
	\ds\sum_{i=1}^{a_1}\left(\sum_{k_{i,1}=1}^{a_2}\sum_{k_{i,2}=1}^{a_2} \ldots \sum_{k_{i,p_1}=1}^{a_2} 
	\prod_{t=1}^{p_1} m(i, t , k_{i,t}) \right). 
	\end{equation}
 
	Since $m(i, j_{i}, k_{j_{i}})$ is a product of $p_2$ input gates, and since the right hand side of \eqref{eq:regular-squeeze} computes a product of $p_1$ of these
	terms, each monomial computed by $\prod_{t=1}^{p_1} m(i, t , k_{i,t})$ is a product of $p_1p_2$ input gates.
	
	Since the sums on the right hand side of \eqref{eq:regular-squeeze} are over all tuples of the form $(i, k_{i,1}, k_{i,2} \ldots, k_{i,p_1}) \in [a_1] \times [a_2]^{p_1}$,
	we have that there are exactly $a_1 \cdot a_2^{p_1}$ summands. Hence, the right hand side of \eqref{eq:regular-squeeze} is the expression of a multilinear
	$(a_1a_2^{p_1}, p_1p_2, 1)$-regular formula.
\end{proof}

By repeatedly applying the depth reduction lemma above, we obtain the following theorem:

\begin{thm}[Depth Reduction of Regular Formulas]\label{thm:squeeze}
	Let $d \ge 2$ be an integer, $c \in \R$ a constant such that $c \ge 3$, and $\Psi$ a multilinear 
	$(a_1, p_1, a_2, p_2, \ldots, a_d, p_d, a_{d+1})$-regular formula 
	of size $S$ computing a multilinear polynomial $f(x_1, \ldots, x_n)$. Then, one of the following conditions
	must happen:
	\begin{enumerate}[(i)]
		\item \label{item:bounded-size} For $M=S$, there exists a $\tfSPSP{M}$ formula of size $O(S \cdot n^{n^{1-(1/c)^d}})$ computing 
		$f(x_1, \ldots, x_n)$, or
	\item \label{item:bounded-fan-in} There exists $t \in [d-1]$ such that  
		there is a multilinear $\tfSPSP{M}$ formula $\Phi$ computing $f(x_1, \ldots, x_n)$, 
		with 
		$$ M = S^{n^\alpha} \ \text{ and } \ |\Phi| \le 2Mn \cdot n^{n^{1-(c-1)\alpha}} , $$
		for $ \alpha = \frac{1}{c-1} \cdot \left(\frac{1}{c}\right)^{d-t}$.
	\end{enumerate}
\end{thm}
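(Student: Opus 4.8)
The plan is to prove Theorem~\ref{thm:squeeze} by choosing a ``cut level'' $t$ inside the regular formula $\Psi$, collapsing everything above that level into a single $\Sigma\Pi$ layer by repeatedly invoking the Depth Reduction Lemma (Lemma~\ref{lem:squeeze}), and observing that every sub-formula hanging below the cut computes a multilinear polynomial of small formal degree, hence a sparse polynomial. Concretely, fix $t$ and consider the $\Sigma^{[a_{t+1}]}$ gates sitting at level $2t+1$ of $\Psi$; let $g_1,g_2,\dots$ be the polynomials they compute. Each $g_j$ is computed by a multilinear $(a_{t+1},p_{t+1},\dots,a_d,p_d,a_{d+1})$-regular formula, so it is multilinear of degree at most $D_t:=\prod_{i=t+1}^{d}p_i$ and therefore $\|g_j\|\le\binom{n}{\le D_t}\le n^{D_t}$. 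Treating the $g_j$'s as leaves, the part of $\Psi$ above level $2t+1$ is a multilinear $(a_1,p_1,\dots,a_t,p_t,1)$-regular formula; applying Lemma~\ref{lem:squeeze} to sub-formulas $t-1$ times (merging the bottom two blocks at each step) collapses it to a $(A_t,P_t,1)$-regular formula, i.e.\ a $\Sigma^{[A_t]}\Pi^{[P_t]}$ circuit, with $P_t=\prod_{i=1}^{t}p_i$ and $A_t=\prod_{i=1}^{t}a_i^{\,p_1\cdots p_{i-1}}$. This rewrites $f$ as $\sum_{i=1}^{A_t}\prod_{j=1}^{P_t}g_{i,j}$, a multilinear $\tfSPSP{A_t}$ formula of size $O(A_t\cdot P_t\cdot n^{D_t})$; multilinearity and the disjoint-support condition at the product gates are preserved because they hold in $\Psi$ and survive the expansions in Lemma~\ref{lem:squeeze}.

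The two quantities to control are the new top fan-in $A_t$ and the bottom degree $D_t$. Writing $E_t:=\prod_{i=1}^{t}p_i$, multilinearity gives $E_d=\deg(\Psi)\le n$, whence $D_t=E_d/E_t\le n/E_t$; and since $\prod_i a_i\le S$ and the exponents $p_1\cdots p_{i-1}$ appearing in $A_t$ are all at most $E_{t-1}$, we get $\log A_t=\sum_i (p_1\cdots p_{i-1})\log a_i\le E_{t-1}\log S$, i.e.\ $A_t\le S^{E_{t-1}}$. So the cut at level $t$ produces a $\tfSPSP{M}$ formula with $M\le S^{E_{t-1}}$ and bottom support at most $D_t\le n/E_t$, and the only object we need to reason about is the nondecreasing sequence $1=E_0\le E_1\le\cdots\le E_d\le n$.

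It remains to choose $t$. Write $E_t=n^{e_t}$, so $0=e_0\le e_1\le\cdots\le e_d\le 1$, and compare this sequence against the geometric thresholds $(1/c)^{d-t}$ for $t=1,\dots,d-1$. I would look for an index $t$ with $e_{t-1}\le\tfrac{1}{c-1}(1/c)^{d-t}$ (so $M\le S^{E_{t-1}}\le S^{n^{\alpha}}$ for $\alpha=\tfrac{1}{c-1}(1/c)^{d-t}$) and $e_t\ge(1/c)^{d-t}$ (so $D_t\le n^{1-(1/c)^{d-t}}=n^{1-(c-1)\alpha}$); plugging such a $t$ into the shape above gives conclusion (ii), the prefactor $2Mn$ absorbing $P_t\le n$ and lower-order terms. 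Such a $t$ is produced by a short greedy argument that walks the level downward: either both inequalities hold at the current level and we stop, or the top inequality fails, i.e.\ $e_{t-1}>\tfrac{1}{c-1}(1/c)^{d-t}\ge(1/c)^{d-t+1}$, which is exactly the bottom inequality at level $t-1$, so we may continue; the walk bottoms out at $t=1$, where $e_0=0$ makes the top inequality free, and there $M=a_1\le S$, so it is natural to record this endpoint (and the dual ``bottom-heavy'' endpoint, where the prefix products stay small throughout and one cuts so low that only linear forms survive at the bottom) as the special conclusion (i) with $M=S$. The inequality $\tfrac1{c-1}\ge\tfrac1c$ used to continue the walk holds for any $c>1$; the role of the stronger hypothesis $c\ge3$ is to give the thresholds enough spread that the slack lost to $\deg(\Psi)\le n$ and to the (essentially constant) number of layers $d$ can be absorbed into the stated exponents.

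The main obstacle is precisely this last step: making the threshold arithmetic close so that a single cut level simultaneously tames $E_{t-1}$ (the top blow-up) and $D_t=\deg(\Psi)/E_t$ (the bottom sparsity) with the exact exponents $\alpha=\tfrac{1}{c-1}(1/c)^{d-t}$ and $1-(c-1)\alpha=1-(1/c)^{d-t}$, and cleanly sorting the two boundary situations into conclusion (i) with $M=S$ and the size bound $O(S\cdot n^{n^{1-(1/c)^d}})$. Everything else — the depth collapse via Lemma~\ref{lem:squeeze}, the sparsity-from-degree bound $\|g_j\|\le n^{D_t}$, preservation of multilinearity and disjoint supports at the product gates, and the size bookkeeping — is routine once the cut level is fixed. (The careful choice of $t$ is exactly what will later let the resulting $\SPSP$ formula satisfy the hypothesis $(\log M)^3\cdot\log S=o(n)$ of Theorem~\ref{thm:hitting-set-depth-$4$-general}.)
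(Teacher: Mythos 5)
Your architecture is the same as the paper's: cut the regular formula at some level, collapse everything above the cut into a single $\Sigma\Pi$ layer by iterating Lemma~\ref{lem:squeeze}, expand the bottom sub-formulas into monomials via the degree bound, and the bookkeeping $M\le S^{E_{t-1}}$, $D_t\le n/E_t$ is correct. The gap is exactly in the step you yourself flag as the main obstacle: choosing the cut level. Your downward walk is never seeded --- the bottom inequality at the natural starting level $t=d-1$ reads $E_{d-1}\ge n^{1/c}$, which need not hold --- and your escape hatch of filing the ``bottom-heavy'' case under conclusion (i) with $M=S$ is false: cutting at level $d$ yields $M\le S^{E_{d-1}}$, which can be as large as $S^{n^{1/c-o(1)}}$, exceeding both $S$ and the largest admissible bound $S^{n^{1/(c(c-1))}}$ in conclusion (ii). Concretely, take $d=3$, $c=3$, $p_1=n^{0.03}$, $p_2=n^{0.17}$, $p_3=n^{0.8}$. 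Then $e_1=0.03<1/9$ and $e_2=0.2<1/3$, so no cut level in $[d-1]$ satisfies your two conditions; the total degree is $n$, so full expansion fails; cutting at level $1$ leaves bottom degree $n^{0.97}>n^{1-(1/3)^3}$, so conclusion (i) fails; and cutting at level $3$ gives only $M\le S^{n^{0.2}}$ with $0.2>1/6=\max_{t\in[2]}\frac{1}{2}(1/3)^{3-t}$, so conclusion (ii) fails as well. Yet the theorem does hold here: cutting at level $2$ gives $M\le S^{n^{0.03}}\le S^{n^{1/18}}$ and bottom degree $n^{0.8}\le n^{8/9}$, i.e.\ conclusion (ii) with $t=1$. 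Your scheme misses this because you pair cut level $\tau$ with $\alpha=\frac{1}{c-1}(1/c)^{d-\tau}$, whereas the pair $(M,\text{bottom degree})$ that cut level $\tau$ actually delivers matches the theorem's exponent for $t=\tau-1$, i.e.\ $\alpha=\frac{1}{c-1}(1/c)^{d-\tau+1}$; your thresholds are off by one factor of $c$.

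The paper resolves the selection by thresholding the \emph{individual} fan-ins $p_i$ rather than the prefix products. After disposing of two extreme cases that land in conclusion (i) with $M\le S$ --- total degree $\prod_i p_i\le n^{1-(1/c)^d}$ (expand $f$ fully) and $p_1>n^{(1/c)^d}$ (cut just below the first product layer) --- it takes $t$ to be the first index with $p_{t+1}>n^{(1/c)^{d-t}}$ and cuts at level $t+1$. Minimality gives $p_i\le n^{(1/c)^{d+1-i}}$ for all $i\le t$, hence $E_t\le n^{\sum_{i\le t}(1/c)^{d+1-i}}<n^{\frac{1}{c-1}(1/c)^{d-t}}$, which controls $M$, while $p_{t+1}$ alone already forces the bottom degree below $n/p_{t+1}<n^{1-(1/c)^{d-t}}$; if no such $t$ exists, the geometric sum $\sum_{i=1}^{d}(1/c)^{d+1-i}<\frac{1}{c-1}\le 1-(1/c)^d$ (this is where $c\ge 3$ is used) contradicts the large-degree assumption. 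To salvage your prefix-product formulation you would need to shift your exponents accordingly and supply a genuine seeding argument; as written, the walk does not close.
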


\begin{proof}
	Recall that the size of the formula $S$ satisfies $S=(\prod_{1 \le i \le d+1} a_i) \cdot (\prod_{1 \le i \le d} p_i)$ .
	We have three cases to analyze:
	
	\paragraph{Case 1 (small total degree):} If $\prod_{i=1}^d p_i \le n^{1-(1/c)^d}$, then we can simply write the polynomial 
	$f(x_1, \ldots, x_n)$ as a sum of monomials, which would give us a multilinear $\Sigma \Pi$ formula $\Phi$ %of the form $\tfSPSP{1}$ and 
	of size
	$$ |\Phi| \le \ds\sum_{i=0}^{n^{1-(1/c)^d}} \binom{n}{i} = O(n^{n^{1-(1/c)^d}}), $$ 
	which is clearly of the form $\tfSPSP{1} \subseteq \tfSPSP{S}$ and of the required size for item (\ref{item:bounded-size}).

	\paragraph{Case 2 (large $p_1$):} If $p_1 > n^{(1/c)^d}$, then notice that the regular formula $\Psi$ can 
	be written in the form 
	$$ \sum_{i=1}^{a_1}\prod_{j=1}^{p_1} f_{i,j} , $$
	where each $f_{i,j}$ is a multilinear $(a_2, p_2, \ldots, a_d, p_d, a_{d+1})$-regular formula. Hence, each $f_{i,j}$ is a polynomial
	of degree bounded by $\prod_{i=2}^d p_i \le n/p_1 < n^{1-(1/c)^d}$. Therefore, expanding each $f_{i,j}$ into 
	a sum of monomials, we obtain a formula $\Phi$ of the form $\tfSPSP{a_1}$ and of size  
	$$ |\Phi| \le a_1 \cdot p_1\cdot \sum_{i=0}^{n^{1-(1/c)^d}} \binom{n}{i} = O(a_1 p_1 n^{n^{1-(1/c)^d}}) = 
	O(S  \cdot n^{n^{1-(1/c)^d}}) . $$  
	%Since $a_1 \le s$, these conditions again 
	This too satisfies item (\ref{item:bounded-size}).

	\paragraph{Case 3 (high degree but small $p_1$):} In this case, we can assume that $\prod_{i=1}^d p_i > n^{1-(1/c)^d}$ and that $p_1 \le n^{(1/c)^d}$. 
	It follows there exists an index 
	$t \in [d-1]$ satisfying
	$$ p_t \le n^{(1/c)^{d+1-t}} \ \text{ and } \ p_{t+1} > n^{(1/c)^{d+1-(t+1)}} = n^{(1/c)^{d-t}},$$
	since otherwise, using $c\geq 3$, we would have that 
	$$ \prod_{i=1}^d p_i \le \prod_{i=1}^d n^{(1/c)^{d+1-i}} = n^{\sum_{i=1}^d (1/c)^{d+1-i}} < n^{\frac{1}{c-1}}  < n^{1-(1/c)^d}, $$
	which contradicts the assumption on the degree.
	
	Notice that we can express $\Psi$ in the form 
	\begin{equation}\label{eq:Psi}
 \ds\sum_{i_1=1}^{a_1}\prod_{j_1=1}^{p_1} \ldots \ds\sum_{i_{t+1}=1}^{a_{t+1}}\prod_{j_{t+1}=1}^{p_{t+1}} 
	f_{i_1,\ldots, i_{t+1},j_1, \ldots , j_{t+1}}, 
	\end{equation}
	where each $f_{i_1,\ldots, i_{t+1},j_1,\ldots,j_{t+1}}$ is a multilinear $(a_{t+2}, p_{t+2}, \ldots, a_d, p_d, a_{d+1})$-regular formula. 
	We shall analyze separately each of the  $f_{i_1,\ldots , i_{t+1}, j_1 , \ldots,  j_{t+1}}$ and the $(a_1, p_1, \ldots, a_{t+1}, p_{t+1}, 1)$-regular formula ``above'' them.
	
	Notice that each $f_{i_1,\ldots, i_{t+1},j_1,\ldots,j_{t+1}}$ is a polynomial
	of degree bounded by $$\prod_{i=t+2}^d p_i < n/p_{t+1} < n^{1-(1/c)^{d-t}}.$$ 
	Therefore, when expressing each $f_{i_1,\ldots, i_{t+1},j_1,\ldots,j_{t+1}}$ as 
	a sum of monomials, its sparsity is upper bounded by
	\begin{equation}\label{eq:sparse-bottom} 
		\ds\sum_{i=0}^{n^{1-(1/c)^{d-t}}} \binom{n}{i} \le  2n^{n^{1-(1/c)^{d-t}}}. 
	\end{equation}
 
	Now, if in \eqref{eq:Psi} we replace each polynomial $f_{i_1,\ldots, i_{t+1},j_1,\ldots,j_{t+1}}$ with a new variable $y_{i_1,\ldots, i_{t+1},j_1,\ldots,j_{t+1}}$, then we get an  $(a_1, p_1, \ldots, a_{t+1}, p_{t+1}, 1)$-regular formula in the $y$ variables $$\Phi_1 = \ds\sum_{i_1=1}^{a_1}\prod_{j_1=1}^{p_1} \ldots \ds\sum_{i_{t+1}=1}^{a_{t+1}}\prod_{j_{t+1}=1}^{p_{t+1}} 	y_{i_1,\ldots, i_{t+1},j_1,\ldots,j_{t+1}}.$$
%	then we get that 	
%	as an input to the remaining formula, we will
%	obtain an $(a_1, p_1, \ldots, a_{t+1}, p_{t+1}, 1)$-formula $\Phi_1$. Therefore, 
	It is clear that $\Psi$ is the composition
	of $\Phi_1$ with the assignment $y_{i_1,\ldots, i_{t+1},j_1,\ldots,j_{t+1}}=f_{i_1,\ldots, i_{t+1},j_1,\ldots,j_{t+1}}$.
	
	By applying the Depth Reduction (Lemma~\ref{lem:squeeze}) repeatedly to $\Phi_1$, we obtain that
	$\Phi_1$ becomes a multilinear $\left( \prod_{i=1}^{t+1} a_i^{\pi_{i-1}}, \pi_{t+1}, 1 \right)$-regular
	formula $\Phi_2$, where $\pi_k = \prod_{i=1}^k p_i$, for any $1 \le k \le d$ (and $\pi_0=1$).
	
	Composing $\Phi_2$ with the assignment $y_{i_1,\ldots, i_{t+1},j_1,\ldots,j_{t+1}}=f_{i_1,\ldots, i_{t+1},j_1,\ldots,j_{t+1}}$, we obtain the following 
	depth-$4$ regular formula $\Phi$ (after some proper relabelling): 
	\begin{equation}\label{eq:red-form}
		\Phi = \ds\sum_{i=1}^{M}\prod_{j=1}^{\pi_{t+1}} f_{i,j}, \ \text{ where } \ M = \ds\prod_{i=1}^{t+1} a_i^{\pi_{i-1}}.
	\end{equation}
	Notice that, by our choice of the parameter $t$, we have 
	$$ M = \ds\prod_{i=1}^{t+1} a_i^{\pi_{i-1}} \le S^{\pi_t} = S^{\prod_{i=1}^t p_i} \le S^{\prod_{i=1}^t n^{(1/c)^{d+1-i}}} < S^{n^\alpha},
	\ \text{ where } \ \alpha = \left(\frac{1}{c}\right)^{d-t} \frac{1}{c-1} .  $$
	Since, by equation~\eqref{eq:sparse-bottom}, each $f_{i,j}$ has sparsity bounded by 
	$2n^{n^{1-(1/c)^{d-t}}} = 2n^{n^{1-(c-1)\alpha}}$,
	we have that $\Phi$ is a $\tfSPSP{M}$ formula of size bounded by:
	$$ |\Phi| \le M \cdot \pi_{t+1} \cdot 2n^{1-(c-1)\alpha} \le M \cdot n \cdot 2n^{1-(c-1)\alpha} .   $$
	This satisfies the conditions of item (\ref{item:bounded-fan-in}), and so this concludes the proof of the theorem.
\end{proof}

%%%%%%%%%%%%%%%%%%%%%%%%%%%%%%%%%%%%%%%%%%%%%%%%%%%%%%%%%
\subsection{PIT for Regular Formulas}

In this section, we construct our hitting set for regular formulas. Since the reduction done in Theorem~\ref{thm:squeeze} 
reduces a multilinear depth-$d$ regular formula to one of two types of depth-$4$ formulas,
our hitting set will be the union of two hitting sets, $\cF_d$ and $\cG_d$, each one designed to hit a specific type.

\begin{thm}[Theorem~\ref{thm:intro:hitting-set-regular}, restated]\label{thm:hitting-set-regular}
	For $d\geq 2$, let $\cC_d$ be the class of multilinear polynomials computed by 
	$(a_1, p_1, a_2, p_2, \ldots, a_d, p_d, a_{d+1})$-regular formulas of size $S \le 2^{n^\delta}$ 
	computing a multilinear polynomial $f(x_1, \ldots, x_n)$, where $\delta = \frac{1}{5^{d+1}}$. 
	Then, there exists a hitting set $\cH_d$ of size 
	$|\cH_d| = 2^{\tilde{O}(n^{1-\delta/3})}$ for $\cC_d$, that can be constructed in time $\poly(|\cH_d|)$.
\end{thm}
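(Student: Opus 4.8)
The plan is to combine the depth-reduction theorem for regular formulas (Theorem~\ref{thm:squeeze}) with the hitting set for $\SPSP$ formulas (Theorem~\ref{thm:hitting-set-depth-$4$-general}). I would instantiate Theorem~\ref{thm:squeeze} with the constant $c=5$ --- this is where the base $5$ in the exponent $1/5^{d+1}$ originates: we will need $c>4$ (so that, in the analysis below, splitting deeper can only help), and $c=5$ is the smallest integer exceeding $4$ that also satisfies the hypothesis $c\ge 3$ of that theorem. Applying it to the given $(a_1,p_1,\ldots,a_d,p_d,a_{d+1})$-regular formula $\Psi$ of size $S\le 2^{n^\delta}$ produces a multilinear $\tfSPSP{M}$ formula $\Phi$ computing $f$ that falls into one of two regimes: regime (i), with $M=S$ and $|\Phi| = O(S\cdot n^{n^{1-(1/5)^d}})$; or regime (ii), with $M=S^{n^\alpha}$ and $|\Phi|\le 2Mn\cdot n^{n^{1-4\alpha}}$, where $t\in[d-1]$ and $\alpha=\frac14(1/5)^{d-t}$. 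Since there are only $d$ possibilities in total (regime (i), or regime (ii) with one of the $d-1$ values of $t$), I can build one hitting set for each possibility and let $\cH_d$ be their union; its size is the maximum of the pieces up to a factor of $d$, and it hits $f$ because $\Psi$, being nonzero, is computed by $\Phi$ in one of these forms and the corresponding piece is a genuine hitting set for that class of $\SPSP$ formulas by Theorem~\ref{thm:hitting-set-depth-$4$-general}.

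For each regime I would check the admissibility hypothesis $(\log M)^3\cdot\log|\Phi| = o(n)$ of Theorem~\ref{thm:hitting-set-depth-$4$-general} and then read off the resulting hitting-set size $2^{\tilde{O}(n^{2/3}\cdot\log M\cdot(\log|\Phi|)^{1/3})}$. In regime (i): since $\delta<1-(1/5)^d$ we have $\log M=\log S\le n^\delta$ and $\log|\Phi|=\tilde{O}(n^{1-(1/5)^d})$, so the hypothesis reduces to $3\delta<(1/5)^d$, which holds since $(1/5)^d=5\delta$; plugging in, the exponent of the hitting-set size is $2/3+\delta+(1-(1/5)^d)/3=1-\frac{2\delta}{3}$. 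In regime (ii): since $\alpha+\delta$ is tiny compared with $1-4\alpha$ we have $\log M\le n^{\alpha+\delta}$ and $\log|\Phi|=\tilde{O}(n^{1-4\alpha})$, so the hypothesis reduces to $3\delta<\alpha$, which holds because $\alpha\ge\frac14(1/5)^{d-1}=\frac{25}{4}\delta$; plugging in, the exponent of the hitting-set size is $2/3+(\alpha+\delta)+(1-4\alpha)/3=1+\delta-\frac{\alpha}{3}$, which is largest at $t=1$ (smallest $\alpha$) and there equals $1+\delta-\frac{1}{12}(1/5)^{d-1}$. Finally one checks, using $\alpha\ge 4\delta$, that $1+\delta-\frac{\alpha}{3}\le 1-\frac{\delta}{3}$ for every admissible $t$. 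Hence every piece has size $2^{\tilde{O}(n^{1-\delta/3})}$, so $|\cH_d|=2^{\tilde{O}(n^{1-\delta/3})}$, and the construction time is polynomial in $|\cH_d|$ by the effective bounds in Theorems~\ref{thm:squeeze} and~\ref{thm:hitting-set-depth-$4$-general}.

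I expect the main obstacle to be entirely the bookkeeping of exponents: one must verify \emph{simultaneously}, for every value of $t\in[d-1]$ arising in regime (ii), both that the ``$o(n)$'' admissibility condition of Theorem~\ref{thm:hitting-set-depth-$4$-general} holds and that the worst-case exponent of the hitting-set size is at most $1-\delta/3$. After substituting $c=5$, each of these becomes an inequality between small explicit constants, but pinning down the exponent $\delta=1/5^{d+1}$ exactly (rather than off by a constant factor in the base or by a multiplicative constant) requires tracking carefully how $\log M$ grows through the repeated applications of the one-step depth-reduction Lemma~\ref{lem:squeeze} inside the proof of Theorem~\ref{thm:squeeze}, and how the $\alpha$ of regime (ii) depends on the splitting index $t$. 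The reason $c=5$ is forced is that we need $1-\frac{c-1}{3}<0$ (equivalently $c>4$), so that in regime (ii) splits with larger $\alpha$ only decrease the hitting-set exponent, while still keeping $3\delta<(c-4)\alpha$ feasible for the smallest $\alpha$ that can occur.
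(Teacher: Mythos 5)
Your proposal is correct and follows essentially the same route as the paper: apply Theorem~\ref{thm:squeeze} with $c=5$, verify the admissibility condition $(\log M)^3\log|\Phi|=o(n)$ of Theorem~\ref{thm:hitting-set-depth-$4$-general} in each regime, and check that the resulting exponents ($1-2\delta/3$ in regime (i) and $1+\delta-\alpha_t/3\le 1-\delta/3$ in regime (ii), using $\alpha_t\ge\frac{25}{4}\delta\ge 4\delta$) are all at most $1-\delta/3$, then take the union over the $d$ possibilities. Your exponent arithmetic matches the paper's, and your explicit union over the splitting indices $t$ is, if anything, slightly more careful than the paper's presentation.
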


The proof follows by combining Theorem~\ref{thm:squeeze} with the hitting set guaranteed by Theorem~\ref{thm:hitting-set-depth-$4$-general}.

\begin{proof}
	Let $f(x_1, \ldots, x_n)$ be a polynomial computed by a formula $\Psi \in \cC_d$.
	By Theorem~\ref{thm:squeeze}, with the constant $c =5$, there are two cases of the depth reduction to analyze. For each case we will give a hitting set (using Theorem~\ref{thm:hitting-set-depth-$4$-general}) and thus the union of the sets will be a hitting set for $\cC_d$.
	
		\paragraph*{Case 1:} For $M=S\leq 2^{n^\delta}$, there exists a $\tfSPSP{M}$ formula $\Phi$ of size $|\Phi| = O( S \cdot n^{n^{1-(1/5)^d}}) = 
	O( 2^{n^\delta} \cdot n^{n^{1-5\delta}})$ computing $f(x_1, \ldots, x_n)$.
	By Theorem~\ref{thm:hitting-set-depth-$4$-general}, there exists a hitting set $\cH'$ that hits all such non-zero formulas with 
	$$|\cH'| =  2^{\tilde{O}( n^{2/3} \cdot \log M \cdot (\log |\Phi|)^{1/3} )}.$$
	Observe that 
	$$(\log M)^3 \cdot \log|\Phi| = n^{3\delta}\cdot (n^\delta + n^{1-5\delta} \cdot \log n)  = n^{4\delta} + n^{1-2\delta}\log n = \tilde{O}(n^{1-2\delta}).$$
	Hence $$|\cH'| =  2^{\tilde{O}( n^{2/3} \cdot \log M \cdot (\log |\Phi|)^{1/3} )} = 2^{\tilde{O}(n^{2/3}\cdot n^{(1-2\delta)/3})}=2^{\tilde{O}(n^{1 - 2\delta/3})}.$$

	\paragraph*{Case 2:} There exists $t \in [d-1]$ such that for $ \alpha_t = \frac{1}{4} \cdot \left(\frac{1}{5} \right)^{d-t} \leq \frac{1}{20}$,   
	there exists a multilinear $\tfSPSP{M}$ formula $\Phi$ computing $f(x_1, \ldots, x_n)$, 
	where the top fan-in $ M = S^{n^{\alpha_t}} = 2^{n^{\delta + \alpha_t}}$ and the size is bounded by 
	$|\Phi| \le 2Mn \cdot n^{n^{1-4\alpha_t}}. $ Again, by Theorem~\ref{thm:hitting-set-depth-$4$-general}, there exists a hitting set $\cH''$ that hits all such non-zero formulas with 
	$$|\cH''| =  2^{\tilde{O}( n^{2/3} \cdot \log M \cdot (\log |\Phi|)^{1/3} )}.$$
	We now have that 
	$$(\log M)^3 \cdot \log|\Phi| = \tilde{O}\left( n^{3(\delta + \alpha_t)} \cdot (n^{\delta + \alpha_t} + n^{1-4\alpha_t} )\right) =  \tilde{O}\left( n^{4(\delta + \alpha_t)} + n^{1+3\delta-\alpha_t} \right) =   \tilde{O}\left( n^{1-\delta}\right),$$
	where the last equality holds as, by our choice of parameters, for all $t\in [d-1]$, $5\delta + 4\alpha_t<1$ and $4\delta < \alpha_t$. This implies that
	\[ |\cH''| =  2^{\tilde{O}( n^{2/3} \cdot \log M \cdot (\log |\Phi|)^{1/3} )} =  2^{\tilde{O}(n^{2/3} \cdot n^{(1-\delta)/3})} = 2^{\tilde{O}(n^{1-\delta/3})} . \qedhere \]
\end{proof}

We note that we did not attempt to optimize the parameters in the theorem as, using our current proof, the exponent is going to be of the form $n^{1-1/\exp(d)}$ anyway.

\section{Lower Bounds for Bounded Depth Multilinear Formulas}
\label{sec:lower-bounds}

As we noted earlier, the connection between construction of hitting sets and lower bounds for explicit polynomials is well established. 
The following theorem was proved by Heintz and Schnorr \cite{HeintzSchnorr80} and Agrawal \cite{Agrawal05}, albeit we cite only 
a special case which matches our use of it:

\begin{thm}[A special case of \cite{HeintzSchnorr80,Agrawal05}]
\label{thm:lowerbound-from-hitting}
Suppose there is a black-box deterministic PIT algorithm for a class $\cC$ of multilinear circuits, that outputs a hitting set $\cH$ of size $|\cH| = 2^{n^\alpha}<2^{n}$ and runs in time $\poly(|\cH|)$, such that $\cH$ hits circuits of size at most $2^{n^\delta}$. Then, there exists a multilinear polynomial $f(x_1,\ldots,x_n)$ such that any circuit from the class $\cC$ computing $f$ must be of size at least $2^{n^{\delta}}$, and the coefficients of $f$ can be found in time $2^{O(n)}$.
\end{thm}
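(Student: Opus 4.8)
The plan is to use the standard dimension-counting argument: a hitting set of size $2^{n^\alpha}$ imposes only $2^{n^\alpha}$ linear constraints on the coefficient vector of a multilinear polynomial, but the space of multilinear polynomials has dimension $2^n \gg 2^{n^\alpha}$, so a nonzero multilinear polynomial vanishing on the whole hitting set must exist. The key point is then to argue that this polynomial cannot be computed by a small circuit from $\cC$, simply because every nonzero polynomial computed by such a circuit is, by the hitting set property, nonzero at some point of $\cH$.

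More concretely, I would proceed as follows. First, fix the hitting set $\cH = \{\valpha_1,\ldots,\valpha_N\} \subseteq \F^n$ with $N = |\cH| = 2^{n^\alpha}$, which can be constructed in time $\poly(N) = 2^{O(n^\alpha)} \le 2^{O(n)}$ by hypothesis. Consider a generic multilinear polynomial $f(x_1,\ldots,x_n) = \sum_{S \subseteq [n]} c_S \prod_{i \in S} x_i$ with $2^n$ unknown coefficients $\{c_S\}_{S\subseteq[n]}$. The conditions $f(\valpha_j) = 0$ for $j = 1,\ldots,N$ form a homogeneous system of $N$ linear equations in the $2^n$ unknowns $c_S$. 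Since $N = 2^{n^\alpha} < 2^n$ (using $\alpha < 1$), this system is underdetermined and therefore has a nonzero solution; moreover such a solution can be found by Gaussian elimination on an $N \times 2^n$ matrix over $\F$, which takes time $\poly(N, 2^n) = 2^{O(n)}$ — this is where the coefficients of $f$ get computed within the claimed time bound. Let $f$ be the resulting nonzero multilinear polynomial; by construction $f(\valpha) = 0$ for every $\valpha \in \cH$.

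Finally, suppose for contradiction that $f$ is computed by some circuit $\Phi \in \cC$ of size at most $2^{n^\delta}$. Since $\cH$ is a hitting set for circuits in $\cC$ of size at most $2^{n^\delta}$, and since $f \not\equiv 0$, there must exist a point $\valpha \in \cH$ with $f(\valpha) \ne 0$, contradicting the defining property of $f$. Hence any circuit in $\cC$ computing $f$ has size strictly greater than $2^{n^\delta}$, which is exactly the claimed lower bound.

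The argument is essentially routine; the only mild subtleties are bookkeeping ones. One must make sure the field $\F$ (or the extension field from which the hitting-set points are drawn) is large enough and that the Gaussian elimination is performed over that field, so that "a nonzero solution exists" is literally a statement about $\F$-linear algebra — this is guaranteed by the field-size convention stated in the remark at the start of Section~\ref{sec:results}. One should also note that the polynomial $f$ produced is multilinear by construction (we only allowed multilinear monomials), so it is a legitimate hard instance for the multilinear class $\cC$. The main "obstacle," such as it is, is simply being careful that all the time bounds ($2^{O(n^\alpha)}$ to build $\cH$, $2^{O(n)}$ to solve the linear system) combine to $2^{O(n)}$, which they do since $\alpha < 1$.
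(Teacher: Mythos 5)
Your proposal is correct and is exactly the argument the paper gives: find a nonzero multilinear polynomial vanishing on all of $\cH$ by solving a homogeneous linear system in the $2^n$ monomial coefficients (solvable since $|\cH| = 2^{n^\alpha} < 2^n$), and observe that such an $f$ cannot be computed by any size-$2^{n^\delta}$ circuit in $\cC$ by the hitting-set property. The time bound via Gaussian elimination matches the paper's $2^{O(n)}$ claim as well.
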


Theorem~\ref{thm:lowerbound-from-hitting} is proved by finding a non-zero polynomial $f(x_1,\ldots,x_n)$ which vanishes on the entire 
hitting set $\cH$ of size $2^{n^\alpha}$, and then, by definition, $f$ cannot have circuits of size 
$2^{n^\delta}$. Finding $f$ amounts to finding a non-zero solution to a homogenous system of 
linear equations whose unknowns are the coefficients of the $2^{n}$ possible multilinear monomials in 
$x_1,\ldots,x_n$. As long as $2^n > |\cH| = 2^{n^\alpha}$, a non-zero solution is guaranteed to exist.

Our lower bounds now follow as a direct application of our hitting set constructions and Theorem~\ref{thm:lowerbound-from-hitting}.

\begin{proof}[Proofs of Corollaries \ref{cor:intro:lowerbound-depth-$3$}, \ref{cor:intro:lowerbound-depth-$4$} and \ref{cor:intro:lowerbound-regular}]
In light of Theorem~\ref{thm:lowerbound-from-hitting}, we only need to pick $\delta$ so that the hitting sets we constructed have size less than $2^n$. The appropriate choices, by Theorems \ref{thm:intro:hitting-set-depth-$3$}, \ref{thm:intro:hitting-set-depth-$4$} and \ref{thm:intro:hitting-set-regular}, respectively, can be seen to be $\delta=1/2 - O(\log \log n / \log n)$ (for depth-$3$), $\delta=1/4-O(\log \log n / \log n)$ (for depth-$4$) and $\delta =  \frac{1}{5^{\lfloor d/2 \rfloor+1}}  = O\left(\frac{1}{\sqrt{5}^d}\right)$
%$\delta=1/5^{d+1}$ 
(for depth-$d$ regular formulas).
\end{proof}

%%%%%%%%%%%%%%%%%%%%%%%%%%%%%%%%%%%%%%%%%%%%%%%%%%%%
%%%%%%%%%%%%%%%%%%%%%%%%%%%%%%%%%%%%%%%%%%%%%%%%%%%%
%%%%%%%%%%%%%%%%%%%%%%%%%%

\section{Conclusion and Open Questions}\label{sec:conclusion}
\label{sec:open}

We conclude this paper with some obvious open problems. First, as noted in Section~\ref{sec:results}, the lower bounds that we get from our hitting sets are not as good as the best lower bounds for these models. Can one improve our construction to yield lower bounds matching the best known lower bounds?

Currently, the size of the hitting set that we get for depth-$d$ regular multilinear formulas is roughly $\exp(n^{1-1/\exp(d)})$. Can the bound be improved to $\exp(n^{1-\Omega(1/d)})$ ? In our proof the reason for this exponential loss is that we reduce the regular formula to a $\tfSPSP{M}$ formula of size $S$ and we need $M$ and $S$ to satisfy (because of Theorem~\ref{thm:hitting-set-depth-$4$-general}) $(\log M)^3\cdot \log S = o(n)$. In particular, if $M=2^{n^{\delta}}$ and $S=2^{n^{\gamma}}$ then we require that $3\delta + \gamma <1$. Notice that, in the depth reduction theorem (Theorem~\ref{thm:squeeze}), if we start with a regular formula of size $2^{n^\delta}$ then, if we break the formula at layer $t$, we roughly get a top fan-in of $M=2^{n^\delta \cdot p_1\cdot p_2 \cdots p_t}$ and bottom sparsity of (roughly) $\exp(n^{1-p_{t+1}})$. This gives a size upper bound of (roughly) $S=2^{n^\delta \cdot p_1\cdot p_2 \cdots p_t} \cdot \exp(n^{1-p_{t+1}})$. To match the requirement $(\log M)^3\cdot \log S = o(n)$, we get that $p_{t+1}$ must be larger than $3p_1\cdots p_t$. This leads to an argument in which we require the degree of the product gates to increase exponentially. This is more or less the cause of the exponential loss in our argument. 

Finally, another natural question is to extend our argument from depth-$d$ regular multilinear formulas to arbitrary depth-$d$ multilinear formula. 

%%%%%%%%%%%%%%%%%%%%%%%%%%%%%%%%%%%%%%%%%%%%%%%%%%%%
%%%%%%%%%%%%%%%%%%%%%%%%%%%%%%%%%%%%%%%%%%%%%%%%%%%%
\section*{Acknowledgments}

The authors would like to thank Zeev Dvir and Avi Wigderson for helpful discussions during the
course of this work.

%%%%%%%%%%%%%%%%%%%%%%%%%%

\bibliographystyle{alpha}

\bibliography{bibliography}

\begin{thebibliography}{KMSV13}

\bibitem[AGKS14]{AgrawalGKS14}
M.~Agrawal, R.~Gurjar, A.~Korwar, and N.~Saxena.
\newblock Hitting-sets for {ROABP} and sum of set-multilinear circuits.
\newblock {\em Electronic Colloquium on Computational Complexity {(ECCC)}},
  21:85, 2014.

\bibitem[Agr05]{Agrawal05}
M.~Agrawal.
\newblock Proving lower bounds via pseudo-random generators.
\newblock In {\em Proceedings of the 25th FSTTCS}, volume 3821 of {\em LNCS},
  pages 92--105, 2005.

\bibitem[AKS04]{AKS04}
M.~Agrawal, N.~Kayal, and N.~Saxena.
\newblock Primes is in {P}.
\newblock {\em Annals of Mathematics}, 160(2):781--793, 2004.

\bibitem[AS08]{AlonSpencer08}
N.~Alon and J.~Spencer.
\newblock {\em {The probabilistic method}}.
\newblock J. Wiley, 3 edition, 2008.

\bibitem[ASS13]{AgrawalSS13}
M.~Agrawal, C.~Saha, and N.~Saxena.
\newblock Quasi-polynomial hitting-set for set-depth-{\(\Delta\)} formulas.
\newblock In {\em Symposium on Theory of Computing Conference, STOC'13, Palo
  Alto, CA, USA, June 1-4, 2013}, pages 321--330, 2013.

\bibitem[ASSS12]{AgrawalSSS12}
M.~Agrawal, C.~Saha, R.~Saptharishi, and N.~Saxena.
\newblock Jacobian hits circuits: hitting-sets, lower bounds for depth-d
  occur-k formulas {\&} depth-3 transcendence degree-k circuits.
\newblock In {\em STOC}, pages 599--614, 2012.

\bibitem[AV08]{AgrawalVinay08}
M.~Agrawal and V.~Vinay.
\newblock Arithmetic circuits: A chasm at depth four.
\newblock In {\em Proceedings of the 49th Annual FOCS}, pages 67--75, 2008.

\bibitem[AvMV11]{AvMV11}
M.~Anderson, D.~van Melkebeek, and I.~Volkovich.
\newblock Derandomizing polynomial identity testing for multilinear
  constant-read formulae.
\newblock In {\em Proceedings of the 26th Annual CCC}, pages 273--282, 2011.

\bibitem[DL78]{DemilloL78}
R.~A. DeMillo and R.~J. Lipton.
\newblock A probabilistic remark on algebraic program testing.
\newblock {\em Inf. Process. Lett.}, 7(4):193--195, 1978.

\bibitem[DS06]{DvirShpilka06}
Z.~Dvir and A.~Shpilka.
\newblock Locally decodable codes with 2 queries and polynomial identity
  testing for depth 3 circuits.
\newblock {\em SIAM J. on Computing}, 36(5):1404--1434, 2006.

\bibitem[DSY09]{DSY09}
Z.~Dvir, A.~Shpilka, and A.~Yehudayoff.
\newblock Hardness-randomness tradeoffs for bounded depth arithmetic circuits.
\newblock {\em SIAM J. on Computing}, 39(4):1279--1293, 2009.

\bibitem[FS12]{ForbesShpilka12}
M.~A. Forbes and A.~Shpilka.
\newblock On identity testing of tensors, low-rank recovery and compressed
  sensing.
\newblock In {\em Proceedings of the 44th annual STOC}, pages 163--172, 2012.

\bibitem[FS13]{ForbesShpilka13}
M.~A. Forbes and A.~Shpilka.
\newblock Explicit noether normalization for simultaneous conjugation via
  polynomial identity testing.
\newblock {\em Electronic Colloquium on Computational Complexity (ECCC)},
  20:33, 2013.

\bibitem[FSS14]{ForbesSS14}
M.~A. Forbes, R.~Saptharishi, and A.~Shpilka.
\newblock Hitting sets for multilinear read-once algebraic branching programs,
  in any order.
\newblock In {\em Symposium on Theory of Computing, {STOC} 2014, New York, NY,
  USA, May 31 - June 03, 2014}, pages 867--875, 2014.

\bibitem[GKKS13]{GuptaKKS13}
A.~Gupta, P.~Kamath, N.~Kayal, and R.~Saptharishi.
\newblock Arithmetic circuits: {A} chasm at depth three.
\newblock In {\em 54th Annual {IEEE} Symposium on Foundations of Computer
  Science, {FOCS} 2013}, pages 578--587, 2013.

\bibitem[HS80]{HeintzSchnorr80}
J.~Heintz and C.~P. Schnorr.
\newblock Testing polynomials which are easy to compute (extended abstract).
\newblock In {\em Proceedings of the 12th annual STOC}, pages 262--272, 1980.

\bibitem[KI03]{KabanetsImpagliazzo03}
V.~Kabanets and R.~Impagliazzo.
\newblock Derandomizing polynomial identity tests means proving circuit lower
  bounds.
\newblock In {\em Proceedings of the 35th Annual STOC}, pages 355--364, 2003.

\bibitem[KI04]{KabanetsImpagliazzo04}
V.~Kabanets and R.~Impagliazzo.
\newblock Derandomizing polynomial identity tests means proving circuit lower
  bounds.
\newblock {\em Computational Complexity}, 13(1-2):1--46, 2004.

\bibitem[KMSV13]{KarninMSV13}
Z.~S. Karnin, P.~Mukhopadhyay, A.~Shpilka, and I.~Volkovich.
\newblock Deterministic identity testing of depth-4 multilinear circuits with
  bounded top fan-in.
\newblock {\em {SIAM} J. Comput.}, 42(6):2114--2131, 2013.

\bibitem[Koi10]{Koiran10}
P.~Koiran.
\newblock Arithmetic circuits: the chasm at depth four gets wider.
\newblock {\em CoRR}, abs/1006.4700, 2010.

\bibitem[KS07]{KayalSaxena07}
N.~Kayal and N.~Saxena.
\newblock Polynomial identity testing for depth 3 circuits.
\newblock {\em Computational Complexity}, 16(2):115--138, 2007.

\bibitem[KS09]{KayalSaraf09}
N.~Kayal and S.~Saraf.
\newblock Blackbox polynomial identity testing for depth 3 circuits.
\newblock In {\em Proceedings of the 50th Annual FOCS}, pages 198--207, 2009.

\bibitem[KS11]{KarninShpilka08}
Z.~S. Karnin and A.~Shpilka.
\newblock Black box polynomial identity testing of generalized depth-3
  arithmetic circuits with bounded top fan-in.
\newblock {\em Combinatorica}, 31(3):333--364, 2011.

\bibitem[KSS14]{KayalSS14}
N.~Kayal, C.~Saha, and R.~Saptharishi.
\newblock A super-polynomial lower bound for regular arithmetic formulas.
\newblock In {\em Symposium on Theory of Computing, {STOC} 2014}, pages
  146--153, 2014.

\bibitem[Nis91]{Nisan91}
N.~Nisan.
\newblock Lower bounds for non-commutative computation.
\newblock In {\em Proceedings of the 23rd Annual STOC}, pages 410--418, 1991.

\bibitem[NW96]{NisanWigderson96}
N.~Nisan and A.~Wigderson.
\newblock Lower bound on arithmetic circuits via partial derivatives.
\newblock {\em Computational Complexity}, 6:217--234, 1996.

\bibitem[Raz06]{Raz06}
R.~Raz.
\newblock Separation of multilinear circuit and formula size.
\newblock {\em Theory of Computing}, 2(1):121--135, 2006.

\bibitem[Raz09]{Raz09a}
R.~Raz.
\newblock Multi-linear formulas for permanent and determinant are of
  super-polynomial size.
\newblock {\em J. ACM}, 56(2), 2009.

\bibitem[RS05]{RazShpilka05}
R.~Raz and A.~Shpilka.
\newblock Deterministic polynomial identity testing in non-commutative models.
\newblock {\em Computational Complexity}, 14(1):1--19, 2005.

\bibitem[RSY08]{RSY08}
R.~Raz, A.~Shpilka, and A.~Yehudayoff.
\newblock A lower bound for the size of syntactically multilinear arithmetic
  circuits.
\newblock {\em SIAM J. on Computing}, 38(4):1624--1647, 2008.

\bibitem[RY09]{RazYehudayoff09}
R.~Raz and A.~Yehudayoff.
\newblock Lower bounds and separations for constant depth multilinear circuits.
\newblock {\em Computational Complexity}, 18(2):171--207, 2009.

\bibitem[Sch80]{Schwartz80}
J.~T. Schwartz.
\newblock Fast probabilistic algorithms for verification of polynomial
  identities.
\newblock {\em J. ACM}, 27(4):701--717, 1980.

\bibitem[SS11]{SaxenaSeshadhri11}
N.~Saxena and C.~Seshadhri.
\newblock Blackbox identity testing for bounded top fanin depth-3 circuits: the
  field doesn't matter.
\newblock In {\em Proceedings of the 43rd Annual STOC}, pages 431--440, 2011.

\bibitem[SV08]{ShpilkaVolkovich08}
A.~Shpilka and I.~Volkovich.
\newblock Read-once polynomial identity testing.
\newblock In {\em Proceedings of the 40th Annual STOC}, pages 507--516, 2008.

\bibitem[SV09]{ShpilkaVolkovich09}
A.~Shpilka and I.~Volkovich.
\newblock Improved polynomial identity testing for read-once formulas.
\newblock In {\em APPROX-RANDOM}, pages 700--713, 2009.

\bibitem[SV10]{ShpilkaVolkovich10}
A.~Shpilka and I.~Volkovich.
\newblock On the relation between polynomial identity testing and finding
  variable disjoint factors.
\newblock In {\em ICALP (1)}, pages 408--419, 2010.

\bibitem[SV11]{SarafVolkovich11}
S.~Saraf and I.~Volkovich.
\newblock Black-box identity testing of depth-4 multilinear circuits.
\newblock In {\em Proceedings of the 43rd annual STOC}, pages 421--430, 2011.

\bibitem[SY10]{SY10}
A.~Shpilka and A.~Yehudayoff.
\newblock Arithmetic circuits: A survey of recent results and open questions.
\newblock {\em Foundations and Trends in Theoretical Computer Science},
  5(3-4):207--388, 2010.

\bibitem[Tav13]{Tavenas13}
S.~Tavenas.
\newblock Improved bounds for reduction to depth 4 and depth 3.
\newblock In {\em Mathematical Foundations of Computer Science 2013 - 38th
  International Symposium, {MFCS} 2013}, pages 813--824, 2013.

\bibitem[VSBR83]{ValiantSkyumBR83}
L.~G. Valiant, S.~Skyum, S.~Berkowitz, and C.~Rackoff.
\newblock Fast parallel computation of polynomials using few processors.
\newblock {\em SIAM J. on Computing}, 12(4):641--644, November 1983.

\bibitem[Zip79]{Zippel79}
R.~Zippel.
\newblock Probabilistic algorithms for sparse polynomials.
\newblock In {\em EUROSAM}, pages 216--226, 1979.

\end{thebibliography}
%%%%%%%%%%%%%%%%%%%%%%%%%%%%%%%%%%%%%%%%%%%%%%%%%%%
\newpage

\appendix

\end{document}